\documentclass[reqno]{amsart}

\usepackage{tikz}
\usetikzlibrary{shapes,arrows,patterns}
\usepackage{enumerate}
\usepackage{blkarray}
\usepackage{float}
\usepackage[a4paper,text={6.1in,8in},centering,includefoot]{geometry}
\normalsize
\usepackage{amsthm,dsfont,mathrsfs,amsmath,amssymb,amsfonts,euscript,mathtools}
\usepackage{graphicx}
\usepackage[colorlinks=true,citecolor=blue,linkcolor=red,urlcolor=blue]{hyperref}


\newtheorem{theorem}{Theorem}

\newtheorem{corollary}[theorem]{Corollary}
\newtheorem{lemma}[theorem]{Lemma}
\theoremstyle{remark}
\newtheorem{remark}[theorem]{Remark}

\theoremstyle{definition}
\newtheorem{definition}[theorem]{Definition}
\newtheorem{proposition}[theorem]{Proposition}


\begin{document}

\title[The occurrence of riddled basins and blowout bifurcations in a parametric nonlinear system]{The occurrence of riddled basins and blowout bifurcations in a parametric nonlinear system}%

\author{M. Rabiee}
\address{Department of Mathematics, Ferdowsi University of Mashhad, Mashhad, Iran.}
\email{maryam\_rabieefarahani@yahoo.com}
\author{F. H. Ghane}
\address{Department of Mathematics, Ferdowsi University of Mashhad, Mashhad, Iran.}
\email{ghane@math.um.ac.ir}
\author{M. Zaj}
\address{Department of Mathematics, Ferdowsi University of Mashhad, Mashhad, Iran.}
\email{zaj.marzie@yahoo.com}
\author{S. Karimi}
\address{Department of Mathematics, Ferdowsi University of Mashhad, Mashhad, Iran.}
\email{sohrabkarimi18@gmail.com}

 \thanks{$^*$Corresponding author}
\subjclass[2010]{37C05,37C40, 37C70, 37H15, 37E05, 37D35}
\keywords{riddled basin of attraction, blowout bifurcation, skew product, normal Lyapunov exponent}%

\begin{abstract}
In this paper, a two parameters family $F_{\beta_1,\beta_2}$ of maps of the plane living two different subspaces invariant is studied.
We observe that, our model exhibits two chaotic attractors $A_i$, $i=0,1$, lying in these invariant subspaces
and identify the parameters at which $A_i$ has a locally riddled basin of attraction or becomes a chaotic saddle.
Then, the occurrence of riddled basin in the global sense is investigated in an open region of $\beta_1\beta_2$-plane. We semi-conjugate our system to a random walk model
and define a fractal boundary which separates the basins of attraction of the two
chaotic attractors, then we describe riddled basin in detail.
We show that the model undergos a sequence of bifurcations: ``a blowout
bifurcation", ``a bifurcation to normal repulsion" and ``a bifurcation by creating a new chaotic attractor
with an intermingled basin".
Numerical simulations are presented graphically to confirm the validity of our results.
\end{abstract}
\maketitle

\section{\textbf{Introduction}}
There has been a lot of recent interest in the global dynamics of systems with multiple attractors,
with the recognition that the structure of basins of attraction may be very complicated.
The terminology of multiple attractors, allows several attractors to coexist, has appeared in \cite{Bu, Ott, Da, D}.
It is very common for dynamical systems to have more than one attractor.
Among of them, we focus on systems having multiple attractors may present basins of attraction densely blended, a phenomenon called \emph{riddling}. This means that
for every initial condition in the basin of one attractor there are arbitrary near conditions which tend to
the basin of any of the other attractors.
Ott et al. \cite{OS} introduced non-linear dynamical systems with a simple symmetry contain riddled
basins. Also, the conditions of occurrence riddled basins are defined in Alexander et al. \cite{JI} and then generalized by Ashwin et al. \cite{PJ}.

Riddling basins arise in systems that possess chaotic
dynamics in a smooth invariant manifold of lower dimension than that of the full phase space.
This complexity appears at the transition point between strong and weak stability of the invariant subspace.
When riddled basins occur, differing initial conditions of the different replicates may asymptotic to different attractors.
Hence, the task of predicting what will be the final state of the system becomes difficult.
A detailed picture of multiple attractors with riddled basins is available through works of several authors \cite{JI, PJ, LH, MA, S}.
Applications of riddling in complex dynamical systems of physical and biological interest can be found: for instance, for a forced double-well Duffing oscillator \cite{OS1, OS, EJI},
ecological population models \cite{C, RS, KG},
learning dynamical systems \cite{NU}, coupled non-linear electronic circuits \cite{ABS,HC}, among others.

Take a nonlinear dynamical system possesses a smooth invariant manifold $N$.
Suppose that the restriction of the system to $N$ has an attractor $A$, so it is stable to perturbation within $N$.
Assume the behaviour of the system near $A$ is determined by combining the dynamics on $A$ and the dynamics transverse to $N$.
In the case that $A$ is chaotic, by the global nature of the effect of perturbations transverse
to $N$, considerable complexities in dynamics is observed \cite{PJ, ABS, C, RS}.
In this situation, the local dynamic stability of the chaotic attractor $A$ may be described in terms of normal Lyapunov exponents. When the largest normal
Lyapunov exponent is negative, there is a set of positive measure which is forward asymptotic to $A$ \cite{JI, PJ}.
In general, when we discuss the occurrence of the riddled basin for the attractor $A$, it is necessary to have a dense set of points with
zero Lebesgue measure in $A$ lying in the invariant subspace which are transversely unstable,
thus it is necessary that the attractor $A$ be chaotic.
In most chaotic attractors ergodic measures are not unique; for instance, they may exhibit dirac ergodic measures whose
support are periodic orbits. Each ergodic measure carries its own Lyapunov exponents, so the
stability in transverse directions can be considered independently for every ergodic measure supported
in that attractor. For example, two different periodic orbits in the attractor may have normal exponents of
different signs. If there exists a natural measure on $A$, then Lebesgue-almost all points
have corresponding normal exponents and manifolds, but there can still be a dense set in $A$
with the opposite behaviour \cite{PJ}.

When the full space contains two chaotic attractors lying in different invariant subspaces, the system presents a
complex fractal boundary between the initial conditions leading to each of the two attractors.
However, in a riddled basin, small variations in initial conditions induce a switch between the different asymptotic attractors but
fractal boundary causes to predict, from a given initial condition, what trajectory in phase-space the system will follow.

Ott et al. \cite{EJI} observed the occurrence of riddling basin for a certain nonlinear model of point particle motion subject to friction
and periodic forcing in a two-dimensional potential and tested this observation numerically.
However, they verified their results theoretically by calculations leading to a simple piecewise-linear model.

In this article, we examine the behavior of a two parameter family of planar systems $F_{\beta_1,\beta_2}$.
In our setting, each system $F_{\beta_1,\beta_2}$ exhibits two invariant subspaces $N_i$, $i=1,2$, having chaotic attractors $A_i$.
We demonstrate the emergence conditions of locally riddled basin, chaotic saddle and riddled basin in the global sense.
The blowout bifurcations of chaotic attractors in these invariant subspaces are illustrated.
We give a detailed analysis to occurrence of these phenomena.
It can be done by semi-conjugating the system to a random walk model and defining a fractal boundary which separates the basins of attraction of two chaotic attractors.
Using this approach, we investigate the system at riddled basin and blowout in detail.

The setting of this paper is
a family $\mathcal{F}$ of two parametric skew product maps of the form
\begin{align}\label{ss}
F_{\beta_1,\beta_2}: \mathbb{I}\times \mathbb{I} \to \mathbb{I}\times \mathbb{I}, \ F_{\beta_1,\beta_2}(x,y):=(f(x),g_{\beta_1,\beta_2}(x,y)),
\end{align}
where $\mathbb{I}$ is the unit interval $[0,1]$, $f$ is an expanding Markov map given by

\begin{align}\label{base}
f(x) =\left\{\begin{array}{cc}
2x & $ for$ \quad0\leq x\leq 1/2,\\
2x-1 & $ for$ \quad 1/2< x\leq1
\end{array}\right.
\end{align}
and
\begin{align}\label{fiber}
g_{\beta_1,\beta_2}(x,y) :=\left\{\begin{array}{cc}
g_{1,\beta_1}(y) & $ for$ \quad0\leq x\leq 1/2,\\
 g_{2,\beta_2}(y) & $ for$ \quad 1/2< x\leq1.
\end{array}\right.
\end{align}

We assume that the $C^2$ diffeomorphisms $g_{i,\beta_i}:\mathbb{I} \to \mathbb{I}$, $i=1, 2$, fulfill the following conditions:
\begin{enumerate}
  \item [(I1)] $g_{i,\beta_i}(0)=0, \quad g_{i,\beta_i}(1)=1, \quad g_{i,\beta_i}(\beta_i)=\beta_i, \quad \text{for} \ i=1, 2;$
  \item [(I2)]
   $g_{1,\beta_1}(y)<y \quad \text{for} \ y< \beta_1, \quad  g_{1,\beta_1}(y)>y \quad \text{for} \ y> \beta_1;$
    \item [(I3)]
  $  g_{2,\beta_2}(y)>y \quad \text{for} \ y< \beta_2, \quad  g_{2,\beta_2}(y)<y  \quad \text{for} \ y> \beta_2;$
  \end{enumerate}

In (\ref{ss}), the subspaces
\begin{equation}\label{subspaces}
 N_{0}:=\mathbb{I} \times \{0\}, \quad N_{1}:=\mathbb{I} \times \{1\}
\end{equation}
play the role of the $F_{\beta_1,\beta_2}$-invariant manifolds with chaotic dynamics inside, for each $\beta_1, \beta_2 \in (0,1)$.
Our objective is to study some types of bifurcations by varying the parameters $\beta_i$ and characterize the
different possible dynamics.

Particular example is given by
\begin{align}\label{fiber1}
g_{\beta_1,\beta_2}(x,y) :=\left\{\begin{array}{cc}
g_{1,\beta_1}(y)=y+y(1-y)(y-\beta_1) & $ for$ \quad0\leq x\leq 1/2,\\
 g_{2,\beta_2}(y)=y-y(1-y)(y-\beta_2) & $ for$ \quad 1/2< x\leq1.
\end{array}\right.
\end{align}

The parameters $\beta_i$, $i=1,2$, vary the transverse dynamics without changing the dynamics on the invariant
subspaces $N_0$ and $N_1$.
We will show that, for some values of $\beta_i$, the parametric family $F_{\beta_1,\beta_2}$ exhibits two attractors lying in invariant subspaces $N_i$ with a qualitative dynamics which depends on
the initial conditions.
Attractors in our model, for some parameters values, exhibit a complex attracting basin structure
that is riddled by holes. This phenomenon produces an unpredictability qualitatively greater than the traditional sensitive
dependence on initial conditions within a single chaotic attractor.
The dynamics of the system is described by two Lyapunov exponents. The first one is
 the parallel Lyapunov exponent which describes the evolution on the invariant subspaces and must be
positive for emergence of riddled basins. The second is the normal Lyapunov exponent that characterizes
evolution transverse to the subspaces \cite{PJ, C, RS}.

In this paper, we estimate the range of values of
parameters $\beta_i$, $i=1,2$, such that the attractors $A_i$ have a locally riddled basin, riddled basin in the global sense or becomes
a chaotic saddle and provide rigorous analysis for these complex behaviors. Also, we investigate that, when $\beta_1=\beta_2$, a new chaotic attractor is born
 and its basin is intermingled with the basins of both chaotic attractors $A_i$, $i=1,2$.
We show that by varying the parameters $\beta_i$, we are forced to undergo a sequence of bifurcations:
 a ``blowout bifurcation", a ``bifurcation to normal repulsion" and ``a bifurcation by creating a new chaotic attractor with an intermingled basin".

Here, we will show that by varying the parameters $\beta_i$, it is possible one of the chaotic sets in the invariant subspaces looses the stability when the parameters pass
 through critical values.
This happens if the normal Lyapunov exponent of one of the chaotic attractors crosses zero at these critical values.
If the normal Lyapunov exponent is negative, there is a set of positive measure which is forward asymptotic to that attractor. However, there can still
be an infinite set of trajectories in the neighborhood of the attractor that are repelled from it.
In particular, if the normal Lyapunov exponent is small and negative, the blowout bifurcation occurs and the riddling basin can be observed near
 the bifurcation point.

This paper is organized as follows. In Section 2, we describe precisely the notions and terminology used in this paper. In Section 3, we concentrate on studying the two parameters family $F_{\beta_1,\beta_2}$.
Using the results of \cite{PJ}, we investigate the occurrence of locally riddled basins and chaotic saddles for some values of parameters in an open region of $\beta_1\beta_2$-plane.
In Section 4, we introduce a random walk model which is semi-conjugate to the skew product system $F_{\beta_1,\beta_2}$. This allows us to define a fractal boundary between the initial conditions leading
to each of the two attractors $A_0$ and $A_1$. In Section 5, we demonstrate the emergence conditions of riddled basins in the global sense and indicate
 the parameters values for which the family undergos a sequence of bifurcations: a blowout bifurcation
and a bifurcation by creating a new chaotic attractor with intermingled basin.
 \section{\textbf{Terminology }}
In this section, we introduce the concepts and the notations which are basic in this paper.
\subsection{ Attractors and riddled basins}
Let $M$ be a compact connected smooth Riemannian
manifold and $m$ denotes the normalized Lebesgue measure. First we recall some classical
definitions related to attractors.

Let $F : M \to M$ be a continuous map and  $A\subset M$ is a compact $F$-invariant set ( i.e. $F(A) = A$). We say $A$  is \emph{transitive} if there exists $x\in A$ such that $\omega(x)=A$, where $\omega(x)$ is the set of limit points of the orbit $\{F^{n}(x)\}_{n\geq0}$.
 \emph{The basin of attraction} of $A$, we denote it by $\mathcal{B}(A)$, is the set of points whose $\omega$-limit set is
contained in $A$.
For non-empty $A$ the basin $\mathcal{B}(A)$ is always non-empty because it includes $A$. For $A$ to be an attractor, we require that $\mathcal{B}(A)$ is large in the appropriate sense.
The compact invariant set $A$ is called an \emph{asymptotically stable attractor} if it is Lyapunov stable and the
basin of attraction $\mathcal{B}(A)$ contains a neighbourhood of $A$.

Many variants for the definition of an attractor can be found the literature, see \cite{Mi} for a
discussion. In a weaker form \cite{Mi}, we say that a compact $F$-invariant set $A$ is an attractor in Milnor sense for $F$ if the basin of attraction $\mathcal{B}(A)$ has
positive Lebesgue measure.
To be more precise, we say that
A is a \emph{Milnor attractor} if $\mathcal{B}(A)$ has non-zero Lebesgue measure and there is no compact proper subset $A^{\prime}$ of $A$ whose basin coincides with $\mathcal{B}(A)$ up to a set of zero measure.
Melbourne introduced a stronger form of a Milnor attractor \cite{Mel}.
$A$ is called an \emph{essential attractor} if
\begin{equation*}
\lim_{\delta \to 0}\frac{m(B_\delta(A) \cap \mathcal{B}(A))}{m(B_\delta(A))}=1,
\end{equation*}
where $B_\delta(A)$ is a $\delta$-neighbourhood of $A$ in $M$.

Here, we deal with chaotic attractors.
A compact $F$-invariant set $A$ is a \emph{chaotic attractor} if $A$ is a transitive Milnor attractor
and supports an ergodic measure $\mu$ but is not uniquely ergodic.
In particular, at least one of the Lyapunov exponents (with respect to $\mu$) is positive.

Some of dynamical systems having chaotic attractors with densely intertwined basins of attraction, which we call it \emph{riddled basin}.
Riddled basins were introduced in 1992 by \cite{JI}.
In this case, a basin is riddled with holes (in a measure theoretical sense) of another basin.
To be more precise,
the basin of attraction $\mathcal{B}(A)$ of an attractor $A$ is \emph{riddled } if its complement $\mathcal{B}(A)^c$ intersects every disk in a set of positive measure.

This concept is generalized to the ``locally riddled basin". It considers the case where the basin of $A$ is open but local
normal unstable manifolds exist in a dense set in $A$. Precisely, a Milnor attractor $A$ has a \emph{locally riddled basin} if there exists a
neighbourhood $U$ of $A$ such that, for all $x \in A$ and $\varepsilon > 0$
\begin{equation}\label{l-riddel}
  m(B_{\varepsilon}(x)  \cap (\bigcap_{n\geq 0}F^{-n}(U))^c) >0.
\end{equation}

If there is another Milnor attractor $A^{\prime}$ such that $\mathcal{B}(A)^{c}$ in the definition of locally riddled basin may be replaced with $\mathcal{B}(A^{\prime})$, then we say that the basin of $A$ is riddled with the basin of $A^{\prime}$.
If $\mathcal{B}(A)$ and $\mathcal{B}(A^{\prime})$ are riddled with each other, we say that they are \emph{intermingled}.

The corresponding concepts can be defined for
repelling sets. An invariant transitive set $A$ is a
\emph{chaotic saddle} if there exists a neighborhood $U$ of $A$ such that $\mathcal{B}(A)\cap U \neq \emptyset$ but $m(\mathcal{B}(A)) = 0$.

Consider the case that the invariant set $A$ is contained in an invariant $n$-dimensional submanifold
$N$ of $\mathbb{R}^m$, with $n < m$. We say that $A$ is a \emph{normally repelling chaotic saddle} if $\mathcal{B}(A) \neq A$
and $\mathcal{B}(A) \subset N$. Note that in this case, however, $A$ is an attractor in the invariant subspace, but all
points not lying on this subspace eventually leave a neighborhood of $A$.
\subsection{Lyapunov exponents}
Assume $F$ is a smooth map defined on a smooth manifold $M$ and let $N \subset M$ be an $n$-dimensional embedded submanifold and forward invariant by $F$, with
$n < m$. Hence, for $x \in N$, one has that $d_xF(T_xN) \subset T_xF(N)$.
 We consider the restriction of $F$ to $N$, denoted by $F_{|N}$.
 Moreover, we assume that
$A$ is a chaotic attractor for $F$. We denote by $\mathcal{M}_F(A)$ and $\mathcal{E}_F(A)$ the sets of invariant probability
measures and ergodic measures supported in $A$, respectively. It is known that both $\mathcal{M}_F(A)$ and $\mathcal{E}_F(A)$
are non-empty (see \cite{W}).

For a vector $v \neq 0$ with base point $x$, \emph{the Lyapunov exponent} $\lambda(x,v)$ at the point $x$ in the direction
of $v$ is defined to be
\begin{equation}\label{11}
  \lambda(x,v)=\lim_{n \to \infty}\frac{1}{n}\log \|d_{x} F^n(v)\|_{T_{F^n(x)}M}
\end{equation}
whenever the limit exists.
Here, we have two kind of Lyapunov exponents for the chaotic invariant set $A$: the \emph{parallel Lyapunov exponents} which indicate the exponential rate of stretching on $A$ when $F$ is restricted to $N$
and the \emph{normal Lyapunov exponents} that present the exponential rate of expansion on $A$ in the normal direction denoted by $\lambda_{\parallel}$ and $\lambda_{\perp}$, respectively.
 Precisely, we define these Lyapunov exponents as follows.
Since $N$ is an embedded submanifold, we can take a smooth splitting of the tangent bundle $T M $ in a neighbourhood
of $N$ of the form $T_x M=T_x N\oplus (T_x N)^{\bot}$, when $x \in N$.
To simplify the notation, we write $TM_n := T_{F^n(x)}M$.
\begin{definition}\label{L1}
Given $x \in A$; $v \in T_{x}M=T_{x}N \oplus T_{x}N^{\perp}$, we define the \emph{parallel Lyapunov exponent} at
$x$ in the direction of $v$ to be
\begin{eqnarray}\label{22}
\lambda_{\parallel}(x,v) =\lim_{n\rightarrow \infty} \dfrac{1}{n} \ln\parallel \pi_{(T N_{n})} \circ d_{x}F^{(n)} \circ \pi_{T N_{0}} (v)  \parallel_{TM },
\end{eqnarray}
where $\pi_V$ is the orthogonal projection onto a subspace $V$.
Similarly, we define the \emph{normal Lyapunov exponent} at $x$ in the direction of $v$ to be
\begin{eqnarray}\label{33}
\lambda_{\perp}(x,v) =\lim_{n\rightarrow \infty}  \dfrac{1}{n} \ln\parallel \pi_{(T N_{n})^{\perp}} \circ d_{x}F^{(n) }\circ \pi_{(T N_{0})^{\perp}} (v)  \parallel_{TM_{n} }.
\end{eqnarray}
\end{definition}
Here, we are interested in Sinai-Ruelle-Bowen (or SRB) measures which are a special type of invariant measures, see  \cite{MA}.

Let $A$ be an asymptotically stable attractor under $F_{|N}$.
An invariant ergodic probability measure $\mu$ is called an $SRB$ measure for $A$ if its support is
$A$ and has absolutely continuous conditional measures on unstable manifolds
(with respect to the Riemannian measure).

An attractor $A$ is an \emph{SRB-attractor} if it supports an SRB measure.
Since $A$ is an asymptotically stable attractor under $F_{|N}$, hence, it is the closure of the union of unstable
manifolds on $N$. Note that the existence of an SRB measure supported on $A$
implies the absolute continuity of the stable foliation of $A$, see \cite{Pugh}.
By a result from \cite{Pugh}, we have the following:

Assume $\mu$ is an SRB measure for $F_{|N}$. Given a neighborhood $U$ of $A$ there is a set $B(\mu) \subset U$, we call it the \emph{basin} of $\mu$,
with positive Lebesgue measure such that for all $x \in B(\mu)$
and all continuous functions $\phi :N \to \mathbb{R}$ one has that
\begin{equation*}
  \lim_{n \to \infty}\frac{1}{n}\sum_{i=0}^{n-1} \phi(F^i|_N(x))=\int_A \phi d\mu.
\end{equation*}
\begin{remark}
Take an open set $D$ with $\mu_{SRB}(D)=\mu_{SRB}(\text{Cl}(D))$, then
\begin{equation*}
 \lim_{n \to \infty}\frac{1}{n}\sum_{i=0}^{n-1} \chi_D(F^i|_N(x))= \mu_{SRB}(D),
\end{equation*}
 for a.e. $x$ in the basin $B(\mu)$.
 \end{remark}
Given an ergodic invariant probability measure $\mu \in \mathcal{E}_{F_{|N}}(A)$, the normal Lyapunov
exponents $\lambda_{\bot}^1(\mu)< \cdots <\lambda_{\bot}^s(\mu)$
exists and are constants in a set $B_\mu$ of full $\mu$-measure.
We define
\begin{equation}\label{mm}
  \lambda_{min}:=\inf \bigcup_{\mu \in \mathcal{E}_{F_{|N}}(A)}\{\lambda_{\bot}^i(\mu)\}, \ \Lambda_{max}:=\sup \bigcup_{\mu \in \mathcal{E}_{F_{|N}}(A)}\{\lambda_{\bot}^i(\mu)\}.
\end{equation}
Let $\mu$ be an $F$-invariant ergodic probability measure supported in $A$, with
normal Lyapunov exponents $\lambda_{\bot}^1(\mu)< \cdots <\lambda_{\bot}^s(\mu)$.
The \emph{normal stability index} $\Lambda_\mu$ of $\mu$ is
\begin{equation}\label{in}
 \Lambda_\mu:=\lambda_{\bot}^s(\mu).
\end{equation}
 We recall the next result from \cite{JI}.
\begin{theorem}\label{thm11}
Assume $A$ is an SRB attractor for $F_{|N}$ with $\Lambda_{SRB} < 0$, where $\Lambda_{SRB}$ is defined by (\ref{in}) for SRB measure $\mu_{SRB}$. Then $m(\mathcal{B}(A)) > 0$.
Furthermore, $A$ is an essential attractor provided that $A$ is either uniformly hyperbolic or $\mu_{SRB}$ is absolutely continuous with respect
to Riemannian measure on $N$.
\end{theorem}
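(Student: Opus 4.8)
The plan is to deduce positivity of $m(\mathcal{B}(A))$ from the negativity of \emph{all} normal Lyapunov exponents of $\mu_{SRB}$, which forces exponential contraction in every direction transverse to $N$, and then to thicken the SRB basin sitting inside $N$ into a positive-measure subset of $M$ using transverse local stable manifolds together with the absolute continuity already guaranteed by the SRB hypothesis (via \cite{Pugh}). First I would record that, by (\ref{in}), $\Lambda_{SRB}=\lambda_{\bot}^s(\mu_{SRB})$ is the \emph{largest} normal exponent, so $\Lambda_{SRB}<0$ makes every normal Lyapunov exponent of $\mu_{SRB}$ strictly negative. Applying the Oseledets multiplicative ergodic theorem to the normal cocycle governing the $(T_xN)^{\perp}$ direction, I obtain for $\mu_{SRB}$-a.e.\ $x\in A$ an Oseledets splitting whose normal part contracts at the exponential rate $\Lambda_{SRB}<0$. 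By the Pesin stable manifold theorem in the non-uniformly hyperbolic setting, each such $x$ carries a local stable manifold $W^s_{\mathrm{loc}}(x)$ tangent to the full contracting subspace, which in particular contains the normal fibre $(T_xN)^{\perp}$; every point of $W^s_{\mathrm{loc}}(x)$ is forward asymptotic to the orbit of $x$ and hence lies in $\mathcal{B}(A)$.

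Next I would pass to a Pesin regular block $\Gamma_\ell\subset A$ on which the contraction estimates and the radii $r(x)$ of the stable manifolds are uniform, $r(x)\geq r_\ell>0$, with $\mu_{SRB}(\Gamma_\ell)>0$ for $\ell$ large. Intersecting $\Gamma_\ell$ with the SRB basin $B(\mu_{SRB})\subset N$, which has positive Lebesgue measure on $N$ by the result quoted from \cite{Pugh}, still leaves a set of positive measure in $N$. The union $\bigcup_{x}W^s_{\mathrm{loc}}(x)$ over this set fibres over a positive-measure subset of $N$ with transverse fibres of uniform positive size; since the existence of the SRB measure makes the stable lamination absolutely continuous (its holonomy preserves null sets), Fubini's theorem yields that this union has positive $m$-measure in $M$. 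As it is contained in $\mathcal{B}(A)$, this establishes $m(\mathcal{B}(A))>0$.

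For the essential-attractor claim I would treat the two hypotheses in parallel and show that the transverse stable manifolds fill $B_\delta(A)$ up to density $1$. In the uniformly hyperbolic case the stable-manifold radii are bounded below uniformly, the stable foliation is absolutely continuous by classical theory, and the SRB basin is Lebesgue-full in a neighbourhood of $A$ inside $N$; consequently the family $\{W^s_{\mathrm{loc}}(x)\}$ covers a full-density portion of $B_\delta(A)$ as $\delta\to0$. In the case where $\mu_{SRB}$ is absolutely continuous with respect to the Riemannian measure on $N$, every $\mu_{SRB}$-full-measure statement becomes a Lebesgue-full-measure statement on $N$; thus for Lebesgue-a.e.\ $x\in N$ near $A$ one has $r(x)>0$, and the nested sets $\{x:r(x)\geq\delta\}$ exhaust $N$ up to null sets as $\delta\to0$. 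Applying Fubini to $B_\delta(A)$ and letting $\delta\to0$ then gives
\begin{equation*}
\lim_{\delta\to0}\frac{m(B_\delta(A)\cap\mathcal{B}(A))}{m(B_\delta(A))}=1,
\end{equation*}
which is the definition of an essential attractor.

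The hard part will be the non-uniformly hyperbolic bookkeeping in the general existence statement: controlling the size of the transverse local stable manifolds when the normal exponents are negative only $\mu_{SRB}$-almost everywhere, and in particular establishing the absolute continuity of the transverse stable lamination needed to transfer positive measure on $N$ into positive $m$-measure on $M$. This is precisely where the SRB hypothesis is essential, since it is what \cite{Pugh} converts into absolute continuity of the stable foliation, and where the two extra assumptions in the second statement buy the uniformity required to strengthen ``positive measure'' to ``density one''.
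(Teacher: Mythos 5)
The paper does not actually prove this statement: Theorem \ref{thm11} is quoted from \cite{JI} and used as a black box, so there is no in-paper argument to measure your proposal against. Your sketch follows the standard route from the literature (\cite{JI}, \cite{PJ}, \cite{Pugh}): all normal exponents of $\mu_{SRB}$ are bounded above by $\Lambda_{SRB}<0$, Pesin theory supplies transverse local stable manifolds of uniform size on a regular block, and absolute continuity plus Fubini converts this into positive $m$-measure of $\mathcal{B}(A)$. The architecture is the right one, and in the situation actually exploited in this paper ($N$ one-dimensional, $\mu_{SRB}$ equivalent to Lebesgue on $N$) your argument closes.

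There is, however, one concrete gap in the general case. Your positive-measure base for the Fubini step is $\Gamma_\ell\cap B(\mu_{SRB})$ with $\Gamma_\ell\subset A$, and you assert this ``still leaves a set of positive measure in $N$''. It has positive $\mu_{SRB}$-measure, but $A$ itself may have zero Riemannian measure in $N$ (think of a uniformly hyperbolic attractor properly contained in a higher-dimensional $N$), in which case $\Gamma_\ell\cap B(\mu_{SRB})$ is Lebesgue-null in $N$ and fibring transverse disks over it yields an $m$-null set. The SRB hypothesis has to be used differently: either take a positive-Lebesgue-measure subset of a single unstable leaf (supplied by the absolutely continuous conditionals of $\mu_{SRB}$ on unstable manifolds), attach the transverse stable disks there, and invoke absolute continuity of the stable lamination to get positive $m$-measure, as in \cite{Pugh}; or, as in \cite{JI} and \cite{PJ}, work with Lebesgue-typical points of the basin $B(\mu_{SRB})\subset N$ off the attractor, whose Birkhoff averages of the logarithm of the normal derivative converge to $\Lambda_{SRB}<0$, and build the attracted transverse ``stalactites'' over those. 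The same repair is needed in your density-one argument for the essential-attractor claim; there the two extra hypotheses (uniform hyperbolicity, or $\mu_{SRB}$ absolutely continuous with respect to Riemannian measure on $N$) are precisely what lets one attach uniformly sized transverse disks to a full-density subset of $B_\delta(A)\cap N$.
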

\subsection{Blowout bifurcations}
Let $M$ be a smooth Riemannian manifold and $N \subset M$ be an
$n$-dimensional embedded submanifold and forward invariant by $F$, with $n < m$.
Assume that there is a parameter $\beta$ that varies the transverse dynamics without changing the dynamics on the invariant subspace $N$.
We call the parameter $\beta$ a \emph{normal parameter} \cite{PJ}.
Moreover, we assume that the normal Lyapunov exponents
vary continuously with $\beta$ and that $A$ is an asymptotically
stable attractor for the restriction map $F_{|N}$.

Let $A$ be a chaotic attractor that supports an SRB measure $\mu_{SRB}$. Then, the sign of $\Lambda_{SRB}$ determines whether $A$ attracts
or repels infinitesimal perturbations in the direction transverse to $N$.
If $\Lambda_{SRB}<0$, $A$ attracts trajectories transversely in
the phase space and hence, $A$ is also an
attractor of the whole phase space. When $\Lambda_{SRB}>0$,
trajectories in the vicinity of $A$ are repelled away from it. That is, $A$ is transversely unstable
and is not an attractor of the whole phase space.
Thus a bifurcation occurs when $\Lambda_{SRB}$ crosses zero, the so-called \emph{blowout bifurcation}.

Blowout bifurcations are classified as either hysteretic (subcritical) or non-hysteretic (supercritical).
In a hysteretic blowout, riddled basins before the blowout give rise to a hard loss of stability \cite{AP, AS}, after blowout, almost all points near the invariant subspace eventually move
away, never to return. However, a non-hysteretic blowout giving a soft loss of stability to an on-off intermittent attractor \cite{AP, NE}.

It was shown that \cite{PJ} the occurrence of ``locally riddled basins" is exhibited near the blowout bifurcation.
\subsection{Step skew products}
Let $\Sigma_2^+=\{1,2\}^{\mathbb{N}}$ and equip $\Sigma_2^+$ with the topology generated by the base of cylinder sets
\begin{equation*}
  [\alpha_0, \cdots, \alpha_{n-1}]=\{\omega \in \Sigma_2^+: \omega_j=\alpha_j, \ \text{for} \ j=0, \cdots, n-1\}.
\end{equation*}
Let $\sigma: \Sigma_2^+ \to \Sigma_2^+$ be the (left) shift operator defined by $(\sigma \omega)_i=\omega_{i+1}$ for $\omega = (\omega_i)_0^\infty$.

Here, we take the Bernoulli measure $\mathbb{P}^+$ on $\Sigma_2^+$ where the symbols $1, 2$ have probability $p_1= p_2=1/2$.
Indeed, the Bernoulli measure $\mathbb{P}^+$ on $\Sigma_2^+$ is determined by its values on cylinders;
\begin{equation}\label{Ber}
 \mathbb{P}^+([\alpha_0, \cdots, \alpha_{n-1}])=\prod_{j=0}^{n-1}p_{\alpha_j}.
\end{equation}

Let $G^+$ be a step skew product map with the fiber maps $g_1$ and $g_2$ defined by
\begin{equation}\label{s0}
 G^+ :\Sigma_2^+ \times \mathbb{I} \to \Sigma_2^+ \times \mathbb{I}, \quad (\omega,x)\mapsto (\sigma \omega, g_{\omega_0}(x)).
\end{equation}

Denote by $\mathcal{B}$ the Borel sigma-algebra on $\mathbb{I}$. For a skew product $G^+$ with fiber maps $g_1$ and $g_2$, a measure $m$ on $\mathbb{I}$ and any
$\mathcal{B}$-measurable set $A$, we denote the push-forward measure of $m$ by $g_i m$, $i=1,2$, in which
\begin{equation*}
  g_i m(A)=m(g_i^{-1}(A)).
\end{equation*}
We recall that a probability measure $m$ on the interval $\mathbb{I}$ for $G^+$ is \emph{stationary} if
it satisfies
\begin{equation*}
  m=\sum_{i=1}^2 p_i g_i m.
\end{equation*}

The natural extension of $G^+$ is obtained when the shift acts on two sided time $\mathbb{Z}$; this yields a
skew product system $G : \Sigma_2 \times \mathbb{I} \to \Sigma_2 \times \mathbb{I}$ with $\Sigma_2=\{1,2\}^{\mathbb{Z}}$ and given by the same
expression
\begin{equation}\label{s1}
 G(\omega,y)=(\sigma \omega, g_{\omega_0}(y)).
\end{equation}
Write $\pi:\Sigma_2 \to \Sigma_2^+$ for the natural projection $(\omega_n)_{-\infty}^\infty \mapsto (\omega_n)_{0}^\infty$.
The Borel sigma-algebra $\mathcal{F}^+$ on $\Sigma_2^+$ yields a sigma-algebra $\mathcal{F} = \pi^{-1}\mathcal{F}^+$ on $\Sigma_2$.
Write $\mathbb{P}$ for the Bernoulli measure on $\Sigma_2$, defined analogues to $\mathbb{P}^+$ (see (\ref{Ber})).

The invariant measure $\mu^+=\mathbb{P}^+ \times m$ for $G^+$ gives rise to an invariant measure $\mu$ for $G$, with marginal $\mathbb{P}$.
Invariant measures for $G^+$ with marginal $\mathbb{P}^+$ and invariant measures for $G$ with
marginal $\mathbb{P}$ are in one to one relationship, see \cite{Ar}.

A measure $\mu$ on $\Sigma_2 \times \mathbb{I}$ with marginal $\mathbb{P}$ has conditional measures $\mu_{\omega}$ on the fibers $\{\omega\}\times \mathbb{I}$
such that
\begin{equation}\label{co1}
  \mu(A)=\int_{\Sigma_2}\mu_{\omega}A_{\omega}d \mathbb{P}(\omega)
\end{equation}
for each measurable set $A$, where $A_{\omega}=A \cap (\{\omega\}\times \mathbb{I})$.
For $\mathbb{P}$-almost all $\omega$, the conditional measures are given as the weak star limit
\begin{equation}\label{co2}
  \mu_\omega=\lim_{n \to \infty}g_{\sigma^{-n}\omega}^n m
\end{equation}
of push-forwards of the stationary measure.


\section{\textbf{The occurrence of locally riddled basin}}
In this section, we concentrate on studying the two parameters family $F_{\beta_1,\beta_2}: \mathbb{I} \times \mathbb{I} \to \mathbb{I} \times \mathbb{I}$ of skew product maps as defined in (\ref{ss}).
By varying the parameters $\beta_1$ and $\beta_2$, we will investigate the occurrence of locally riddled basins for the family $F_{\beta_1,\beta_2}$.

We recall a normal parameter of the system—one that preserves the dynamics on the
invariant submanifold but varies it in the rest of the phase space, first introduced in \cite{PJ}.
We observe that the parameters $\beta_i$ vary the transverse
dynamics without changing the dynamics on the invariant
subspaces $N_0$ or $N_1$, so, they are normal parameters.

In our model, the expanding map $f$ exhibits a chaotic attractor which supports an absolutely continuous invariant ergodic measure (a.c.i.m) whose density is bounded and bounded away from zero (see \cite{RL}).
In particular, this measure is absolutely continuous with respect to Lebesgue.
By this fact and invariance of the subspaces $N_i$, the restriction of $F_{\beta_1,\beta_2}$ to these invariant subspaces possess chaotic attractors $A_i$, $i=0,1$, with the basin of attraction $\mathcal{B}(A_i)$.
In particular, these attractors are $SRB$ attractors.

Note that, in our context, the normal dynamics is continuously dependent on normal parameters $\beta_i$.
Also, the invariant subspaces have codimension 1 in the phase space $\mathbb{I} \times \mathbb{I}$ and there is only one normal direction.
Hence, we can discuss the transitions between the invariant sets $A_i$ being
 a locally riddled basin attractor and a chaotic saddle.

By Theorem \ref{thm11}, since $\mu_{SRB}$ is absolutely continuous with respect
to an Riemannian measure on the subspace $N_i$, $i=1,2$, the subset $A_i$ is an essential attractor if the normal Lyapunov exponent is negative, see also \cite{JI}.

In general, for the occurrence of a riddled basin, we require that the parallel Lyapunov
exponent $\lambda_{\parallel}$ is positive but the maximal normal Lyapunov exponent $\Lambda_{SRB}$ is slightly negative.

First, we describe the dynamics of the particular skew product $F_{\beta_1,\beta_2}$ whose fiber maps $g_{i,\beta_1,\beta_2}$, $i=1,2$, given by (\ref{fiber1}).
For almost every point in $A_i$ the parallel Lyapunov exponent for the dynamics
on $N_i$ is
\begin{align}
L_{\Vert}=1/2 \ln 2+1/2 \ln 2=\ln 2.
\end{align}
In particular, it is positive and does not depend on $\beta_i$, $i=1,2$.
By a straightforward computations, the normal Lyapunov exponents at typical points in $A_i$ are given by
\begin{align}\label{n0}
L_{\perp, \beta_1,\beta_2}(0)=1/2 \ln(g_{1,\beta_1,\beta_2}^{\prime}(0))+1/2 \ln(g_{2,\beta_1,\beta_2}^{\prime}(0))=1/2 \ln (1-\beta_1)+ 1/2 \ln (1+\beta_2)
\end{align}
and
\begin{align}\label{n1}
L_{\perp, \beta_1,\beta_2}(1)=1/2 \ln(g_{1,\beta_1,\beta_2}^{\prime}(1))+1/2 \ln(g_{2, \beta_1,\beta_2}^{\prime}(1))=1/2 \ln \beta_1 + 1/2 \ln (2-\beta_2).
\end{align}
We observe that, by (\ref{n0}) and (\ref{n1}), these normal Lyapunov exponents vary continuously with $\beta_i$. In Figure \ref{fig:8} the plot of $L_{\perp, \beta_1,\beta_2}(i)$, $i=0,1$,
are illustrated by varying $\beta_1$ and for fixed $\beta_2=1/2$. It demonstrates the continuous dependence of normal Lyapunov exponents with respect to $\beta_1$ in this case.

 \begin{figure}[h!]
\begin{center}
    $\begin{array}{c}
  \includegraphics[scale=0.4]{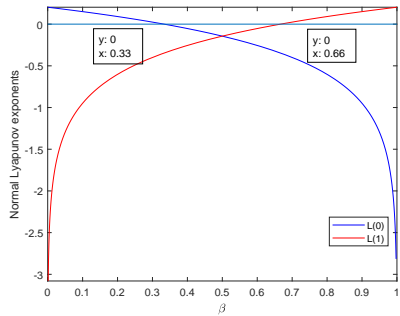}
    \end{array}$
\end{center}
  \caption{\small Normal Lyapunov exponents vary continuously with $\beta_i$. The blue curve depicts $L_{\perp, \beta_1,\beta_2}(0)$ by varying $\beta_1=\beta$ and for fixed value $\beta_2=1/2$,
  while the red curve depicts $L_{\perp, \beta_1,\beta_2}(1)$.}
 \label{fig:8}
   \end{figure}

 Given an ergodic measure $\mu$, let $G_\mu$ be the set of generic points of $\mu$. That is
 \begin{equation*}
   G_\mu=\{x \in A:\frac{1}{n}\sum_{i=0}^{n-1}\delta_{f^j(x)}\to \mu\}
 \end{equation*}
 where convergence is in the weak$^*$ topology. For any $\alpha > 0$ define
 \begin{equation}\label{alpha}
   G_\alpha:=\bigcup_{\mu \in \mathcal{E}_f(A),\Lambda_\mu \geq \alpha}G_\mu.
 \end{equation}

\begin{proposition}\cite[Proposition~3.19]{PJ}\label{p1}
Suppose $F : M \to M$ is a $C^{1+\alpha}$ map leaving the embedded submanifold $N$
invariant, and that $A$ is an asymptotically stable chaotic attractor for $F|_N$.
Let $\Lambda_{max}$, $\lambda_{min}$ and $\Lambda_{SRB}$ be given by (\ref{mm}) and (\ref{in}). Then, under $F : M \to M$
\begin{enumerate}
  \item [$(1)$] If $\Lambda_{SRB} < 0 < \Lambda_{max}$ then $A$ is a Milnor (essential) attractor. If in addition there
exists $\alpha > 0$ with $G_\alpha$ dense in $A$, then $A$ has a locally riddled basin.
  \item [$(2)$] If $ \lambda_{min}< 0 < \Lambda_{SRB}$, $\mu_{SRB}$-almost all Lyapunov exponents are non-zero and $m(\bigcup_{\mu \neq \mu_{SRB}}G_\mu)=0$, where $m$ is the Riemannian volume on $N$,
 then $A$ is a chaotic saddle.
\end{enumerate}
\end{proposition}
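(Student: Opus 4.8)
The plan is to treat the two parts separately, translating in each case the sign conditions on the normal Lyapunov exponents into statements about the transverse behaviour of Lebesgue-typical orbits and of dense families of orbits, and then reading off the required conclusions directly from the definitions of Milnor attractor, essential attractor, locally riddled basin, and chaotic saddle given in Section~2. Throughout I would fix a tubular neighbourhood $U$ of $A$ in $M$, small enough that the single normal direction splits off cleanly and that orbits expanding transversely genuinely leave $U$.

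For part $(1)$, I would first observe that $\Lambda_{SRB}<0$ places us exactly in the hypothesis of Theorem~\ref{thm11}: the largest normal exponent of $\mu_{SRB}$ is negative, so $\mu_{SRB}$-typical points of $A$ are transversely attracting on average. Combined with the absolute continuity of the stable foliation of $A$ — which holds because $A$ carries an SRB measure, as recalled before Theorem~\ref{thm11} — a Fubini-type argument along the transverse stable manifolds produces a positive-Lebesgue-measure set of points of $M$ whose forward orbits converge to $A$. This yields $m(\mathcal{B}(A))>0$, so $A$ is a Milnor attractor, and the full-density version of the same argument gives the essential-attractor refinement, as in Theorem~\ref{thm11}.

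The locally riddled conclusion is where $0<\Lambda_{max}$ and the density of $G_\alpha$ enter. Fixing $\alpha>0$ with $G_\alpha$ dense in $A$, I would show that around each generic point $z\in G_\mu$ with $\Lambda_\mu\geq\alpha$ there is a positive-measure set of transversely escaping orbits. Since $\Lambda_\mu\geq\alpha>0$, Pesin theory in the one normal direction furnishes a local normal unstable manifold at $z$ along which nearby points are pushed away from $N$ at an exponential rate, and by absolute continuity these escaping points fill a positive-measure subset of any small ball $B_\varepsilon(z)$. As such $z$ are dense in $A$, for every $x\in A$ and $\varepsilon>0$ the ball $B_\varepsilon(x)$ meets this escaping set, hence meets $(\bigcap_{n\geq 0}F^{-n}(U))^c$ in positive measure, which is precisely condition (\ref{l-riddel}). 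For part $(2)$ the sign conditions for the SRB measure reverse: $\Lambda_{SRB}>0$ makes $\mu_{SRB}$-typical points transversely repelling. The hypotheses that $\mu_{SRB}$-almost all Lyapunov exponents are nonzero and that $m(\bigcup_{\mu\neq\mu_{SRB}}G_\mu)=0$ force Lebesgue-almost every point of $N$ to be generic for $\mu_{SRB}$ and to sit on a hyperbolic normal unstable manifold, so Lebesgue-almost every point near $A$ is eventually expelled, giving $m(\mathcal{B}(A))=0$. Meanwhile $\lambda_{min}<0$ supplies an ergodic measure that is transversely attracting, whose generic points and their transverse stable manifolds lie in $\mathcal{B}(A)$, so $\mathcal{B}(A)\cap U\neq\emptyset$; together these are exactly the defining properties of a chaotic saddle.

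I expect the main obstacle in both parts to be the non-uniformly hyperbolic local analysis in the normal direction: constructing local normal (un)stable manifolds at generic points of measures that are only assumed ergodic, and invoking the Pugh–Shub absolute continuity to pass from full- or positive-measure statements on $N$ to positive-measure statements in transverse neighbourhoods. The locally riddled conclusion of part $(1)$ is the subtlest point, since it requires the escaping sets to have positive measure in arbitrarily small balls about a dense set of points simultaneously — which is exactly what the density of $G_\alpha$ together with the uniform exponent bound $\alpha$ is designed to provide.
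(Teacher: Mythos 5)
First, note that the paper does not prove this proposition at all: it is quoted verbatim from \cite[Proposition~3.19]{PJ} (Ashwin--Buescu--Stewart) and used as a black box, so there is no in-paper proof to compare against. Your overall architecture does match the source: Theorem~\ref{thm11} for the Milnor/essential attractor conclusion when $\Lambda_{SRB}<0$, transversely repelling generic points of measures in $G_\alpha$ for the riddling, and the reversed signs plus the genericity hypotheses for the chaotic saddle. Your reading of the roles of $\lambda_{min}$, $\Lambda_{SRB}$ and $\Lambda_{max}$ is correct.

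The genuine gap is in your mechanism for the locally riddled conclusion of part $(1)$. You propose that at a generic point $z$ of a measure $\mu$ with $\Lambda_\mu\geq\alpha>0$, Pesin theory gives a local normal unstable manifold, and that ``by absolute continuity these escaping points fill a positive-measure subset of any small ball $B_\varepsilon(z)$.'' This does not follow: since $N$ has codimension one, the local normal unstable manifold at $z$ is a one-dimensional curve, and the set $G_\mu$ carrying such manifolds (e.g.\ a single periodic orbit for a Dirac measure) typically has zero Lebesgue measure in $N$, so the union of these unstable manifolds can still be Lebesgue-null in $M$; absolute continuity of a foliation only transfers positive measure if the base of the foliation already has positive measure. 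What is actually needed --- and what \cite{PJ} and \cite{JI} use --- is an \emph{open} escape region: a cusp-shaped wedge $\{(x,y): y>\mathrm{dist}(x,\mathcal{O}(z))^{\gamma}\}$ attached to the transversely repelling orbit, chosen so that the transverse expansion at rate $\geq\alpha$ ejects a point from $U$ before its base orbit stops shadowing $\mathcal{O}(z)$. Such a wedge is open, hence of positive measure, and its preimages accumulate on the dense set $G_\alpha$, which is exactly how condition (\ref{l-riddel}) is verified. (The present paper performs precisely this construction with the sets $W^{+}_{\beta_1,\beta_2}$ and $U^{+}_{\beta_1,\beta_2}$ in the proof of Theorem~\ref{m}.) A similar, though less critical, elision occurs in your part $(2)$: passing from ``$m$-a.e.\ point of $N$ is transversely repelled'' to ``Lebesgue-a.e.\ point of $M$ near $A$ is expelled'' requires an argument about points off $N$ whose orbits need not shadow $N$-orbits indefinitely, not just Fubini over normal unstable manifolds.
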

Note that, by \cite[Remark~3.4]{ABS}, if $codim(N)=1$, as in our setting, there is only one normal direction. In this case $\lambda_\mu=\Lambda_\mu$ for
all ergodic $\mu$, and the normal spectrum depends smoothly on normal parameters.

Define $\rho : \mathbb{I} \to \{-1,1\}$ by
\begin{align}\label{code0}
\rho(x) =\left\{\begin{array}{cc}
-1 & $ for$ \quad0\leq x\leq 1/2,\\
1 & $ for$ \quad 1/2< x\leq1.
\end{array}\right.
\end{align}
Let $x$ be a periodic point of $f$ of period $n$ and let $\mathcal{O}(x)=\{x_i: f(x_{i-1})=x_i, \ i=1, \dots, n, \quad  \text{with} \quad x_0=x\}$.
Take
\begin{equation}\label{per}
 Per^+(f):=\{x \in Per(f): \sum_{i=0}^{n-1}\rho(x_i) > 0\},
\end{equation}
 where $Per(f)$ denotes the set of all periodic orbits of $f$.
By \cite[Proposition~3.1]{Go}, there is a semi-conjugacy between the shift map $\sigma : \Sigma_2^+ \to \Sigma_2^+$ and the
doubling map $f$. By this fact, the following result is immediate.
\begin{lemma}\label{lem1}
The subset $Per^+(f)$ is dense in each chaotic attractor $A_i$, $i=0,1$, within the invariant subspaces $N_i$.
\end{lemma}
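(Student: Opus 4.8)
The plan is to reduce the statement to a density assertion for the doubling map $f$ on the base interval $\mathbb{I}$ and then to settle it by an explicit coding construction. Since the fiber dynamics of $F_{\beta_1,\beta_2}$ on each $N_i$ leaves the first coordinate governed by $f$, and $f$ carries Lebesgue measure as its a.c.i.m.\ with full support, the chaotic attractor $A_i$ projects onto all of $\mathbb{I}$; hence it suffices to prove that $Per^+(f)$ is dense in $\mathbb{I}$. First I would fix the semi-conjugacy $h:\Sigma_2^+\to\mathbb{I}$ of \cite[Proposition~3.1]{Go} satisfying $h\circ\sigma=f\circ h$, normalised so that the cylinder $[1]$ maps into $[0,1/2]$ and $[2]$ into $(1/2,1]$. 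With this normalisation the symbolic partition matches the definition (\ref{code0}) of $\rho$: for every $\omega$ one has $\rho(h(\sigma^i\omega))=-1$ when $\omega_i=1$ and $\rho(h(\sigma^i\omega))=+1$ when $\omega_i=2$.

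Using $h$, I would translate the defining condition of $Per^+(f)$ into a combinatorial one. If $\omega=(\alpha_0\cdots\alpha_{n-1})^\infty$ is a $\sigma$-periodic sequence, then $x:=h(\omega)$ is an $f$-periodic point whose orbit code is $(\alpha_0,\dots,\alpha_{n-1})$, so that $\sum_{i=0}^{n-1}\rho(x_i)$ equals the number of symbols $2$ minus the number of symbols $1$ in the period block. Thus $x\in Per^+(f)$ precisely when the block contains strictly more $2$'s than $1$'s, and, since passing to the minimal period only rescales this count, the sign is unaffected by whether $n$ is the minimal period. This reduces the lemma to the following elementary task: approximate an arbitrary point of $\mathbb{I}$ by the $h$-image of a periodic word in which the symbol $2$ strictly predominates.

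For the construction, given $x\in\mathbb{I}$ and $\varepsilon>0$ I would choose $k$ with $2^{-k}<\varepsilon$ and read off the first $k$ coding symbols $b_1,\dots,b_k$ of $x$, so that $x$ lies in the level-$k$ cylinder, which has diameter at most $2^{-k}$. I would then append a long block of $2$'s, forming the periodic word $w=(b_1\cdots b_k\underbrace{2\cdots2}_{m})^\infty$ with $m>k$. The point $y:=h(w)$ shares its first $k$ symbols with $x$, hence lies in the same cylinder and satisfies $|y-x|\le 2^{-k}<\varepsilon$; and by the choice $m>k$ its period block carries at least $m$ symbols $2$ and at most $k$ symbols $1$, so $y\in Per^+(f)$. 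Letting $\varepsilon\to0$ yields density in $\mathbb{I}$, and therefore in each $A_i$, since the periodic points of $F|_{N_i}$ are exactly the base-periodic points lifted to $\mathbb{I}\times\{i\}$.

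The main obstacle I anticipate is not the approximation itself but the fact that $h$ is only a semi-conjugacy: it fails to be injective at dyadic rationals, and the partition underlying $\rho$ is discontinuous at $x=1/2$, so the orbit code of $h(\omega)$ need not literally be $\omega$ for boundary sequences. I would dispose of this by always selecting, for the target $x$, a binary expansion whose tail is not eventually constant—equivalently, avoiding the countable set of dyadic rationals and of points whose $f$-orbit meets $1/2$; such points are dense, the coding is unambiguous there, and the appended $2$-block never collides with the boundary. A brief verification that the constructed $y$ genuinely carries the intended code, and hence the intended value of $\sum\rho$, then completes the argument without any delicate estimates.
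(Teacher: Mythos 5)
Your proof is correct and follows the same route the paper intends: it invokes the semi-conjugacy with the one-sided $2$-shift from \cite[Proposition~3.1]{Go} and produces periodic points with an excess of the symbol $2$ by cylinder approximation. The paper simply declares the lemma ``immediate'' from that semi-conjugacy without writing out the combinatorial construction, so your argument supplies exactly the details (the appended block of $2$'s, the $2^{-k}$ cylinder estimate, and the care at dyadic/boundary points) that the paper leaves implicit.
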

\begin{theorem}\label{th1}
Let $F_{\beta_1,\beta_2}\in \mathcal{F}$ be a skew product of the form (\ref{ss}) whose fiber maps $g_{i,\beta_1,\beta_2}$, $i=1,2$, given by (\ref{fiber1}) and $\beta_1, \beta_2 \in (0,1)$.
Then the following holds:
\begin{enumerate}
  \item [$(a)$] if $\beta_1 >1/2$ or ($\beta_1 \leq 1/2$ $\&$ $ \beta_2 <\frac{1}{1-\beta_1}-1$), then $A_0$ is a Milnor (essential) attractor and has a locally riddled basin;
  \item [$(b)$] if $\beta_1 < 1/2$ or ($ \beta_1\geq 1/2$ $\&$ $\beta_2 >2-\frac{1}{\beta_1}$), then $A_1$ is a Milnor (essential) attractor and has a locally riddled basin;
  \item [$(c)$] if $\beta_1<1/2$ and $\beta_2 > \frac{1}{1-\beta_1}-1$ then $A_0$ is a chaotic saddle;
  \item [$(d)$] if $\beta_1 >1/2$, $\beta_2<2- \frac{1}{\beta_1}$ then $A_1$ is a chaotic saddle.
\end{enumerate}
\end{theorem}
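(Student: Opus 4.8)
The plan is to reduce each of (a)--(d) to a single sign condition on the SRB normal exponent of the relevant attractor and then quote Proposition~\ref{p1}. Since the base map $f$ carries a unique Lebesgue-equivalent a.c.i.m.\ giving mass $1/2$ to each of its two branches, the SRB measure on $N_i$ is that measure lifted to height $i$, and its normal exponent is precisely the quantity computed in (\ref{n0}) and (\ref{n1}); thus $\Lambda_{SRB}(A_0)=L_{\perp,\beta_1,\beta_2}(0)$ and $\Lambda_{SRB}(A_1)=L_{\perp,\beta_1,\beta_2}(1)$. First I would record the elementary equivalences $\Lambda_{SRB}(A_0)<0 \Leftrightarrow (1-\beta_1)(1+\beta_2)<1 \Leftrightarrow \beta_2<\tfrac{1}{1-\beta_1}-1$ and $\Lambda_{SRB}(A_1)<0 \Leftrightarrow \beta_1(2-\beta_2)<1 \Leftrightarrow \beta_2>2-\tfrac{1}{\beta_1}$. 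A short case split on $\beta_1\lessgtr 1/2$ then identifies these with the exact regions in the statement: e.g.\ when $\beta_1>1/2$ one has $\tfrac{1}{1-\beta_1}-1>1>\beta_2$ automatically, which is why the hypothesis of (a) collapses to ``$\beta_1>1/2$ or $\dots$'', while (c) and (d) are the complementary open sets on which the corresponding $\Lambda_{SRB}$ is strictly positive.

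Next I would dispatch the structural hypotheses that hold for \emph{all} $\beta_1,\beta_2\in(0,1)$. Because $x=1$ is a fixed point of $f$ in the right branch, the atom on it is ergodic with normal exponent $\ln(1+\beta_2)>0$ at $A_0$ and $\ln(2-\beta_2)>0$ at $A_1$; as $1+\beta_2$ and $2-\beta_2$ are the largest fibre derivatives at $0$ and at $1$, this gives $\Lambda_{max}(A_0)=\ln(1+\beta_2)$ and $\Lambda_{max}(A_1)=\ln(2-\beta_2)$, both positive. Symmetrically the fixed point $x=0$ yields $\lambda_{min}(A_0)=\ln(1-\beta_1)<0$ and $\lambda_{min}(A_1)=\ln\beta_1<0$. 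Hence $\lambda_{min}<0<\Lambda_{max}$ always, and only $\Lambda_{SRB}$ changes sign. For the saddle cases (c),(d) I would then check the two remaining clauses of Proposition~\ref{p1}(2): the $\mu_{SRB}$-exponents are $\ln 2$ and $\Lambda_{SRB}\neq 0$, hence nonzero; and since the base a.c.i.m.\ is the unique physical measure of $f$, Lebesgue-a.e.\ point of $N_i$ is generic for $\mu_{SRB}$, so $m(\bigcup_{\mu\neq\mu_{SRB}}G_\mu)=0$. With $\Lambda_{SRB}>0$ from the first step, Proposition~\ref{p1}(2) delivers (c) and (d).

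For the locally riddled cases (a),(b) the only substantial point is to exhibit, for some $\alpha>0$, a dense set $G_\alpha$ of generic points of ergodic measures with normal stability index at least $\alpha$. Here I would use the semiconjugacy underlying Lemma~\ref{lem1}: coding $x\in\mathbb{I}$ by its $f$-itinerary in the symbols $\{1,2\}$ (left/right branch), a periodic word agreeing with the first $m$ symbols of a target $x^\ast$ produces a periodic point within $2^{-m}$ of $x^\ast$. Given such a target I would take the orbit coded by $(a_0\cdots a_{m-1}\,2^{M})^{\infty}$; its fraction of right-branch visits tends to $1$ as $M\to\infty$, so its normal exponent tends to $\ln(1+\beta_2)>0$ at $A_0$ (respectively $\ln(2-\beta_2)>0$ at $A_1$). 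Choosing $M$ large makes this exponent exceed a fixed $\alpha\in(0,\ln(1+\beta_2))$ while keeping the orbit within $2^{-m}$ of $x^\ast$, and letting $m\to\infty$ shows $G_\alpha$ is dense in $A_i$. Together with $\Lambda_{SRB}<0<\Lambda_{max}$, Proposition~\ref{p1}(1) then gives that $A_0$ (resp.\ $A_1$) is a Milnor essential attractor with a locally riddled basin.

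The main obstacle is exactly this density of $G_\alpha$. Membership in $Per^+(f)$ alone is too weak: an orbit with only slightly more right than left visits still has normal exponent only slightly above $\Lambda_{SRB}$, which is negative in regime (a),(b); so $\sum\rho>0$ does not by itself force a positive exponent. The padding construction is what upgrades this net-positive sign into a quantitative bound uniformly away from zero, and the semiconjugacy is what guarantees that such high-expansion orbits can be placed densely. By contrast, the sign computations of the first paragraph and the fixed-point evaluations of the second are entirely routine.
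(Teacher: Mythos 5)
Your proposal follows the same overall route as the paper's proof: identify $\Lambda_{SRB}(A_i)$ with the normal Lyapunov exponents $L_{\perp,\beta_1,\beta_2}(i)$ of (\ref{n0})--(\ref{n1}) via the Lebesgue-equivalent SRB measure of the piecewise expanding restriction, reduce the regions in (a)--(d) to the sign of $\Lambda_{SRB}$, use the Dirac measures at the fixed points $(1,i)$ and $(0,i)$ to get $\Lambda_{max}>0$ and $\lambda_{min}<0$, and then invoke Proposition~\ref{p1}. Your sign equivalences, the case split on $\beta_1\lessgtr 1/2$, and the verification of the auxiliary clauses of Proposition~\ref{p1}(2) all match the paper.

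The one place where you genuinely diverge is the density of $G_\alpha$, and there your version is the stronger one. The paper asserts that for every $x\in Per^+(f)$ the Dirac measure on $\mathcal{O}(x)$ has positive normal stability index and then appeals to Lemma~\ref{lem1}; as you correctly observe, this is not true as stated, since an orbit visiting the right branch with frequency only slightly above $1/2$ has normal index close to $\Lambda_{SRB}$, which is negative in regimes (a) and (b) (e.g.\ $\beta_1=0.9$, $\beta_2=0.1$, right-branch frequency $0.6$ gives a negative index). The paper's phrase ``taking suitable dirac measures'' gestures at a fix without supplying it. Your padding construction --- approximating a target itinerary to depth $m$ and appending $M$ right-branch symbols so that the periodic orbit's right-branch frequency, and hence its normal index, exceeds a fixed $\alpha\in(0,\ln(1+\beta_2))$ while the orbit stays within $2^{-m}$ of the target --- is exactly the missing quantitative argument, and it makes the density of $G_\alpha$ rigorous. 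So your proof is correct and, at its only nontrivial juncture, more careful than the paper's own.
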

\begin{proof}

Let $N_i$, $i=0,1$, be given by (\ref{subspaces}) and consider the restriction $F_{\beta_1,\beta_2}|_{N_i}$.
By definition, since there is only one normal direction, the normal stability index $\Lambda_\mu(A_i)$ of an ergodic invariant probability measure
 $\mu \in \mathcal{E}_{F_{\beta_1,\beta_2}|_{N_i}}(A_i)$ is equal to
the normal Lyapunov exponent $\lambda_{\mu}(A_i)$.

Note that, for $0 \leq x \leq 1/2$, we have
\begin{equation*}
 d_{(x,0)}F_{\beta_1, \beta_2}=\begin{pmatrix}
2 & \quad 0\\
0& \quad dg_{1,\beta_1}(0)\\
\end{pmatrix}
=\begin{pmatrix}
2 & \quad 0\\
0& \quad 1-\beta_1\\
\end{pmatrix}
\end{equation*}
and for $1/2 < x \leq 1$, we have
\begin{equation*}
 d_{(x,0)}F_{\beta_1, \beta_2}=\begin{pmatrix}
2 & \quad 0\\
0& \quad dg_{2,\beta_2}(0)\\
\end{pmatrix}
=\begin{pmatrix}
2 & \quad 0\\
0& \quad 1+\beta_2\\
\end{pmatrix}
\end{equation*}
where $g_{i,\beta_i}$, $i=1,2$, given by (\ref{fiber1}).

Hence, the normal stability index is computed as follows:
\begin{eqnarray}\label{Lambd}
  \lambda_\mu(A_0)=\Lambda_{\mu}(A_0) = \int_{A_0 \cap [0,1/2]}\log (1-\beta_1) d\mu(x ) + \int_{A_0 \cap (1/2,1]}\log (1+\beta_2) d\mu(x ).
 \end{eqnarray}
Therefore,
\begin{eqnarray}\label{Lambd1}
  \lambda_\mu(A_0)=\Lambda_{\mu}(A_0) = \mu(A_0 \cap [0,1/2])\log (1-\beta_1) + \mu(A_0 \cap (1/2,1])\log (1+\beta_2).
 \end{eqnarray}

Note that, by (\ref{Lambd1}), for each invariant measure $\mu$, $\Lambda_{\mu}$ is finite.
Additionally, $\Lambda_{\mu}$ is smoothly dependent on normal parameters $\beta_1$ and $\beta_2$.

The base map $f$ given by (\ref{base}) is a piecewise expanding map.
By definition of $F_{\beta_1,\beta_2}$ and since $N_0$ is one dimensional, we conclude that
$F_{\beta_1,\beta_2}|_{A_0}$ is also piecewise expanding. This fact implies that $F_{\beta_1,\beta_2}|_{A_0}$ has a
Lebesgue-equivalent ergodic invariant measure (see \cite{ABS, W}); this corresponds to the desired $\mu_{SRB}(A_0)$ (see Subsection 4.3 of \cite{ABS}).
By this fact and (\ref{Lambd1}), $$\Lambda_{SRB}(A_0)=1/2 \log (1-\beta_1) + 1/2 \log (1+\beta_2).$$
Note that $\Lambda_{SRB}(A_0)=L_{\perp, \beta_1,\beta_2}(0)$, where $L_{\perp, \beta_1,\beta_2}(0)$ is the normal Lyapunov exponent given by (\ref{n0}).
Also, it characterizes \cite{JI} evolution transverse to the $x$-axis. If it
is negative, the invariant set $A_0$ is a Milnor attractor.
Simple computations show that, for $\beta_1 >1/2$ or ($\beta_1\leq 1/2$ $\&$ $ \beta_2 <\frac{1}{1-\beta_1}-1$), $\Lambda_{SRB}(A_0)<0$ and hence $A_0$ is a Milnor (essential) attractor.
Take the invariant dirac measure $\mu_1$ supported on the fixed point at $(1,0)$. Using (\ref{Lambd1}), $\Lambda_{\mu_1}(A_0)=\log(1+\beta_2)$ which is positive.
By this fact, $0< \Lambda_{\mu_1}(A_0) \leq \Lambda_{max}(A_0)$.
These computations show that $\Lambda_{SRB}(A_0)<0 < \Lambda_{max}(A_0)$.

By Lemma \ref{lem1}, the set $Per(f)^+$ is dense in $A_0$. By definition of $Per(f)^+$ and (\ref{Lambd1}), the dirac measure supported on $\mathcal{O}(x)$, for $x \in Per(f)^+$, has positive normal stability index.
By this fact and taking suitable dirac measures supported on $\mathcal{O}(x)$, for $x \in Per(f)^+$, we may find $\alpha>0$ such that
 $G_\alpha$ given by (\ref{alpha}) is dense in $A_0$.
By these observations and statement (1) of Proposition \ref{p1}, $A_0$ has a locally riddled basin and the proof of $(a)$ is finished.

Let $\mu_2$ be the invariant dirac measure supported on the fixed point at $(0,0)$. Using (\ref{Lambd1}), $\Lambda_{\mu_2}(A_0)=\log(1-\beta_1)$ which is negative. By this fact, $\lambda_{min}(A_0)\leq \Lambda_{\mu_2}(A_0) < 0$.
Also, it is easy to see that for $\beta_1<1/2$ and $\beta_2 > \frac{1}{1-\beta_1}-1$, $\Lambda_{SRB}(A_0)>0$.
By these facts, for $\beta_1<1/2$ and $\beta_2 > \frac{1}{1-\beta_1}-1$, we get $\lambda_{min}(A_0)<0<\Lambda_{SRB}(A_0)$. Since $\mu_{SRB}(A_0)$ is equivalent to Lebesgue and whose support is $A_0$,
 $m(\bigcup_{\mu \neq \mu_{SRB}(A_0)}G_\mu)=0$, where $m$ is the Lebesgue measure on $N_0$. Clearly, $\mu_{SRB}(A_0)$-almost all Lyapunov exponents are non-zero. By these facts  and statement (2) of Proposition \ref{p1}, $A_0$ is a chaotic saddle which verifies statement $(c)$.

We apply similar arguments to prove $(b)$ and $(d)$. Indeed,
for $0 \leq x \leq 1/2$, we have
\begin{equation*}
 d_{(x,1)}F_{\beta_1, \beta_2}=\begin{pmatrix}
2 & \quad 0\\
0& \quad dg_{1,\beta_1}(1)\\
\end{pmatrix}
=\begin{pmatrix}
2 & \quad 0\\
0& \quad \beta_1\\
\end{pmatrix}
\end{equation*}
and for $1/2 < x \leq 1$, we have
\begin{equation*}
 d_{(x,1)}F_{\beta_1, \beta_2}=\begin{pmatrix}
2 & \quad 0\\
0& \quad dg_{2,\beta_2}(1)\\
\end{pmatrix}
=\begin{pmatrix}
2 & \quad 0\\
0& \quad 2- \beta_2\\
\end{pmatrix}
\end{equation*}
where $g_{i,\beta_i}$, $i=1,2$, given by (\ref{fiber1}).
Hence, the normal stability index is computed as follows:
\begin{eqnarray}\label{Lambd}
  \lambda_\mu(A_1)=\Lambda_{\mu}(A_1) = \int_{A_1 \cap [0,1/2]}\log (\beta_1) d\mu(x ) + \int_{A_1 \cap (1/2,1]}\log (2- \beta_2) d\mu(x ).
 \end{eqnarray}
As a consequence,
\begin{eqnarray}\label{Lambd1}
  \lambda_\mu(A_1)=\Lambda_{\mu}(A_1) = \mu(A_1 \cap [0,1/2])\log (\beta_1) + \mu(A_1 \cap (1/2,1])\log (2- \beta_2).
 \end{eqnarray}
As above, $F_{\beta_1,\beta_2}|_{A_1}$ has a
Lebesgue-equivalent ergodic invariant measure; this corresponds to the desired $\mu_{SRB}(A_1)$.
By this fact and (\ref{Lambd1}), for the attractor $A_1$, $\Lambda_{SRB}(A_1)=1/2 \log (\beta_1) + 1/2 \log (2- \beta_2)$.
Note that $\Lambda_{SRB}(A_1)=L_{\perp, \beta_1,\beta_2}(1)$, where $L_{\perp, \beta_1,\beta_2}(1)$ is given by (\ref{n1}).
Simple computations show that, for $\beta_1 < 1/2$ or ($ \beta_1\geq1/2$ $\&$ $\beta_2 >2-\frac{1}{\beta_1}$), $\Lambda_{SRB}(A_1)<0$ and hence $A_1$ is a Milnor (essential) attractor.
Take the invariant dirac measure $\nu_1$ supported on the fixed point at $(1,1)$. Using (\ref{Lambd1}), $\Lambda_{\nu_1}(A_1)=\log(2-\beta_2)$ which is positive.
By this fact, $0< \Lambda_{\nu_1} (A_1)\leq \Lambda_{max}(A_1)$.
These computations show that $\Lambda_{SRB}(A_1)<0 < \Lambda_{max}(A_1)$.
By argument applied in $(a)$, we may find $\alpha>0$ such that
 $G_\alpha$ given by (\ref{alpha}) is dense in $A_1$.
By these observations and statement (1) of Proposition \ref{p1}, $A_1$ has a locally riddled basin and the proof of $(b)$ is finished.

Let $\nu_2$ be the invariant dirac measure supported on the fixed point at $(0,1)$. Using (\ref{Lambd1}), $\Lambda_{\nu_2}(A_1)=\log(\beta_1)$ which is negative.
 By this fact, $\lambda_{min}(A_1)\leq \Lambda_{\nu_2}(A_1) < 0$.
Also, it is easy to see that for $\beta_1 >1/2$, $\beta_2<2- \frac{1}{\beta_1}$, $\Lambda_{SRB}(A_1)>0$.
By these facts, if $\beta_1 >1/2$, $\beta_2<2- \frac{1}{\beta_1}$, then $\lambda_{min}(A_1)<0<\Lambda_{SRB}(A_1)$. Since $\mu_{SRB}(A_1)$ is equivalent to Lebesgue and whose support is $A_1$,
 $m(\bigcup_{\mu \neq \mu_{SRB}(A_1)}G_\mu)=0$. By these facts  and statement (2) of Proposition \ref{p1}, $A_1$ is a chaotic saddle which verifies statement $(d)$.
\end{proof}

\begin{figure}[h!]
\begin{center}
   $\begin{array}{cc}
  \includegraphics[scale=0.4]{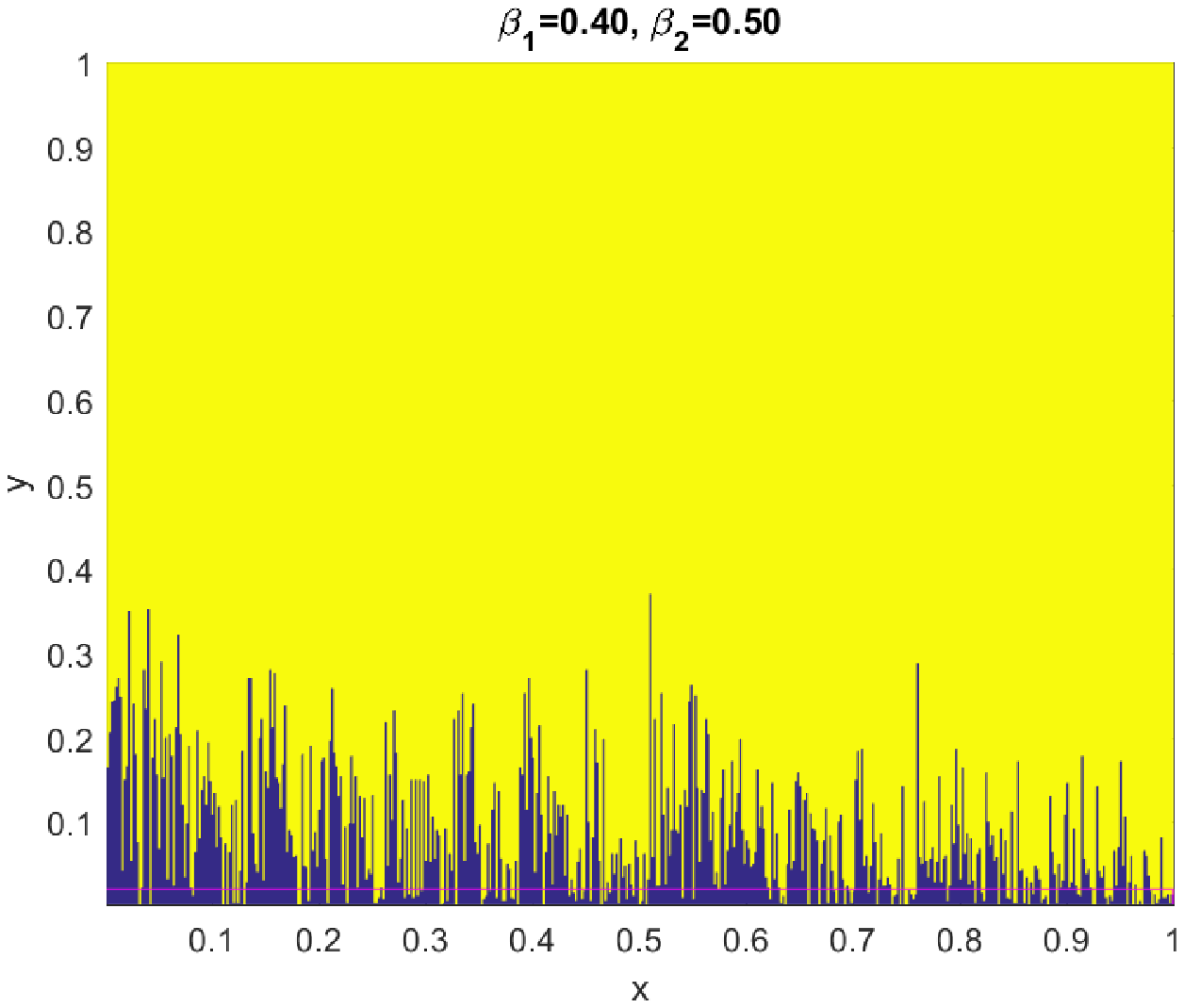}
  \includegraphics[scale=0.4]{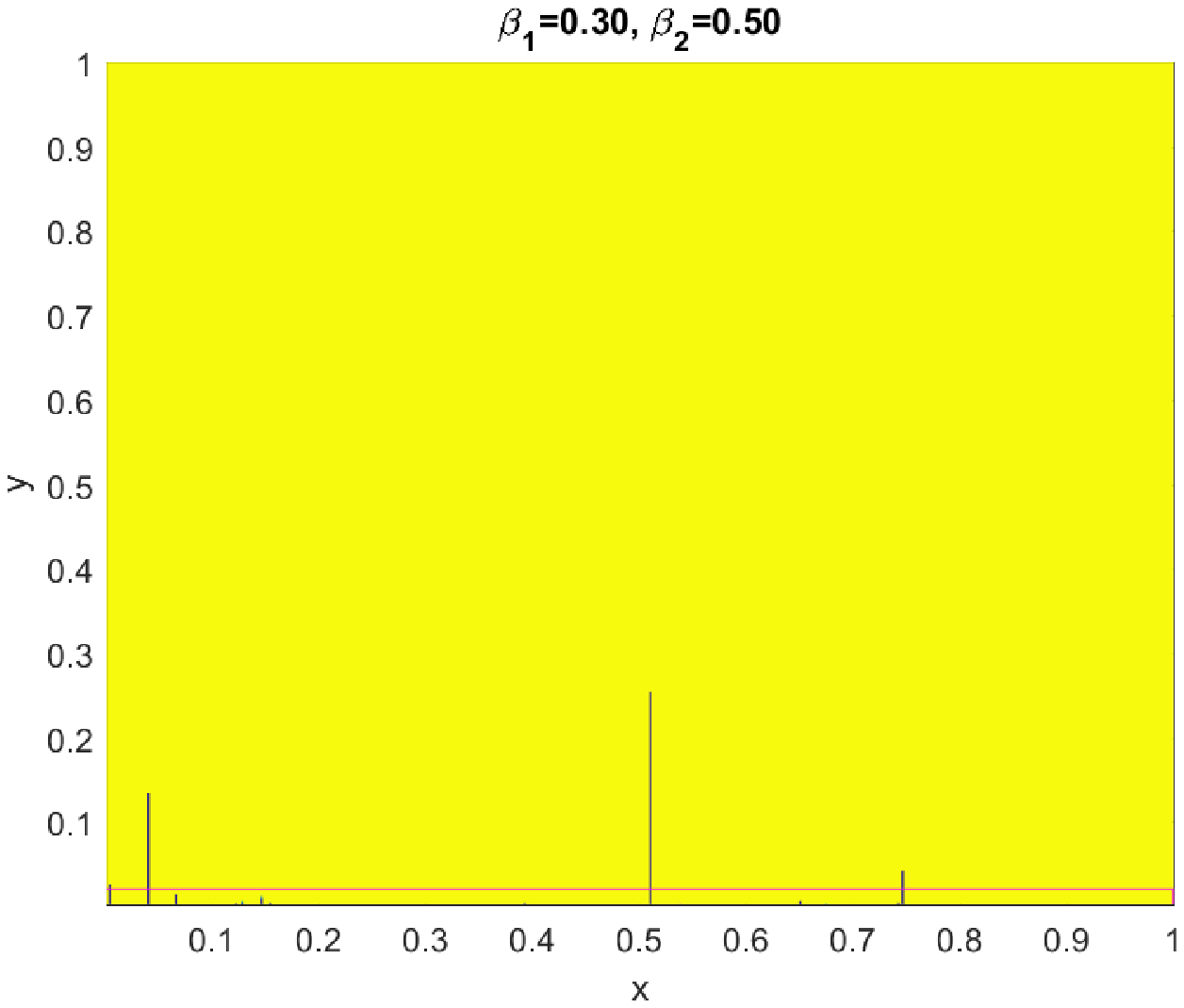}
  \end{array}$
\end{center}
  \caption{\small The basins of attraction for the attractors $A_0$ and $A_1$. The blue region in each figure corresponds to the basin of attraction $\mathcal{B}(A_0)$ while the yellow region corresponds to
the basin of attraction $\mathcal{B}(A_1)$.
  The left frame depicts the basins of attraction of $A_0$ and $A_1$, for $\beta_1=0.4$, $\beta_2=0.5$. In this case $A_i$, $i=0,1$,
  are Milnor attractors and $A_0$ has locally riddled basin. The right frame shows the basins of attraction
of $A_0$ and $A_1$ for $\beta_1=0.3$, $\beta_2=0.5$. In this case $A_0$ is a chaotic saddle and the basin $\mathcal{B}(A_0)$ has zero measure, while $A_1$ is a Milnor attractor.}
 \label{fig:01}
   \end{figure}

\begin{figure}[h!]
\begin{center}
    $\begin{array}{cc}
  \includegraphics[scale=0.4]{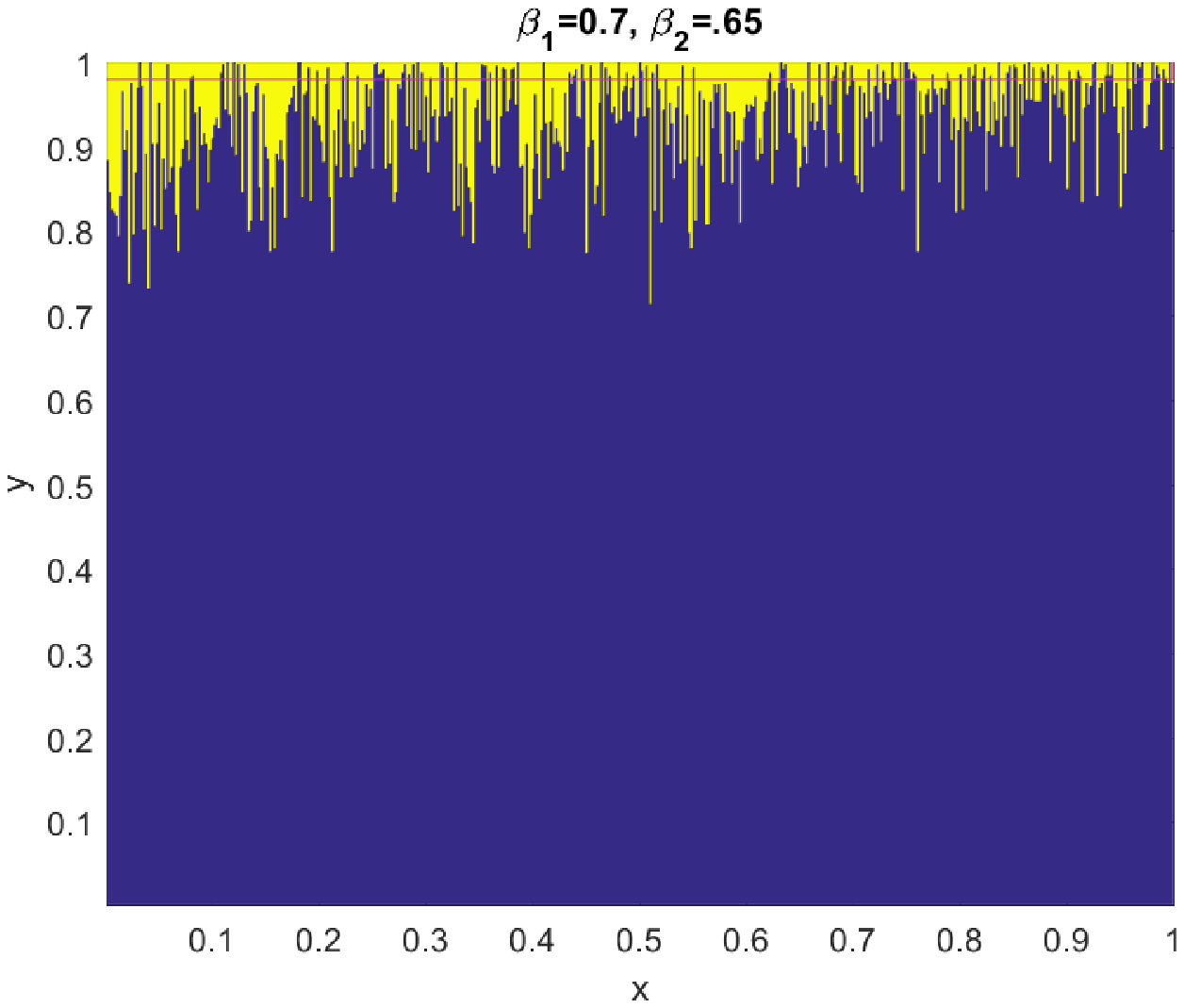}
  \includegraphics[scale=0.4]{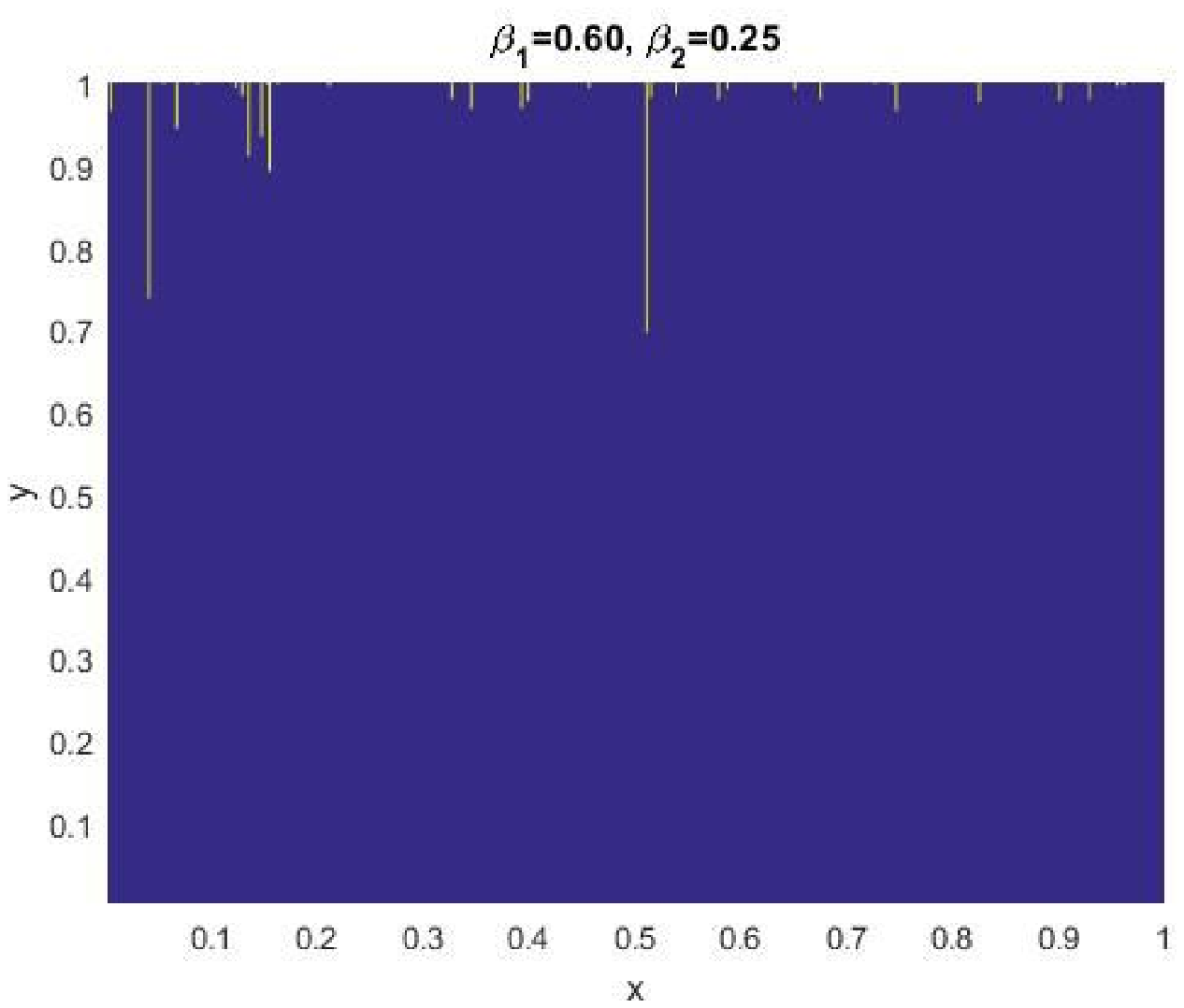}
  \end{array}$
\end{center}
  \caption{\small The basins of attraction for the attractors $A_0$ and $A_1$. The blue region in each figure corresponds to the basin of attraction $\mathcal{B}(A_0)$ while the yellow region corresponds to
the basin of attraction $\mathcal{B}(A_1)$.
  The left frame depicts the basins of attraction of $A_0$ and $A_1$, for $\beta_1=0.7$, $\beta_2=0.65$. In this case $A_i$, $i=0,1$,
  are Milnor attractors and $A_1$ has locally riddled basin. The right frame shows the basins of attraction
of $A_0$ and $A_1$ for $\beta_1=0.6$, $\beta_2=0.25$. In this case $A_1$ is a chaotic saddle and the basin $\mathcal{B}(A_1)$ has zero measure, while $A_0$ is a Milnor attractor.}
 \label{fig:02}
   \end{figure}

Let
\begin{equation}\label{region1}
 \Gamma_S:=\{(\beta_1,\beta_2) : \beta_1, \beta_2 \in (0,1), \ 2- \frac{1}{\beta_1} < \beta_2 < \frac{1}{1-\beta_1}-1 \}.
\end{equation}
If we set
\begin{align}\label{ff}
  \Gamma_{\frac{1}{2}}=\{(\beta_1,\beta_2) : \beta_2 = \frac{1}{2}, \quad \beta_1 \in (\frac{1}{3},\frac{2}{3})\}
\end{align}
then $\Gamma_{\frac{1}{2}}\subset  \Gamma_S$. Therefore, $ \Gamma_S$ is a nonempty open region.
\begin{corollary}\label{cor1}
For each $(\beta_1,\beta_2)\in  \Gamma_S$, the both invariant sets $A_0$ and $A_1$ are Milnor attractors and have locally riddled basins.
\end{corollary}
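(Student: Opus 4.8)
The plan is to show that every pair $(\beta_1,\beta_2)\in\Gamma_S$ simultaneously meets the hypothesis of part $(a)$ of Theorem~\ref{th1} (forcing $A_0$ to be a Milnor attractor with locally riddled basin) and the hypothesis of part $(b)$ (doing the same for $A_1$). Since Theorem~\ref{th1} already supplies every dynamical conclusion, the work reduces to a purely logical verification: the two defining inequalities of $\Gamma_S$ in~(\ref{region1}), namely $\beta_2<\frac{1}{1-\beta_1}-1$ and $\beta_2>2-\frac{1}{\beta_1}$, are exactly the transverse conditions occurring in the second alternatives of $(a)$ and $(b)$, respectively.

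First I would dispatch $A_0$. Fix $(\beta_1,\beta_2)\in\Gamma_S$ and split on the position of $\beta_1$ relative to $1/2$. If $\beta_1>1/2$, the first alternative in the hypothesis of $(a)$ holds immediately. If instead $\beta_1\le 1/2$, then membership in $\Gamma_S$ gives $\beta_2<\frac{1}{1-\beta_1}-1$, which is precisely the second alternative of $(a)$. Since the two cases $\{\beta_1>1/2\}$ and $\{\beta_1\le 1/2\}$ exhaust $(0,1)$, part $(a)$ applies for every $(\beta_1,\beta_2)\in\Gamma_S$, and $A_0$ is a Milnor (essential) attractor with locally riddled basin.

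The argument for $A_1$ is symmetric, now splitting on $\beta_1$ relative to $1/2$ with the opposite strictness. If $\beta_1<1/2$, the first alternative of $(b)$ holds; if $\beta_1\ge 1/2$, membership in $\Gamma_S$ yields $\beta_2>2-\frac{1}{\beta_1}$, the second alternative of $(b)$. As $\{\beta_1<1/2\}$ and $\{\beta_1\ge 1/2\}$ again cover $(0,1)$, part $(b)$ applies and $A_1$ is a Milnor attractor with locally riddled basin. I do not expect any genuine obstacle here; the only point to watch is the boundary value $\beta_1=1/2$, but the non-strict inequalities ($\beta_1\le 1/2$ in $(a)$ and $\beta_1\ge 1/2$ in $(b)$) are arranged so that this value falls under both parts, and there the bounds in~(\ref{region1}) collapse to the automatic constraint $0<\beta_2<1$. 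Combining the two cases yields the corollary.
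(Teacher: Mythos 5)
Your proposal is correct and matches the paper's (implicit) reasoning: the corollary is stated as an immediate consequence of Theorem~\ref{th1}, and your case analysis on $\beta_1$ versus $1/2$ is precisely the verification that every $(\beta_1,\beta_2)\in\Gamma_S$ satisfies the hypotheses of both parts $(a)$ and $(b)$. Your remark about the boundary value $\beta_1=1/2$ is a correct and slightly more careful observation than the paper bothers to make.
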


\begin{figure}[h!]
\begin{center}
    $\begin{array}{c}
  \includegraphics[scale=0.5]{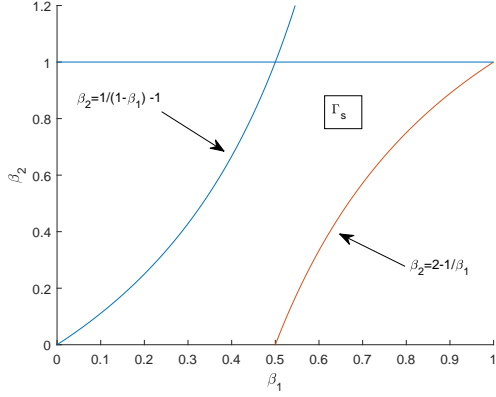}
    \end{array}$
\end{center}
  \caption{\small The open region $\Gamma_S$, where for each $(\beta_1,\beta_2)\in  \Gamma_S$, the both invariant sets $A_0$ and $A_1$ are Milnor attractors.}
 \label{fig:88}
   \end{figure}

We extend the above results to the general case. Let $F_{\beta_1,\beta_2}\in \mathcal{F}$ be a skew product of the form (\ref{ss}) whose fiber maps $g_{i,\beta_1,\beta_2}$, $i=1,2$, satisfy conditions $(I1)$-$(I3)$.
For $j=0,1$, we take
\begin{equation}\label{alpha0}
 \alpha_{\beta_1 \beta_2}^j:=dg_{1,\beta_1}(j)dg_{2,\beta_2}(j),
\end{equation}
where $dg_{i,\beta_i}(j)$, $i=1,2$, are the derivatives of the fiber maps $g_{i,\beta_i}$ at the point $j$.
\begin{theorem}\label{thm2}
Let $F_{\beta_1,\beta_2}\in \mathcal{F}$ be a skew product of the form (\ref{ss}) whose fiber maps $g_{i,\beta_1,\beta_2}$, $i=1,2$, satisfy conditions $(I1)$-$(I3)$.
Then the following statements hold:
\begin{enumerate}
  \item [$(a)$] If there exists $\beta_1^r \in (0,1)$ such that for each $\beta_1 < \beta_1^r$, the sign of $\log(\alpha_{\beta_1 \beta_2}^0)$ changes from negative to positive by varying $\beta_2$,
   then, there exists a smooth function $\beta_1 \mapsto \xi(\beta_1)$ with $\xi(\beta_1^r)=1$ such that
   \begin{enumerate}
     \item [$(i)$]  if $\beta_1 < \beta_1^r$ and $ \beta_2 <\xi(\beta_1)$, then $A_0$ is a Milnor (essential) attractor and has a locally riddled basin,
     \item [$(ii)$] if $\beta_1 < \beta_1^c$ and $ \beta_2 > \xi(\beta_1)$, then $A_0$ is a chaotic saddle;
   \end{enumerate}
     \item [$(b)$] If there exists $\beta_1^\ell \in (0,1)$ such that for each $\beta_1 > \beta_1^\ell$, the sign of $\log(\alpha_{\beta_1 \beta_2}^1)$ changes from positive to negative by varying $\beta_2$,
   then, there exists a smooth function $\beta_1 \mapsto \zeta(\beta_1)$ with $\zeta(\beta_1^\ell)=0$ such that
   \begin{enumerate}
     \item [$(i)$]  if $\beta_1 > \beta_1^\ell$ and $ \beta_2 >\zeta(\beta_1)$, then $A_1$ is a Milnor (essential) attractor and has a locally riddled basin;
     \item [$(ii)$] if $\beta_1 > \beta_1^\ell$ and $ \beta_2 < \zeta(\beta_1)$, then $A_1$ is a chaotic saddle.
   \end{enumerate}
 \end{enumerate}
\end{theorem}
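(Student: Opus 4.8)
The plan is to reduce the general case to the linear-algebra computation already carried out in Theorem \ref{th1}, the point being that conditions (I1)--(I3) pin down exactly the signs and the $\beta_2$-dependence that drove that argument. First I would linearize along $N_0$: since $g_{i,\beta_i}(0)=0$, the restriction $F_{\beta_1,\beta_2}|_{N_0}$ is the doubling map $f$, and at every $(x,0)$ the differential is block-diagonal, equal to $\mathrm{diag}(2,\,dg_{1,\beta_1}(0))$ on the left branch and $\mathrm{diag}(2,\,dg_{2,\beta_2}(0))$ on the right. Because $\mathrm{codim}\,N_0=1$, for every ergodic $\mu$ the normal stability index equals the normal exponent and
\begin{equation*}
 \Lambda_\mu(A_0)=\mu(A_0\cap[0,1/2])\log dg_{1,\beta_1}(0)+\mu(A_0\cap(1/2,1])\log dg_{2,\beta_2}(0).
\end{equation*}
Since $F_{\beta_1,\beta_2}|_{A_0}$ is (via $N_0$) the piecewise-expanding doubling map, it carries a Lebesgue-equivalent ergodic measure $\mu_{SRB}(A_0)$ with equal branch weights $1/2$ independently of the $g_i$, so $\Lambda_{SRB}(A_0)=\tfrac12\log\alpha_{\beta_1\beta_2}^0$.

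Next I would record the two sign facts that are uniform in the parameters. By (I2) the fixed point $0$ is attracting for $g_{1,\beta_1}$, so $dg_{1,\beta_1}(0)<1$ and the Dirac measure at $(0,0)$ has normal exponent $\log dg_{1,\beta_1}(0)<0$, giving $\lambda_{min}(A_0)<0$; by (I3) the fixed point $0$ is repelling for $g_{2,\beta_2}$, so $dg_{2,\beta_2}(0)>1$ and the Dirac measure at $(1,0)$ gives $0<\log dg_{2,\beta_2}(0)\le\Lambda_{max}(A_0)$. Thus $\lambda_{min}(A_0)<0<\Lambda_{max}(A_0)$ for all admissible $\beta_1,\beta_2$, and the only quantity whose sign varies is $\Lambda_{SRB}(A_0)$, i.e. the sign of $\log\alpha_{\beta_1\beta_2}^0$.

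Then comes the hypothesis-dependent step. I would apply the implicit function theorem to $H(\beta_1,\beta_2):=\log\alpha_{\beta_1\beta_2}^0$, which is smooth because the $g_{i,\beta_i}$ are $C^2$ and depend smoothly on the normal parameters. The assumption that for each $\beta_1<\beta_1^r$ the map $\beta_2\mapsto H(\beta_1,\beta_2)$ passes from negative to positive yields a transversal simple zero (so $\partial_{\beta_2}H\neq 0$ there), producing a smooth curve $\xi$ with $H(\beta_1,\xi(\beta_1))=0$; as $\beta_1\uparrow\beta_1^r$ the crossing is pushed to the boundary of the parameter square, which fixes $\xi(\beta_1^r)=1$. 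For $\beta_2<\xi(\beta_1)$ we have $\Lambda_{SRB}(A_0)<0<\Lambda_{max}(A_0)$; to invoke Proposition \ref{p1}(1) it remains to produce $\alpha>0$ with $G_\alpha$ dense, which I would do exactly as in Theorem \ref{th1}: using Lemma \ref{lem1} and the positivity of $\log dg_{2,\beta_2}(0)$, construct near any point of $A_0$ a periodic orbit spending a large enough proportion of its iterates in the right branch that its Dirac measure has normal exponent $\ge\alpha$. This gives (a)(i). For $\beta_2>\xi(\beta_1)$ we have $\lambda_{min}(A_0)<0<\Lambda_{SRB}(A_0)$; since $\mu_{SRB}(A_0)$ is Lebesgue-equivalent with full support, $m(\bigcup_{\mu\neq\mu_{SRB}}G_\mu)=0$ and $\mu_{SRB}$-almost every exponent is nonzero, so Proposition \ref{p1}(2) delivers the chaotic saddle of (a)(ii).

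Part (b) is the mirror image along $N_1$: now (I2) makes $1$ attracting for $g_{1,\beta_1}$ (so $dg_{1,\beta_1}(1)<1$ and the Dirac at $(0,1)$ is negative) while (I3) makes $1$ repelling for $g_{2,\beta_2}$ (so $dg_{2,\beta_2}(1)>1$ and the Dirac at $(1,1)$ is positive), again giving $\lambda_{min}(A_1)<0<\Lambda_{max}(A_1)$ together with $\Lambda_{SRB}(A_1)=\tfrac12\log\alpha_{\beta_1\beta_2}^1$; the sign change from positive to negative furnishes the smooth curve $\zeta$ with $\zeta(\beta_1^\ell)=0$, and the same two applications of Proposition \ref{p1} give (b)(i) and (b)(ii). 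The main obstacle is the smoothness and transversality of the level curves $\xi,\zeta$: the purely qualitative sign-change hypothesis must be upgraded to a nondegenerate crossing ($\partial_{\beta_2}$ nonzero) in order to apply the implicit function theorem and to justify the boundary normalizations $\xi(\beta_1^r)=1$ and $\zeta(\beta_1^\ell)=0$. The only other genuinely non-formal point is the density of $G_\alpha$, which rests on a specification-type construction of periodic orbits with prescribed branch frequencies.
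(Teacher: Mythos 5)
Your proposal is correct and follows essentially the same route as the paper's proof: compute the block-diagonal differentials along $N_i$, identify $\Lambda_{SRB}^i=\tfrac12\log\alpha^i_{\beta_1\beta_2}$ via the Lebesgue-equivalent SRB measure of the piecewise-expanding restriction, use Dirac measures at the fixed points together with (I2)--(I3) to get $\lambda_{min}<0<\Lambda_{max}$, extract the smooth level curves $\xi,\zeta$ from the sign-change hypothesis, and apply Proposition \ref{p1} with the $G_\alpha$-density argument from Theorem \ref{th1}. Your explicit flagging of the transversality needed for the implicit function theorem, and your choice of the Dirac measure at $(1,1)$ (rather than $(0,1)$) for the positive normal exponent along $N_1$, are in fact slightly more careful than the paper's own write-up.
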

\begin{proof}
To prove, we closely follow the proof of Theorem \ref{th1} and omit some details.
As we have seen before there is only one normal direction, so the normal stability index $\Lambda_{\mu^i}$, $i=0,1$, of an ergodic invariant probability measure
 $\mu^i \in \mathcal{E}_{F_{\beta_1,\beta_2}|_{N_i}}(A_i)$ is equal to
the normal Lyapunov exponent $\lambda_{{\mu}^i}$.

For $0 \leq x \leq 1/2$,
\begin{equation*}
 d_{(x,y)}F_{\beta_1, \beta_2}=\begin{pmatrix}
2 & \quad 0\\
0& \quad dg_{1,\beta_1}(y)\\
\end{pmatrix}
\end{equation*}
and for $1/2 < x \leq 1$,
\begin{equation*}
 d_{(x,y)}F_{\beta_1, \beta_2}=\begin{pmatrix}
2 & \quad 0\\
0& \quad dg_{2,\beta_2}(y)\\
\end{pmatrix}
\end{equation*}
where $g_{i,\beta_i}$, $i=1,2$, are the fiber maps of $F_{\beta_1, \beta_2}$.

Hence, the normal stability index $\Lambda_{\mu^i}$ for the attractor $A_i$, $i=0,1$, is computed as follows:
\begin{eqnarray}\label{Lambd0}
  \lambda_{\mu^i}=\Lambda_{{\mu}^i} = \int_{A_i \cap [0,1/2]}\log (dg_{1,\beta_1}(i)) d\mu^i(x ) + \int_{A_i \cap (1/2,1]}\log (dg_{2,\beta_2}(i)) d\mu^i(x ).
 \end{eqnarray}
Therefore,
\begin{eqnarray}\label{Lambd2}
  \lambda_{\mu^i}=\Lambda_{{\mu}^i} = \mu^i(A_i \cap [0,1/2])\log (dg_{1,\beta_1}(i)) + \mu^i(A_i \cap (1/2,1])\log (dg_{2,\beta_2}(i)).
 \end{eqnarray}
Note that $F_{\beta_1,\beta_2}|_{A_i}$, $i=0,1$, is piecewise expanding, hence, it has a
Lebesgue-equivalent ergodic invariant measure which is the desired $\mu_{SRB}^i$.
By this fact and (\ref{Lambd2}), $$\Lambda_{SRB}^i=1/2 \log (dg_{1,\beta_1}(i)) + 1/2 \log (dg_{2,\beta_2}(i)).$$

By (\ref{Lambd2}), for each invariant measure $\mu^i$, $\Lambda_{{\mu}^i}$ is smoothly dependent on the normal parameters $\beta_1$ and $\beta_2$.
This fact and the hypothesis of statements $(a)$ and $(b)$ imply that there exist smooth functions $\beta_1 \mapsto \xi(\beta_1)$ and $\beta_1 \mapsto \zeta(\beta_1)$ such that they satisfy the following properties:
\begin{equation}\label{1}
  \Lambda_{SRB}^o <0 \quad \text{if} \quad  \beta_1 < \beta_1^r \ \& \  \beta_2 <\xi(\beta_1), \quad \text{and} \quad  \Lambda_{SRB}^o >0 \quad \text{if} \quad \beta_1 < \beta_1^r \ \& \  \beta_2 >\xi(\beta_1),
\end{equation}
\begin{equation}\label{3}
\Lambda_{SRB}^1 <1 \quad \text{if} \quad \beta_1 > \beta_1^\ell \ \& \ \beta_2 >\zeta(\beta_1) , \quad \text{and} \quad \Lambda_{SRB}^1 >0 \quad \text{if} \quad \beta_1 > \beta_1^\ell \ \& \ \beta_2 <\zeta(\beta_1).
\end{equation}

Take the invariant dirac measure $\nu_1^0$ supported on the fixed point at $(1,0)$. Using (\ref{Lambd2}), $\Lambda_{\nu_1^0}=\log(dg_{2,\beta_2}(0))$ which is positive by condition $(I3)$.
 By this fact, $0< \Lambda_{\nu_1^0} \leq \Lambda_{max}^0$.
These computations show that $\Lambda_{SRB}^0 <0< \Lambda_{max}^0$ for the attractor $A_0$.
Using (\ref{Lambd2}) and Lemma \ref{lem1}, we apply the argument used in the proof of Theorem \ref{th1} to find $\alpha^0>0$ such that the subset $G_{\alpha^0}$ given by (\ref{alpha}) is dense in $A_0$.
By these observations and statement (1) of Proposition \ref{p1}, $A_0$ has a locally riddled basin and the proof of the first statement of $(a)$ is finished.

Similarly, we take the invariant dirac measure $\nu_0^1$ supported on the fixed point at $(0,1)$. Using (\ref{Lambd2}), $\Lambda_{\nu_0^1}=\log(dg_{1,\beta_1}(1))$
which is positive by condition $(I2)$. By this fact, $0< \Lambda_{\nu_0^1} \leq \Lambda_{max}^1$.
These computations show that $\Lambda_{SRB}^1<0 < \Lambda_{max}^1$.
As above, using Lemma \ref{lem1} and (\ref{Lambd2}),  we can apply the argument used in the proof of Theorem \ref{th1} to find $\alpha^1>0$ such that the subset $G_{\alpha^1}$ given by (\ref{alpha}) is dense in $A_1$.
By these observations and statement (1) of Proposition \ref{p1}, $A_1$ has a locally riddled basin and the proof of the first statement of $(b)$ is finished.

Let $\nu_0^0$ be the invariant dirac measure supported on the fixed point at $(0,0)$. Using (\ref{Lambd2}), $\Lambda_{\nu_0^0}=\log(dg_{1,\beta_1}(0))$ which is negative by condition $(I2)$. By this fact,
 $\lambda_{min}^0\leq \Lambda_{\nu_0^0} < 0$.
Also, by (\ref{1}) for $ \beta_2 > \xi(\beta_1)$, $\Lambda_{SRB}^0>0$.
By these facts, for $ \beta_2 > \xi(\beta_1)$, we get $\lambda_{min}^0<0<\Lambda_{SRB}^0$. Since $\mu_{SRB}^0$ is equivalent to Lebesgue and whose support is $A_0$,
 $m(\bigcup_{\mu \neq \mu_{SRB}^0}G_\mu)=0$, where $m$ is the Lebesgue measure on $N_0$. Clearly, $\mu_{SRB}^0$-almost all Lyapunov exponents are non-zero. By these facts  and statement (2) of Proposition \ref{p1}, $A_0$ is a chaotic saddle which verifies the second statement of $(a)$.

Let $\nu_0^1$ be the invariant dirac measure supported on the fixed point at $(0,1)$. Using (\ref{Lambd1}), $\Lambda_{\nu_0^1}=\log(dg_{1,\beta_1}(0))$ which is negative by condition $(I2)$. By this fact,
$\lambda_{min}^1\leq \Lambda_{\mu_0^1} < 0$.
Also, by (\ref{3}), for $\beta_2< \zeta(\beta_1)$, $\Lambda_{SRB}^1>0$.
By these facts, if $\beta_2> \zeta(\beta_1)$, then $\lambda_{min}^1<0<\Lambda_{SRB}^1$. Since $\mu_{SRB}^1$ is equivalent to Lebesgue and whose support is $A_1$,
 $m(\bigcup_{\mu \neq \mu_{SRB}^1}G_\mu)=0$. By these facts  and statement (2) of Proposition \ref{p1}, $A_1$ is a chaotic saddle which verifies the second statement of $(b)$.
\end{proof}
Let
\begin{equation}\label{region2}
 \Gamma_G:=\{(\beta_1,\beta_2) : \beta_1, \beta_2 \in (0,1), \quad \zeta(\beta_2)<\beta_2 <\xi(\beta_1)\}.
\end{equation}
Then, $\Gamma_S \subset \Gamma_G$, and hence, $\Gamma_G$ is a nonempty open region in $\beta_1 \beta_2$-plane.
\begin{corollary}\label{cor2}
For each $(\beta_1,\beta_2)\in  \Gamma_G$, the both invariant sets $A_0$ and $A_1$ are Milnor attractors and have locally riddled basins.
\end{corollary}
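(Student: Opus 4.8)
The plan is to read off Corollary \ref{cor2} directly from Theorem \ref{thm2}, the only genuine content being that the hypotheses of its parts $(a)(i)$ and $(b)(i)$ can be satisfied simultaneously. First I would fix an arbitrary $(\beta_1,\beta_2)\in\Gamma_G$. By the definition (\ref{region2}) this means $\zeta(\beta_1)<\beta_2<\xi(\beta_1)$, so both threshold functions $\xi$ and $\zeta$ are defined at $\beta_1$; by the construction in Theorem \ref{thm2} this forces $\beta_1^\ell<\beta_1<\beta_1^r$, so that $\beta_1$ lies in the common domain on which the two smooth curves coexist.

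Next I would invoke the two relevant clauses. The inequality $\beta_2<\xi(\beta_1)$ together with $\beta_1<\beta_1^r$ is exactly the hypothesis of statement $(a)(i)$, which yields that $A_0$ is a Milnor (essential) attractor with a locally riddled basin; via (\ref{1}) this is the regime $\Lambda_{SRB}^0<0<\Lambda_{max}^0$ with $G_{\alpha^0}$ dense. Symmetrically, $\beta_2>\zeta(\beta_1)$ together with $\beta_1>\beta_1^\ell$ is the hypothesis of statement $(b)(i)$, giving that $A_1$ is a Milnor (essential) attractor with a locally riddled basin, corresponding by (\ref{3}) to $\Lambda_{SRB}^1<0<\Lambda_{max}^1$. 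Since the two inequalities $\beta_2<\xi(\beta_1)$ and $\beta_2>\zeta(\beta_1)$ hold at once on $\Gamma_G$, both conclusions hold for the same parameter pair.

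Finally I would confirm that $\Gamma_G$ is nonempty and open, so that the statement is not vacuous. For this I would check the inclusion $\Gamma_S\subset\Gamma_G$: on $\Gamma_S$ the curves $\zeta$ and $\xi$ reduce to the explicit expressions $2-\tfrac{1}{\beta_1}$ and $\tfrac{1}{1-\beta_1}-1$ of Theorem \ref{th1}, and the band in (\ref{region1}) is precisely $\zeta(\beta_1)<\beta_2<\xi(\beta_1)$. Because $\Gamma_S$ already contains $\Gamma_{\frac{1}{2}}$ (see (\ref{ff})) and is thus a nonempty open region, the same follows for $\Gamma_G$.

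The main obstacle is not any single computation but the \emph{simultaneity}: one must be sure that the parameter windows for $\Lambda_{SRB}^0<0$ and for $\Lambda_{SRB}^1<0$ genuinely overlap, i.e. that $\beta_1^\ell<\beta_1^r$ and that the open band $\zeta(\beta_1)<\beta_2<\xi(\beta_1)$ is nonempty for such $\beta_1$. The inclusion $\Gamma_S\subset\Gamma_G$ is exactly what certifies this overlap; without it, $\Gamma_G$ could in principle be empty and the corollary vacuous.
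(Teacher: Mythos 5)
Your proposal is correct and follows essentially the same route as the paper, which states Corollary \ref{cor2} as an immediate consequence of parts $(a)(i)$ and $(b)(i)$ of Theorem \ref{thm2} applied simultaneously on the band $\zeta(\beta_1)<\beta_2<\xi(\beta_1)$, with non-vacuousness certified by the inclusion $\Gamma_S\subset\Gamma_G$. Your added attention to the overlap of the parameter windows (and your reading of the paper's $\zeta(\beta_2)$ in (\ref{region2}) as the intended $\zeta(\beta_1)$) only makes explicit what the paper leaves implicit.
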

\section{\textbf{Random walk model}}
In this section, we introduce a random walk model which is topologically semi-conjugate to
the skew product system $F_{\beta_1,\beta_2}$ given by (\ref{ss}). This allows us to define a fractal boundary between the initial conditions leading to each of the two attractors $A_1$ and $A_2$.
\begin{definition}
Let $\mathcal{S}$ be the set of step skew product systems of the form
\begin{equation}\label{s}
G_{\beta_1,\beta_2}^+:\Sigma_2^+ \times \mathbb{I} \to \Sigma_2^+ \times \mathbb{I}, \ \  G_{\beta_1,\beta_2}^+(\omega,y)=(\sigma \omega, g_{\omega_0,\beta_{\omega_0}}(y)),
\end{equation}
where $\mathbb{I}=[0,1]$, $\beta_1,\beta_2 \in (0,1)$ such that the fiber maps $g_{1,\beta_1}$ and $g_{2,\beta_2}$ are strictly increasing $C^2$ diffeomorphisms given by (\ref{fiber})
satisfying conditions $(I1) - (I3)$.
\end{definition}
We will pick the diffeomorphisms $g_{1,\beta_1}$ and $g_{2,\beta_2}$ randomly, independently at each iterate, with
positive probabilities $p_1$ and $p_2 = 1-p_1$. This corresponds to taking a Bernoulli
measure on $\Sigma_2^+$ from which we pick $\omega$.
The obtained random compositions
\begin{equation}\label{com}
g_{\omega, \beta_1,\beta_2}^n(y)=g_{\omega_{n-1},\beta_{\omega_{n-1}}} \circ \dots \circ g_{\omega_0,\beta_{\omega_0}}(y), \quad \text{for} \ n\geq 1, \quad g_{\omega, \beta_1,\beta_2}^0(y)=y
\end{equation}
form a random walk on the interval.

We define the \emph{standard measure} $s$ on $\Sigma_2^+ \times \mathbb{I}$ by the product of Bernoulli
measure $\mathbb{P}^+$ and the Lebesgue measure on $\mathbb{I}$.

Given a skew product $G_{\beta_1,\beta_2}^+ \in \mathcal{S}$, the \emph{normal Lyapunov exponent} of $G_{\beta_1,\beta_2}^+$ at a point $(\omega, y) \in \Sigma_2^+ \times \mathbb{I}$ is
\begin{equation}\label{L}
 \lim_{n \to \infty}\frac{1}{n} \ln (g^{\prime}_{\omega_{n-1},\beta_{\omega_{n-1}}}(g_{\omega,\beta_1,\beta_2}^{n-1}(y))\dots g_{\omega_0,\beta_{\omega_0}}^{\prime}(y))
 =\lim_{n \to \infty}\frac{1}{n}\sum_{i=0}^{n-1}\ln (g_{\omega_i,\beta_{\omega_i}}^{\prime}(g_{\omega,\beta_1,\beta_2}^i(y))),
\end{equation}
in case the limit exists. Since $y = 0, 1$ are fixed points of $g_{i,\beta_i}$, $i = 1, 2$, by Birkhoff's
ergodic theorem, we obtain for $y = 0, 1$ that
\begin{equation}\label{FL}
 L_{\beta_1,\beta_2}(y)=\lim_{n \to \infty}\frac{1}{n}\sum_{i=0}^{n-1}\ln (g_{\omega_i,\beta_{\omega_i}}^{\prime}(g_{\omega,\beta_1,\beta_2}^i(y)))=\int_{\Sigma_2^+}\ln(g_{\omega_0,\beta_{\omega_0}}^{\prime}(y))d\nu^+(\omega)=\sum_{i=1}^2p_i \ln g_{i,\beta_i}^{\prime}(y)
\end{equation}
for $\mathbb{P}^+$-almost all $\omega \in \Sigma_2^+$.

The subspaces
\begin{equation*}
 \mathbb{A}_0:=\Sigma_2^+ \times \{0\}, \ \ \text{and} \ \ \mathbb{A}_1:=\Sigma_2^+ \times \{1\}
\end{equation*}
are invariants by $G^+_{\beta_1,\beta_2}$, for each $\beta_1,\beta_2 \in (0,1)$.
The basins are
\begin{equation*}
  \mathcal{B}_{\beta_1,\beta_2}(\mathbb{A}_0) := \{(\omega, y) : d((G_{\beta_1,\beta_2}^+)^n(\omega, y),\mathbb{A}_0)\to 0 \ \text{as} \ n \to \infty \},
\end{equation*}
\begin{equation*}
  \mathcal{B}_{\beta_1,\beta_2}(\mathbb{A}_1) := \{(\omega, y) : d((G_{\beta_1,\beta_2}^+)^n(\omega, y),\mathbb{A}_1) \to 0 \ \text{as} \ n \to \infty \},
\end{equation*}
where $d(y,A) = \inf_{z \in A} \|y-z\|$, for each subset $A \subset \mathbb{I}$.

First, we consider a specific example of a step skew product system from $\mathcal{S}$ with the fiber maps
\begin{equation}\label{diff}
g_{1,\beta_1}(y)=y+y(1-y)(y-\beta_1), \ \ g_{2,1/2}(y)=y-y(1-y)(y-1/2),
\end{equation}
with $\beta_2=\frac{1}{2}$ and illustrate time series of random walks for some values of $\beta_1$.
Using (\ref{FL}), we get
\begin{equation}\label{LB}
L_{\beta_1,1/2}(0)=1/2 \ln (1-\beta_1)+1/2 /ln 3/2, \  \text{and} \ L_{\beta_1,1/2}(1)=1/2 \ln \beta_1+1/2 /ln 3/2.
\end{equation}
Note that $L_{\beta_1, 1/2}(0)<0$ and $L_{\beta_1, 1/2}(1)<0$ for each $1/3 < \beta_1 < 2/3$.

\begin{figure}[h!]
\begin{center}
    $\begin{array}{cc}
  \includegraphics[scale=0.4]{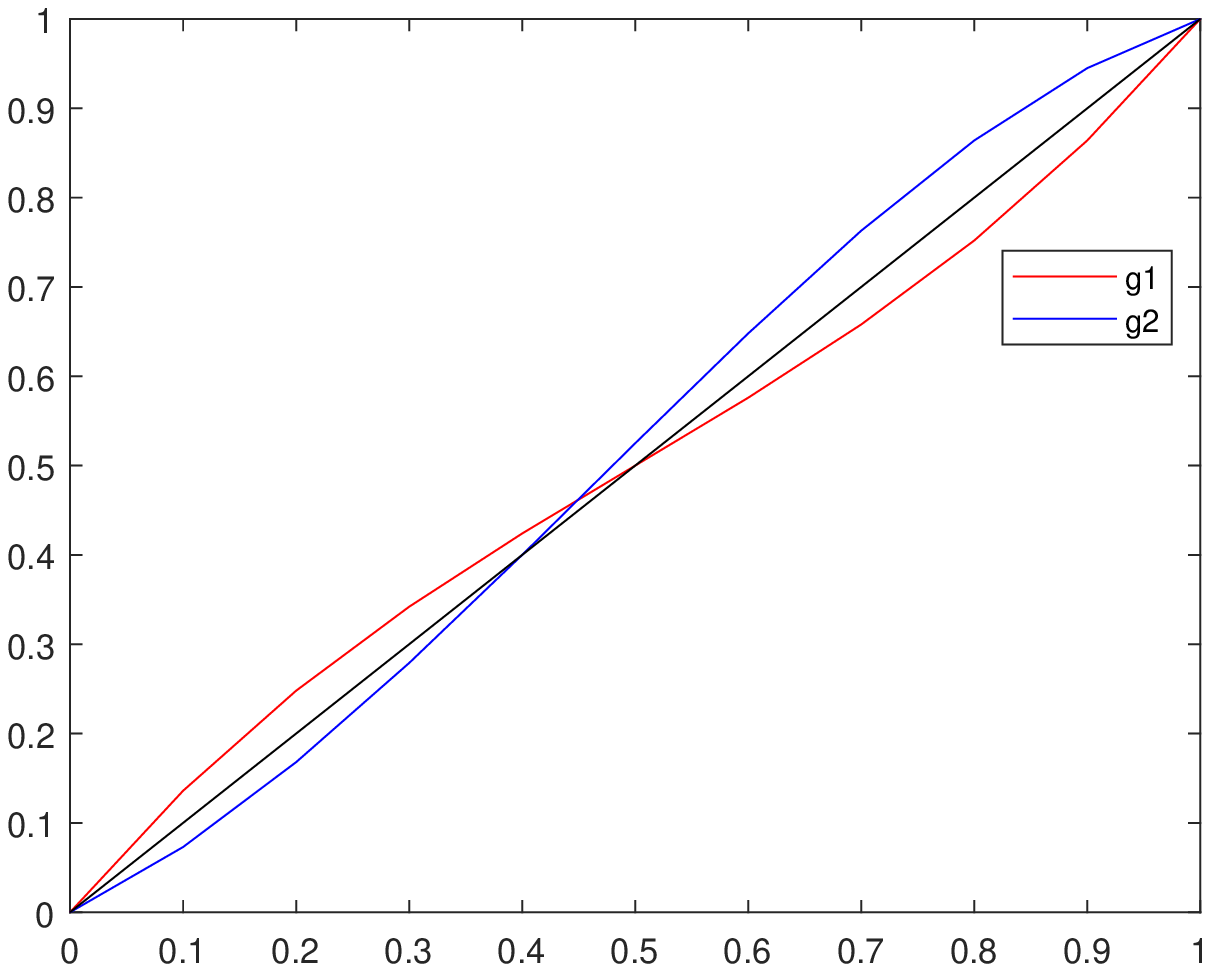}
  \includegraphics[scale=0.4]{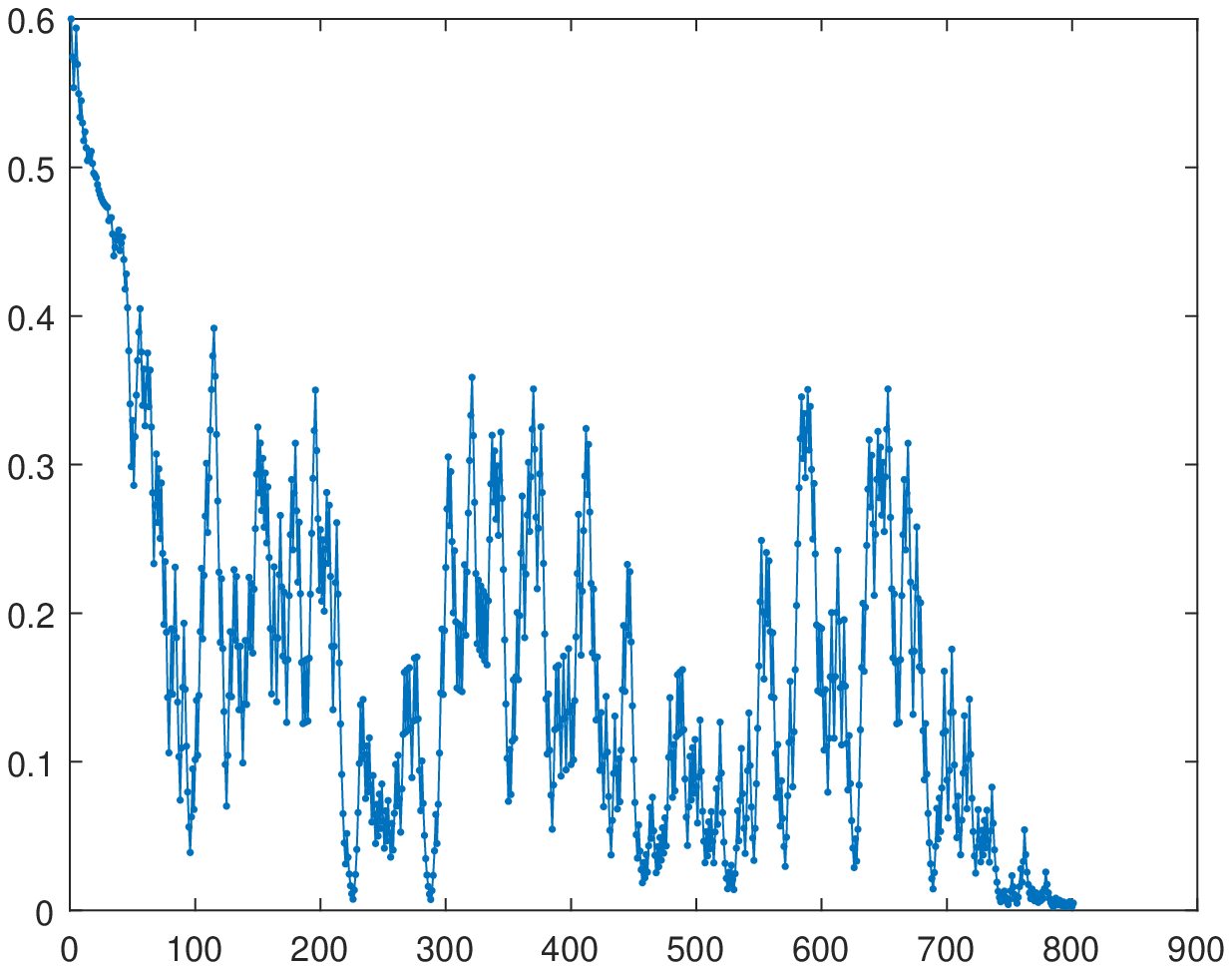}
  \end{array}$
\end{center}
  \caption{\small The left frame depicts the graphs of $g_{1,\beta_1}$, $g_{2,1/2}$, for $\beta_1=2/5$.
The right frame shows a time series of the random walk generated by
$g_{1,\beta_1}$, $g_{2,1/2}$, for $\beta_1=2/5$, both picked with probability $1/2$.}
 \label{fig:1}
   \end{figure}

 \begin{figure}[h!]
\begin{center}
    $\begin{array}{cc}
  \includegraphics[scale=0.4]{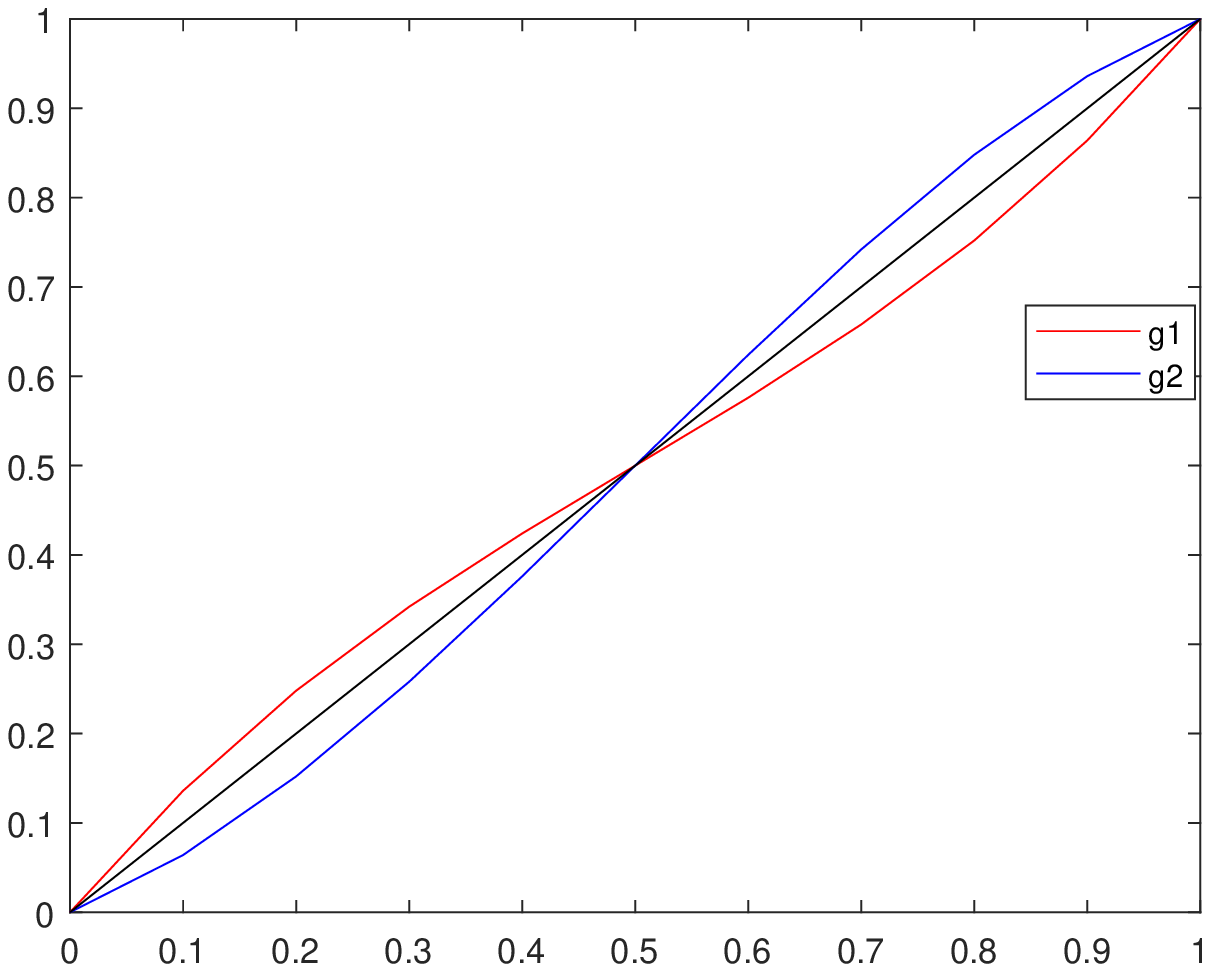}
  \includegraphics[scale=0.4]{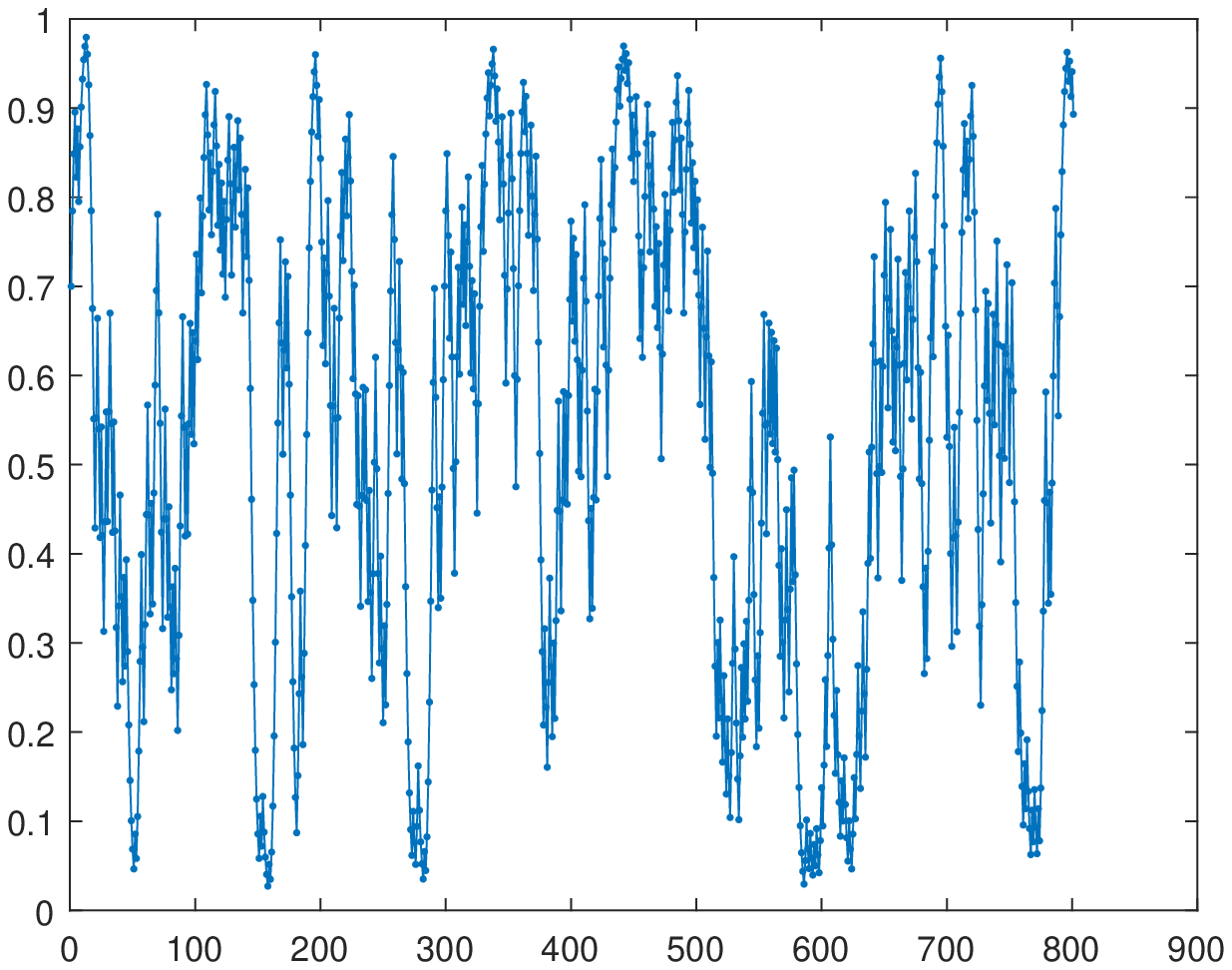}
  \end{array}$
\end{center}
  \caption{\small The left frame depicts the graphs of $g_{1,\beta_1}$, $g_{2,1/2}$, for $\beta_1=1/2$.
The right frame shows a time series of the random walk generated by
$g_{1,\beta_1}$, $g_{2,1/2}$, for $\beta_1=1/2$, both picked with probability $1/2$.}
 \label{fig:2}
   \end{figure}

 \begin{figure}[h!]
\begin{center}
    $\begin{array}{cc}
  \includegraphics[scale=0.4]{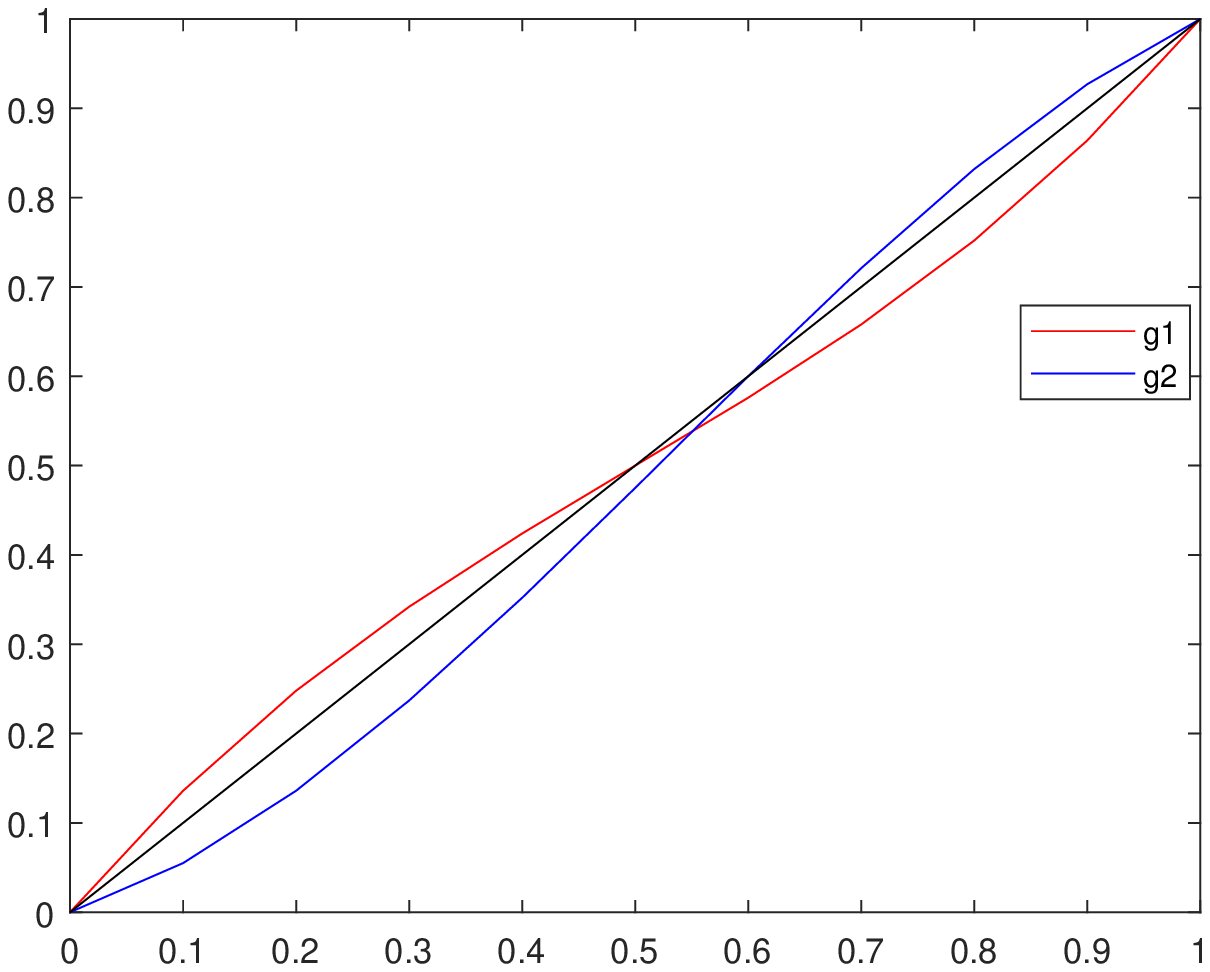}
  \includegraphics[scale=0.4]{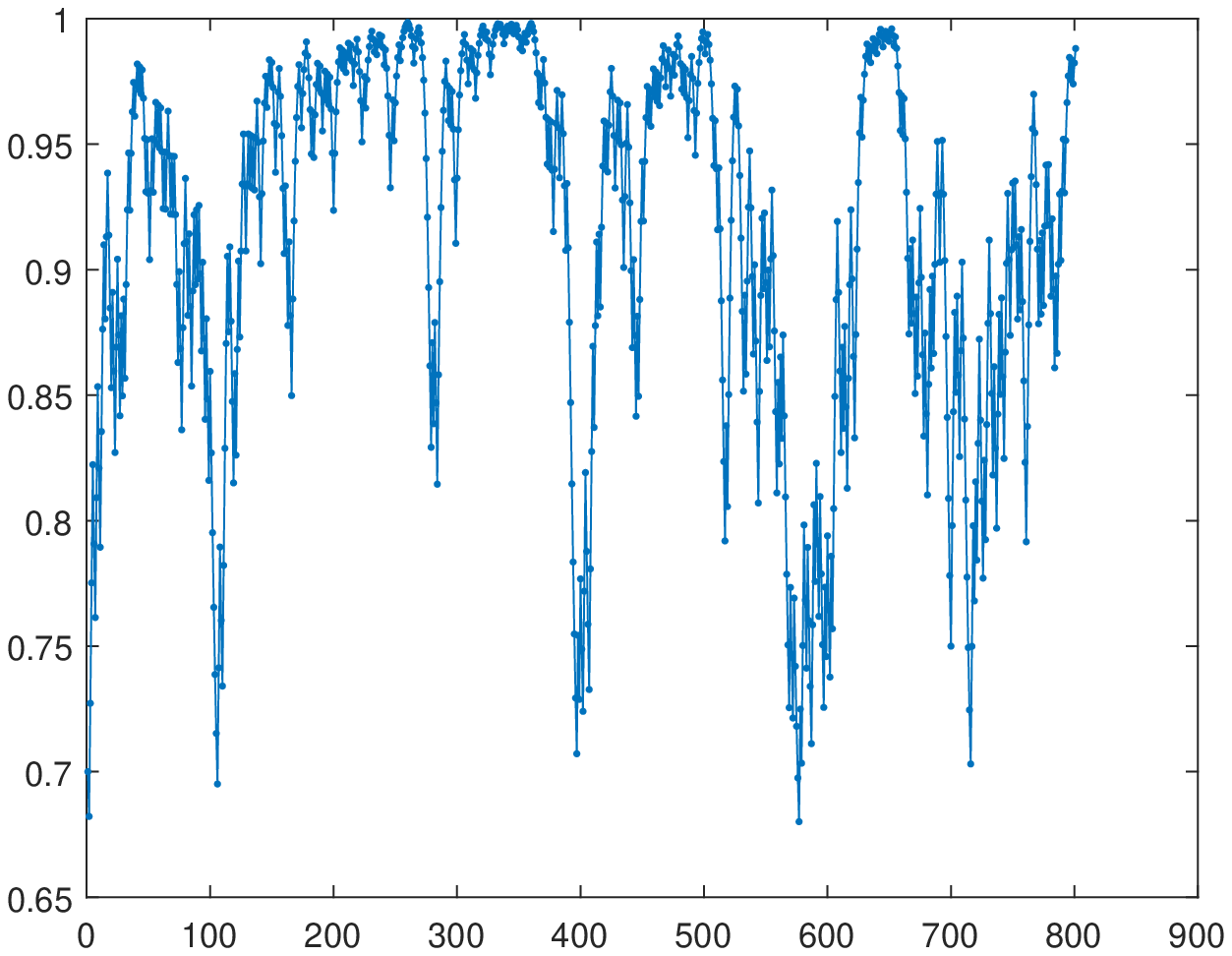}
  \end{array}$
\end{center}
  \caption{\small The left frame depicts the graphs of $g_{1,\beta_1}$, $g_{2,1/2}$, for $\beta_1=3/5$.
The right frame shows a time series of the random walk generated by
$g_{1,\beta_1}$, $g_{2,1/2}$, for $\beta_1=3/5$, both picked with probability $1/2$.}
 \label{fig:3}
   \end{figure}

The graphs of $g_{1,\beta_1}$, $g_{2,1/2}$ are illustrated in the left frames of Fig. \ref{fig:1}, Fig. \ref{fig:2} and Fig. \ref{fig:3}, for parameter values $\beta_1=2/5, 1/2, 3/5$, respectively.
The right panels of these figures show time series of the random walk generated by
$g_{1,\beta_1}$, $g_{2,1/2}$, for $\beta_1=2/5,1/2,2/3$, both picked with probability $1/2$.
For these values of $\beta_1$, the normal Lyapunov exponents at $0$ and $1$ are negative.
\begin{theorem}\label{t1}
Let $G^+_{\beta_1,\beta_2} \in \mathcal{S}$ be a step skew product whose fiber maps defined by (\ref{fiber1}) and $(\beta_1,\beta_2)\in \Gamma_S$, where $\Gamma_S$ is given by (\ref{region1}).
 Then, the sets $\mathbb{A}_0$ and $\mathbb{A}_1$ are Milnor attractors for $G^+_{\beta_1,\beta_2}$.
They attract sets of positive standard measure and the union of the basins $\mathcal{B}_{\beta_1,\beta_2}(\mathbb{A}_i)$, $i=0,1$, has full standard measure.
\end{theorem}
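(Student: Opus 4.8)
The plan is to reduce the whole statement to the two boundary normal Lyapunov exponents $L_{\beta_1,\beta_2}(0)$ and $L_{\beta_1,\beta_2}(1)$ from (\ref{FL}) and to exploit the monotone (order-preserving) structure of the random walk. First I would record that on $\Gamma_S$ both exponents are negative. By (\ref{FL}) one has $L_{\beta_1,\beta_2}(0)=\tfrac12\ln(1-\beta_1)+\tfrac12\ln(1+\beta_2)$ and $L_{\beta_1,\beta_2}(1)=\tfrac12\ln\beta_1+\tfrac12\ln(2-\beta_2)$, and the two defining inequalities $\beta_2<\frac{1}{1-\beta_1}-1$ and $\beta_2>2-\frac1{\beta_1}$ of (\ref{region1}) are exactly equivalent to $(1-\beta_1)(1+\beta_2)<1$ and $\beta_1(2-\beta_2)<1$, i.e.\ to $L_{\beta_1,\beta_2}(0)<0$ and $L_{\beta_1,\beta_2}(1)<0$ respectively. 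Thus on $\Gamma_S$ both endpoints attract on average.

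Next, for the Milnor-attractor and positive-measure claims, I would construct local random stable sets at each endpoint. Since $0$ is a common fixed point of $g_{1,\beta_1}$ and $g_{2,\beta_2}$, Birkhoff's theorem gives $\frac1n\sum_{i=0}^{n-1}\ln g_{\omega_i,\beta_{\omega_i}}'(0)\to L_{\beta_1,\beta_2}(0)<0$ for $\mathbb{P}^+$-a.e.\ $\omega$, so $(g_{\omega,\beta_1,\beta_2}^{n})'(0)\to 0$ exponentially. A standard contraction-on-average argument, controlling the $C^2$ nonlinearity of the fiber maps near $0$, then yields for a positive-$\mathbb{P}^+$-measure set of $\omega$ a radius $r(\omega)>0$ with $g_{\omega,\beta_1,\beta_2}^{n}(y)\to 0$ for all $y\in[0,r(\omega))$. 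By Fubini this local stable set has positive standard measure, so $\mathcal{B}_{\beta_1,\beta_2}(\mathbb{A}_0)$ has positive $s$-measure; the identical argument at $1$ handles $\mathbb{A}_1$. Combined with transitivity of the boundary dynamics, this shows $\mathbb{A}_0,\mathbb{A}_1$ are Milnor attractors attracting sets of positive standard measure.

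The substantial part is the full-measure statement $s(\mathcal{B}_{\beta_1,\beta_2}(\mathbb{A}_0)\cup\mathcal{B}_{\beta_1,\beta_2}(\mathbb{A}_1))=1$. Here I would use that every $g_{i,\beta_i}$ is an increasing diffeomorphism of $\mathbb{I}$ fixing $0$ and $1$, so each composition $g_{\omega,\beta_1,\beta_2}^{n}$ is increasing; hence in each fiber $\{\omega\}\times\mathbb{I}$ the basin of $0$ is an interval $[0,c(\omega))$ and the basin of $1$ is an interval $(d(\omega),1]$ with $c(\omega)\le d(\omega)$, and it suffices to prove that the undecided interval $J(\omega)=[c(\omega),d(\omega)]$ has zero Lebesgue length for $\mathbb{P}^+$-a.e.\ $\omega$. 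The set $\bigcup_\omega\{\omega\}\times J(\omega)$ is forward invariant and consists precisely of the points whose orbits never approach $\{0,1\}$; equivalently, of those points whose empirical measures $\frac1n\sum_{i=0}^{n-1}\delta_{g_{\omega,\beta_1,\beta_2}^{i}(y)}$ accumulate on stationary measures charging the open interior $(0,1)$. I would therefore classify the ergodic stationary measures of $G_{\beta_1,\beta_2}^+$: the atoms $\delta_0,\delta_1$ are attracting because $L_{\beta_1,\beta_2}(0),L_{\beta_1,\beta_2}(1)<0$, while any stationary measure supported in $(0,1)$ is transverse-repelling and hence has an $s$-null basin. Feeding this back through the convergence of empirical measures, and using the local attraction of the previous step to upgrade ``spends asymptotically all its time near $\{0,1\}$'' into genuine convergence, gives $s(\bigcup_\omega\{\omega\}\times J(\omega))=0$ and the claim.

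The hard part will be this last step: ruling out a positive-measure trapped interior region. The mechanism is the competition between $g_{1,\beta_1}$, which repels from its interior fixed point $\beta_1$ toward the endpoints, and $g_{2,\beta_2}$, which attracts toward its interior fixed point $\beta_2$; the negativity of both $L_{\beta_1,\beta_2}(0)$ and $L_{\beta_1,\beta_2}(1)$ is exactly what tips this competition so that the endpoints win for $s$-typical orbits. Making this quantitative, namely showing that the Lebesgue length of $J(\omega)$ contracts to $0$ along $\sigma$-orbits, or equivalently that no non-atomic interior stationary measure can attract a positive-Lebesgue set of fibers, is the crux, and I expect to invoke the monotone random-dynamical-systems / random interval diffeomorphism theory in the spirit of \cite{Go} to finish it.
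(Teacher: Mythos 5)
Your handling of the first two assertions matches the paper. The algebraic identification of $\Gamma_S$ in (\ref{region1}) with the region where $L_{\beta_1,\beta_2}(0)<0$ and $L_{\beta_1,\beta_2}(1)<0$ via (\ref{FL}), and the derivation of positive standard measure for $\mathcal{B}_{\beta_1,\beta_2}(\mathbb{A}_0)$ and $\mathcal{B}_{\beta_1,\beta_2}(\mathbb{A}_1)$ from almost surely positive local stable radii at the endpoints (Birkhoff plus contraction on average, then Fubini), are exactly the paper's route; the paper packages the latter as a counterpart of Lemma~3.1 of \cite{GH}, asserting $r_{\beta_1,\beta_2}(\omega)>0$ and $s_{\beta_1,\beta_2}(\omega)<1$ for $\mathbb{P}^+$-a.e.\ $\omega$.

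The gap is in the full-measure statement, precisely the step you flag as the crux and leave to ``monotone random-dynamical-systems theory.'' The paper's missing idea is to invert the system: it forms the skew product $H^+_{\beta_1,\beta_2}$ with fiber maps $h_{i,\beta_i}=g_{i,\beta_i}^{-1}$, for which both endpoints become repelling, imports from \cite{GH} (Lemmas~3.2--3.4) the existence and uniqueness of an ergodic stationary measure $m$ with $m(\{0\}\cup\{1\})=0$ together with the synchronization statement that the fiber conditional measures of the associated two-sided invariant measure are Dirac masses $\delta_{\psi_{\beta_1,\beta_2}(\omega)}$, and then uses monotonicity to conclude that in each fiber every $y<\psi_{\beta_1,\beta_2}(\omega)$ tends to $0$ and every $y>\psi_{\beta_1,\beta_2}(\omega)$ tends to $1$; the undecided set is a single point per fiber, hence standard-measure zero. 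Your substitute --- classify the ergodic stationary measures of the forward system $G^+_{\beta_1,\beta_2}$ and assert that any one supported in $(0,1)$ is transverse-repelling with $s$-null basin --- is not proved, and it cannot be proved in the stated generality: the diagonal $\beta_1=\beta_2=\beta$ lies entirely in $\Gamma_S$, and there $\delta_\beta$ is an interior ergodic stationary measure with \emph{negative} fiber exponent whose basin has positive standard measure (this is exactly the third Milnor attractor $\mathbb{A}_\beta$ of Theorem~\ref{t2}). So your key intermediate claim is false for some parameters covered by the theorem, and any completion of your argument must either exclude such parameters or switch to the inverse-system/synchronization mechanism. Even away from the diagonal, two further steps you elide are genuinely nontrivial: that a repelling interior stationary measure attracts only an $s$-null set of initial conditions, and that an orbit whose empirical measures accumulate only on convex combinations $p\delta_0+(1-p)\delta_1$ with $0<p<1$ cannot form a positive-measure set (such an orbit oscillates and belongs to neither basin, so ``time averages concentrate on $\{0,1\}$'' does not by itself yield convergence to one endpoint).
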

\begin{proof}
By (\ref{FL}) and the definition of $\Gamma_S$, $L_{\beta_1,\beta_2}(0)<0$ and $L_{\beta_1,\beta_2}(1)<0$.

\begin{lemma}
Let $(\beta_1, \beta_2)\in \Gamma_S$ and take
\begin{equation*}
  r_{\beta_1,\beta_2}(\omega)=\sup \{y \in \mathbb{I}: \lim_{n \to \infty} g_{\omega,\beta_1,\beta_2}^n(y)=0\}, \quad s_{\beta_1,\beta_2}(\omega)=\inf \{y \in \mathbb{I}: \lim_{n \to \infty} g_{\omega,\beta_1,\beta_2}^n(y)=1\},
\end{equation*}
where $g_{\omega,\beta_1,\beta_2}^n(y)$ is given by (\ref{com}). Then $r_{\beta_1,\beta_2}(\omega)>0$ and $s_{\beta_1,\beta_2}(\omega)<1$ for $\mathbb{P}^+$ almost all $\omega \in \Sigma_2^+$.
\end{lemma}
\begin{proof}
The lemma is a counterpart of \cite[Lemma 3.1]{GH} to our setting.
In the proof of \cite[Lemma 3.1]{GH}, the authors only uses the fact that the normal Lyapunov exponents at the point 0 and 1 are negative and applying Birkhoff's ergodic theorem.
Since $L_{\beta_1,\beta_2}(0)<0$ and $L_{\beta_1,\beta_2}(1)<0$, the lemma is proved.
\end{proof}

Since, by the previous lemma, for each $(\beta_1, \beta_2)\in \Gamma_S$, the function $r_{\beta_1,\beta_2}$ is positive almost everywhere, thus,
 the basin $\mathcal{B}_{\beta_1,\beta_2}(\mathbb{A}_0)$ has
positive standard measure. Since, the function $s_{\beta_1,\beta_2}$ is less than 1 almost everywhere, the same holds for the basin of $\mathcal{B}_{\beta_1,\beta_2}(\mathbb{A}_1)$.
By these facts, the basins $\mathcal{B}_{\beta_1,\beta_2}(\mathbb{A}_i)$ of $G_{\beta_1,\beta_2}^+$-invariant sets $\mathbb{A}_i$, $i=0,1$, have positive standard measures.
In particular, they are Milnor attractors. This proves the first statement of the theorem.

Write $\mathcal{P}$ for the space of probability measures on $\mathbb{I}$, equipped with the weak
star topology and define $\mathcal{T} : \mathcal{P} \to \mathcal{P}$ by
\begin{equation*}
  \mathcal{T}m=\sum_{i=1}^2 p_i g_{i,\beta_i} m.
\end{equation*}
We recall that a stationary measure is a fixed point of $\mathcal{T}$.
Note that a probability measure $m$ is a stationary measure if and only if $\mu^+=\mathbb{P}^+ \times m$
is an invariant measure of $G_{\beta_1,\beta_2}^+$ with marginal $\mathbb{P}^+$ on $G_{\beta_1,\beta_2}^+$, see \cite[Lemma A.2]{GH}.
We say that $m$ is ergodic if $\mathbb{P}^+ \times m$ is ergodic for $G_{\beta_1,\beta_2}^+$.

To proceed proving Theorem \ref{t1}, we prove
that there is a boundary between basins $\mathcal{B}_{\beta_1,\beta_2}(\mathbb{A}_0)$ and $\mathcal{B}_{\beta_1,\beta_2}(\mathbb{A}_1)$.
Indeed, we show that there is an invariant measurable graph $\psi_{\beta_1,\beta_2} : \Sigma_2 \to \mathbb{I}$ that separates the basins: for $\mathbb{P}$-almost all $\omega$,
\begin{align}\label{xi}
\lim_{n \to \infty}g_{\omega,\beta_1,\beta_2}^n(y) =\left\{\begin{array}{cc}
0 & $ if$ \quad y< \psi_{\beta_1,\beta_2}(\omega),\\
1 & $ if$ \quad  y >\psi_{\beta_1,\beta_2}(\omega)
\end{array}\right.
\end{align}
where $g_{\omega,\beta_1,\beta_2}^n(y)$ is given by (\ref{com}). To prove the theorem, we closely follow \cite[Theorem 3.1]{GH} and omit some details.
Let $H_{\beta_1,\beta_2}^+ $ be the skew product map whose fiber maps are defined by $h_{i,\beta_i}=g_{i,\beta_i}^{-1}$. Then, it has positive normal Lyapunov exponents along $\mathbb{A}_0$ and $\mathbb{A}_1$.
This means that $L_{\beta_1,\beta_2}(0)>0$ and $L_{\beta_1,\beta_2}(1)>0$ for the skew product map $H_{\beta_1,\beta_2}^+ $.

\begin{lemma}\label{99}
For the skew product $H_{\beta_1,\beta_2}^+ $ defined above, there exists
an ergodic stationary measure $m$ with $m(\{0\} \cup \{1\}) = 0$.
\end{lemma}
\begin{proof}
The lemma is just a reformulation of \cite[Lemma 3.2]{GH} in our context. Note that to prove \cite[Lemma 3.2]{GH} the authors use only the fact that
the normal Lyapunov exponents at both fixed points $0$ and $1$ are positive. So the result holds for our setting.
\end{proof}
Take the extension skew product $H_{\beta_1,\beta_2}$ of $H_{\beta_1,\beta_2}^+$ given by (\ref{s1}).
Note that the stationary measure $m$ gives an invariant measure $\mu_m$ for the extension skew product
system $H_{\beta_1,\beta_2}$ (see \cite{Ar}). Its conditional measures on fibers $\{\omega\} \times \mathbb{I}$
are denoted by $\mu_{m,\omega}$ (see (\ref{co1}) and (\ref{co2})).
By \cite[Lemma 3.3]{GH}\label{lem1}, the conditional
measure $\mu_{m,\omega}$ of $\mu_m$ is a $\delta$-measure for $\mathbb{P}$-almost every $\omega \in \Sigma_2$.

By \cite[Lemma 3.4]{GH}, the stationary measure $m$ obtained by Lemma \ref{99} is unique. Thus, for each $(\beta_1,\beta_2) \in \Gamma_S$, there is a unique stationary measure $m_{\beta_1,\beta_2}$
 with $m_{\beta_1,\beta_2}(\{0\} \cup \{1\}) = 0$.
So, by these facts, there exists a measurable function $\psi_{\beta_1,\beta_2} : \Sigma_2 \to \mathbb{I}$ such that for $\mathbb{P}$-almost all $\omega$,
\begin{equation*}
 \lim_{n \to \infty} h^n_{\sigma^{-n}\omega,\beta_1,\beta_2} m_{\beta_1,\beta_2}=\delta_{\psi_{\beta_1,\beta_2}(\omega)},
\end{equation*}
where $h^n_{\sigma^{-n}\omega,\beta_1,\beta_2}$ is given by (\ref{com}).
Note that a function increasing if and only if its inverse is increasing. As the convex hull of the support of $m_{\beta_1,\beta_2}$ equals $\mathbb{I}$ and
 $h_{1,\beta_1}$, $h_{2,\beta_2}$ are increasing, this implies that for every $y \in (0, 1)$
\begin{equation*}
 \lim_{n \to \infty}h^n_{\sigma^{-n}\omega,\beta_1,\beta_2}(y)=\psi_{\beta_1,\beta_2}(\omega).
\end{equation*}
By the fact that $h_{1,\beta_1}$, $h_{2,\beta_2}$ are increasing,
we get
\begin{equation*}
 \lim_{n \to \infty}(h^n_{\sigma^{-n}\omega,\beta_1,\beta_2})^{-1}(y)=1
\end{equation*}
if $y> \psi_{\beta_1,\beta_2}(\omega)$ and
\begin{equation*}
 \lim_{n \to \infty}(h^n_{\sigma^{-n}\omega,\beta_1,\beta_2})^{-1}(y)=0
\end{equation*}
if $y < \psi_{\beta_1,\beta_2}(\omega)$.
Thus,
\begin{equation*}
 \lim_{n \to \infty}g^n_{\omega,\beta_1,\beta_2}(y)=1
\end{equation*}
if $y> \psi_{\beta_1,\beta_2}(\omega)$ and
\begin{equation*}
 \lim_{n \to \infty}g^n_{\omega,\beta_1,\beta_2}(y)=0
\end{equation*}
if $y < \psi_{\beta_1,\beta_2}(\omega)$.

This observation shows that the union
of the basins of attraction of $\mathbb{A}_0$ and $\mathbb{A}_1$ has full standard measure and hence, Theorem \ref{t1} is proved.
\end{proof}
The following result describes intermingled basins for step skew product systems $G^+_{\beta_1,\beta_2} \in \mathcal{S}$.
\begin{theorem}\label{t2}
Let $G^+_{\beta_1,\beta_2} \in \mathcal{S}$ be a step skew product whose fiber maps defined by (\ref{fiber1}) and $(\beta_1,\beta_2)\in \Gamma_S$, where $\Gamma_S$ is given by (\ref{region1}).
Let $\beta_1=\beta_2=\beta$. Then, the invariant subset $\mathbb{A}_\beta=\Sigma_2^+ \times \{\beta\}$ is a Milnor attractor
 and the basins of $\mathbb{A}_i$ and $\mathbb{A}_\beta$ are intermingled, for each $i=0,1$.
\end{theorem}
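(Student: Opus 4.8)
The plan is to exploit the single structural novelty of the diagonal $\beta_1=\beta_2=\beta$: the two fiber maps now share the interior fixed point $\beta$, since $g_{1,\beta}(\beta)=g_{2,\beta}(\beta)=\beta$, so $\mathbb{A}_\beta=\Sigma_2^+\times\{\beta\}$ is invariant. First I would record the crucial monotonicity observation: because $g_{1,\beta}$ and $g_{2,\beta}$ are increasing $C^2$ diffeomorphisms of $\mathbb{I}$ fixing the three points $0,\beta,1$, each of them — and hence every random composition $g^n_{\omega,\beta,\beta}$ of (\ref{com}) — maps $[0,\beta]$ and $[\beta,1]$ bijectively onto themselves. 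Thus the fibered dynamics splits into two invariant strips: an orbit starting below $\beta$ stays below $\beta$ and can only be attracted to $\mathbb{A}_0$ or $\mathbb{A}_\beta$, while an orbit starting above $\beta$ can only be attracted to $\mathbb{A}_\beta$ or $\mathbb{A}_1$. I would then compute the normal exponent at $\beta$ from (\ref{FL}); using $g_{1,\beta}'(\beta)=1+\beta(1-\beta)$ and $g_{2,\beta}'(\beta)=1-\beta(1-\beta)$ gives $L_{\beta,\beta}(\beta)=\tfrac12\ln\big(1-\beta^2(1-\beta)^2\big)<0$ for every $\beta\in(0,1)$. Together with $L_{\beta,\beta}(0)<0$ and $L_{\beta,\beta}(1)<0$ (which hold since $(\beta,\beta)\in\Gamma_S$, see (\ref{region1})), all three invariant fibers are normally attracting on average.

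Next I would apply the argument of Theorem \ref{t1} separately on each invariant strip $\Sigma_2^+\times[0,\beta]$ and $\Sigma_2^+\times[\beta,1]$. On $[0,\beta]$ the maps $g_{1,\beta},g_{2,\beta}$ are increasing diffeomorphisms with exactly the two endpoint fixed points $0$ and $\beta$, both of negative normal exponent; this is precisely the configuration handled by the lemmas imported from \cite{GH} in the proof of Theorem \ref{t1}, none of which uses an interior fixed point — they use only monotonicity and the sign of the exponents at the two competing endpoints (negative for $g_i$, hence positive for the inverse maps $h_i=g_i^{-1}$). Hence I obtain a measurable separating graph $\psi^-:\Sigma_2^+\to(0,\beta)$ with $g^n_{\omega,\beta,\beta}(y)\to 0$ for $y<\psi^-(\omega)$ and $\to\beta$ for $y>\psi^-(\omega)$, exactly as in (\ref{xi}); moreover the law of $\psi^-$ under $\mathbb{P}^+$ is the unique non-atomic ergodic stationary measure $m^-$ of the inverse maps, whose support has convex hull $[0,\beta]$. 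In particular $\mathcal{B}_{\beta,\beta}(\mathbb{A}_0)$ and the lower part of $\mathcal{B}_{\beta,\beta}(\mathbb{A}_\beta)$ both have positive standard measure, so $\mathbb{A}_\beta$ is a Milnor attractor; the symmetric analysis on $[\beta,1]$ yields a graph $\psi^+:\Sigma_2^+\to(\beta,1)$ and the analogous conclusions for $\mathbb{A}_1$ and $\mathbb{A}_\beta$.

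For the intermingling I would upgrade the statement ``convex hull of $\operatorname{supp} m^-$ is $[0,\beta]$'' to a statement conditioned on every cylinder. From the definition of the forward threshold one has the equivariance $\psi^-(\omega)=\big(g^k_{\omega,\beta,\beta}\big)^{-1}\big(\psi^-(\sigma^k\omega)\big)$, and since $\sigma^k\omega$ is $\mathbb{P}^+$-independent of the first $k$ symbols, the conditional law of $\psi^-$ on a cylinder $[\alpha_0,\dots,\alpha_{k-1}]$ is the push-forward of $m^-$ under the fixed increasing diffeomorphism $(g_{\alpha_{k-1},\beta}\circ\cdots\circ g_{\alpha_0,\beta})^{-1}$ of $[0,\beta]$, which again has support with convex hull $[0,\beta]$. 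Consequently, for every cylinder $C$ and all $0\le a<b\le\beta$ one gets $\mathbb{P}^+(\{\omega\in C:\psi^-(\omega)>a\})>0$ and $\mathbb{P}^+(\{\omega\in C:\psi^-(\omega)<b\})>0$. This forces both $\mathcal{B}_{\beta,\beta}(\mathbb{A}_0)=\{(\omega,y):y<\psi^-(\omega)\}$ and $\{(\omega,y):y>\psi^-(\omega)\}\subset\mathcal{B}_{\beta,\beta}(\mathbb{A}_\beta)$ to have positive standard measure inside every box $C\times(a,b)$; as these boxes generate the topology, every disk meeting $\Sigma_2^+\times(0,\beta)$ intersects both basins in positive standard measure, i.e. $\mathcal{B}(\mathbb{A}_0)$ and $\mathcal{B}(\mathbb{A}_\beta)$ are intermingled. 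Running the identical argument with $\psi^+$ on $[\beta,1]$ intermingles $\mathcal{B}(\mathbb{A}_\beta)$ with $\mathcal{B}(\mathbb{A}_1)$, settling both cases $i=0,1$.

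I expect the main obstacle to be the reduction step: verifying that the separating-graph machinery of Theorem \ref{t1} (the lemmas imported from \cite{GH}) genuinely applies to the two restricted strips, whose fiber maps are \emph{not} members of $\mathcal{S}$ because, after rescaling, they lose the interior fixed point required by (I1)--(I3). One must confirm that those lemmas rest only on the increasing-diffeomorphism structure and on the signs of the normal exponents at the two endpoint fixed points, and in particular that the stationary measure $m^-$ is non-atomic with support of full convex hull $[0,\beta]$. This non-atomicity-plus-full-hull property is exactly the input that propagates, through the cylinder computation above, into the riddling/intermingling conclusion, so it is where the argument must be pinned down most carefully.
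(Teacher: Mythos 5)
Your proposal is correct and follows essentially the same route as the paper: the paper likewise observes that $[0,\beta]$ and $[\beta,1]$ are invariant under both fiber maps, that the normal Lyapunov exponents at $0$, $\beta$ and $1$ are all negative, and then applies the intermingling theorem of Gharaei--Homburg to the two restricted skew products $G^+_{\beta,\beta}|_{\Sigma_2\times[0,\beta]}$ and $G^+_{\beta,\beta}|_{\Sigma_2\times[\beta,1]}$; your cylinder-set reconstruction of the separating graph simply spells out what that citation delivers. The ``main obstacle'' you flag is not one: the restricted strips, with increasing diffeomorphisms fixing both endpoints and both endpoint exponents negative, are exactly the hypothesis of the cited theorem, so no interior fixed point is needed there.
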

\begin{proof}
Let $(\beta_1,\beta_2)\in \Gamma_S$ and $\beta_1=\beta_2=\beta$. Then, the subset $\mathbb{A}_\beta=\Sigma_2^+ \times \{\beta\}$ is invariant. By (\ref{FL}) and the definition of $\Gamma_S$,
the normal Lyapunov exponents at the points 0, 1 and $\beta$ are negative.
Thus, by Theorem \ref{t1}, $\mathbb{A}_\beta$ is a Milnor attractor.
Now, we take the restriction skew products
\begin{equation}\label{s4}
G^+_{\beta_1,\beta_2}|_{\Sigma_2 \times [0,\beta]} \quad \text{and} \quad G^+_{\beta_1,\beta_2}|_{\Sigma_2 \times [\beta,1]}.
\end{equation}

The restriction skew products given by (\ref{s4}) satisfy the hypothesis of \cite[Theorem 3.1]{GH}, thus, the basins of $\mathbb{A}_i$ and $\mathbb{A}_\beta$ are intermingled, for each $i=0,1$.
This finishes the proof.
\end{proof}
Let us define $\tau:\mathbb{I} \to \{1,2\}$ by
\begin{align}\label{code}
\tau(x) =\left\{\begin{array}{cc}
1 & $ for$ \quad0\leq x\leq 1/2,\\
2 & $ for$ \quad 1/2< x\leq1.
\end{array}\right.
\end{align}
Let $\Sigma_{12}\subset \Sigma_2$ be the sets of bi-infinite sequences of 1's and 2's which do not end with a tail
of 1s or 2s.
The metric and
measure is inherited from the space $\Sigma_2$.
Note that $\Sigma_{12}$ is closed and invariant by the shift map.
By \cite{Go}, there exists a map $K$ that semi-conjugates
the restrictions $\sigma |_{\Sigma_{12}}$ and $f|_{\mathbb{I} \setminus D}$,
where $D$ is the set of dyadic rationals (i.e. rational numbers of the form $k/2^n$, whose denominator is a power of 2).
By definition, the subset $\Sigma_{12}$ has the full Bernoulli measure and $D$ has zero Lebesgue measure (see \cite{Go}).
Then, the composition of the semi-conjugating map $K$ and the map $\psi_{\beta_1,\beta_2}$ given by (\ref{xi})
yields a measurable map $\Phi:\mathbb{I} \to \mathbb{I}$ that separates the basins.
\begin{corollary}\label{gg}
There exists a measurable graph map $\Phi_{\beta_1,\beta_2}:\mathbb{I} \to \mathbb{I}$ such that for Lebesgue-almost all $x \in \mathbb{I}$,
\begin{align}\label{phi}
\lim_{n \to \infty}g_{x,\beta_1,\beta_2}^n(y) =\left\{\begin{array}{cc}
0 & $ if$ \quad y< \Phi_{\beta_1,\beta_2}(x),\\
1 & $ if$ \quad  y >\Phi_{\beta_1,\beta_2}(x)
\end{array}\right.
\end{align}
where $g_{x,\beta_1,\beta_2}^n(y)=g_{\tau(f^{n-1}(x)),\beta_{\tau(f^{n-1}(x))}} \circ \dots \circ g_{\tau(x),\beta_{\tau(x)}}(y)$.
\end{corollary}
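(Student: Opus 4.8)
The plan is to transport the basin boundary $\psi_{\beta_1,\beta_2}$ obtained for the symbolic random walk in Theorem \ref{t1} down to the interval through the coding map $K$ of \cite{Go}, setting $\Phi_{\beta_1,\beta_2} := \psi_{\beta_1,\beta_2}\circ K^{-1}$ on $\mathbb{I}\setminus D$ and extending it arbitrarily (say by $0$) on the null set $D$. First I would record the precise properties of $K$ that I need: it is a measurable isomorphism between $(\Sigma_{12},\mathbb{P})$ and $(\mathbb{I}\setminus D,\mathrm{Leb})$ that intertwines $\sigma$ and $f$, and whose inverse is the itinerary map $x\mapsto(\tau(f^k(x)))_{k\ge 0}$. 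Since $\Sigma_{12}$ carries full Bernoulli measure and $D$ is Lebesgue-null, $K$ converts $\mathbb{P}$-almost-everywhere statements on the symbolic side into Lebesgue-almost-everywhere statements on $\mathbb{I}$.

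The second step is the elementary but crucial identification of the two families of compositions. For $x\in\mathbb{I}\setminus D$ put $\omega=K^{-1}(x)$, so that $\omega_k=\tau(f^k(x))$ for every $k\ge 0$. Comparing the definition of $g^n_{x,\beta_1,\beta_2}$ in the statement of Corollary \ref{gg} with (\ref{com}), the two read off exactly the same sequence of fiber maps, whence $g^n_{x,\beta_1,\beta_2}(y)=g^n_{\omega,\beta_1,\beta_2}(y)$ for all $n$ and $y$. Consequently the dichotomy (\ref{xi}) for $\omega$ becomes, verbatim, the dichotomy (\ref{phi}) for $x$ with threshold $\Phi_{\beta_1,\beta_2}(x)=\psi_{\beta_1,\beta_2}(\omega)$.

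It then remains to propagate the exceptional sets and check measurability. The set $\Omega\subset\Sigma_2$ on which (\ref{xi}) holds has full $\mathbb{P}$-measure by Theorem \ref{t1}; intersecting with $\Sigma_{12}$ keeps full measure, and its $K$-image is a subset of $\mathbb{I}$ of full Lebesgue measure, on which (\ref{phi}) holds by the previous step. Measurability of $\Phi_{\beta_1,\beta_2}$ follows from measurability of $K^{-1}$ (the itinerary map) and of $\psi_{\beta_1,\beta_2}$, so $\Phi_{\beta_1,\beta_2}$ is the desired measurable graph separating the two basins.

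The hard part will be reconciling the fact that $\psi_{\beta_1,\beta_2}$ lives on the two-sided shift $\Sigma_2$ (its construction in Theorem \ref{t1} runs the inverse random walk $h_i=g_i^{-1}$ and reads the backward coordinates of $\omega$) with the forward itinerary $(\tau(f^k(x)))_{k\ge0}$, which only specifies the forward coordinates of a symbolic point. The point to nail down is that the threshold in (\ref{xi}) is in fact a function of the forward coordinates alone: since $g^n_{\omega,\beta_1,\beta_2}$ in (\ref{com}) depends only on $\omega_0,\dots,\omega_{n-1}$, two full-measure sequences sharing the same forward tail produce identical limits $y\mapsto\lim_n g^n_\omega(y)$ and hence the same threshold, so $\psi_{\beta_1,\beta_2}$ agrees $\mathbb{P}$-a.e. with a function $\psi^+_{\beta_1,\beta_2}$ of $\pi(\omega)\in\Sigma_2^+$. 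A short Fubini/disintegration argument over the factor $\pi:(\Sigma_2,\mathbb{P})\to(\Sigma_2^+,\mathbb{P}^+)$ then shows that for $\mathbb{P}^+$-a.e. forward sequence almost every two-sided extension lies in $\Omega$, which makes $\Phi_{\beta_1,\beta_2}(x)=\psi^+_{\beta_1,\beta_2}(K^{-1}(x))$ well defined independently of how the backward coordinates are filled in, and legitimizes composing the two-sided graph with the forward coding $K$.
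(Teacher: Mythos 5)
Your proposal follows the paper's own route: the corollary is deduced by composing the separating graph $\psi_{\beta_1,\beta_2}$ of Theorem \ref{t1} with the coding map $K$ that semi-conjugates the shift on $\Sigma_{12}$ with the doubling map on $\mathbb{I}\setminus D$, using that $\Sigma_{12}$ has full Bernoulli measure and $D$ is Lebesgue-null. Your extra care in checking that $\psi_{\beta_1,\beta_2}$ is almost everywhere a function of the forward coordinates alone (so that the two-sided graph can legitimately be composed with the one-sided itinerary of $x$) is a correct and worthwhile clarification of a point the paper leaves implicit, but it is not a different method.
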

The following is immediate from Theorem \ref{t1} and Corollary \ref{gg}.
\begin{corollary}\label{ggg}
Let $F_{\beta_1,\beta_2}\in \mathcal{F}$ be a skew product of the form (\ref{ss}) whose fiber maps $g_{i,\beta_1,\beta_2}$, $i=1,2$, given by (\ref{fiber1}).
If both normal Lyapunov exponents $L_{\bot,\beta_1,\beta_2}(i)$, $i=0,1$, are negative,then
the attractors $A_0$ and $A_1$ attract sets of positive Lebesgue measure and the union of the basins $\mathcal{B}_{\beta_1,\beta_2}(A_i)$, $i=0,1$, has full Lebesgue measure.
\end{corollary}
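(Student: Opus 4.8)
The plan is to transport the conclusions of Theorem \ref{t1} from the symbolic model $G^+_{\beta_1,\beta_2}$ to the interval system $F_{\beta_1,\beta_2}$, using the separating graph supplied by Corollary \ref{gg}. First I would record that the hypothesis is precisely the defining condition of $\Gamma_S$: from (\ref{n0}) and (\ref{n1}) one has $L_{\perp,\beta_1,\beta_2}(0)<0 \iff (1-\beta_1)(1+\beta_2)<1 \iff \beta_2<\tfrac{1}{1-\beta_1}-1$ and $L_{\perp,\beta_1,\beta_2}(1)<0 \iff \beta_1(2-\beta_2)<1 \iff \beta_2>2-\tfrac{1}{\beta_1}$, so that both exponents being negative is equivalent to $(\beta_1,\beta_2)\in\Gamma_S$ by (\ref{region1}). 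This is exactly what licenses the use of Theorem \ref{t1} and Corollary \ref{gg}.

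Next I would identify the fibre dynamics of $F_{\beta_1,\beta_2}$ with the compositions appearing in Corollary \ref{gg}. By (\ref{ss}), (\ref{fiber}) and the coding $\tau$ of (\ref{code}), the second coordinate of $F_{\beta_1,\beta_2}^n(x,y)$ is exactly $g^n_{x,\beta_1,\beta_2}(y)=g_{\tau(f^{n-1}(x)),\beta_{\tau(f^{n-1}(x))}}\circ\cdots\circ g_{\tau(x),\beta_{\tau(x)}}(y)$, while the first coordinate stays in the compact interval as $f^n(x)$. Since $f$ is the full doubling map, the attractor of $F_{\beta_1,\beta_2}|_{N_i}$ is all of $N_i$, i.e. $A_i=N_i$; consequently $g^n_{x,\beta_1,\beta_2}(y)\to 0$ forces $\omega(x,y)\subseteq N_0=A_0$, so $(x,y)\in\mathcal{B}_{\beta_1,\beta_2}(A_0)$, and the limit $1$ likewise gives $(x,y)\in\mathcal{B}_{\beta_1,\beta_2}(A_1)$. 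Applying the dichotomy (\ref{phi}) of Corollary \ref{gg}, for Lebesgue-almost every $x$ every $y\neq\Phi_{\beta_1,\beta_2}(x)$ lands in $\mathcal{B}_{\beta_1,\beta_2}(A_0)\cup\mathcal{B}_{\beta_1,\beta_2}(A_1)$.

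To conclude full measure I would invoke Fubini twice: the set of $x$ for which (\ref{phi}) fails is Lebesgue-null, and over each remaining $x$ the only excluded fibre point is the single value $y=\Phi_{\beta_1,\beta_2}(x)$, so the graph $\{(x,\Phi_{\beta_1,\beta_2}(x))\}$ of the measurable function $\Phi_{\beta_1,\beta_2}$ is a Lebesgue-null subset of $\mathbb{I}\times\mathbb{I}$; hence the complement of $\mathcal{B}_{\beta_1,\beta_2}(A_0)\cup\mathcal{B}_{\beta_1,\beta_2}(A_1)$ is null. For the positivity of each basin I would transport the corresponding clause of Theorem \ref{t1}: the semi-conjugacy $K$ of \cite{Go} pushes the Bernoulli measure $\mathbb{P}^+$ forward to Lebesgue measure (the exceptional sets $\Sigma_2^+\setminus\Sigma_{12}$ and the dyadic set $D$ being null), so $K\times\mathrm{id}$ carries the standard measure to Lebesgue measure and intertwines $G^+_{\beta_1,\beta_2}$ with $F_{\beta_1,\beta_2}$; the positive–standard-measure basins $\mathcal{B}_{\beta_1,\beta_2}(\mathbb{A}_i)$ of Theorem \ref{t1} therefore correspond to positive–Lebesgue-measure basins $\mathcal{B}_{\beta_1,\beta_2}(A_i)$. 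Equivalently, the no-atom property $m_{\beta_1,\beta_2}(\{0\}\cup\{1\})=0$ used in Theorem \ref{t1} yields $\Phi_{\beta_1,\beta_2}(x)\in(0,1)$ for almost every $x$, whence both $\{y<\Phi_{\beta_1,\beta_2}(x)\}$ and $\{y>\Phi_{\beta_1,\beta_2}(x)\}$ have positive measure by Fubini.

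I expect the only real friction to be the bookkeeping of the null sets introduced by the semi-conjugacy: verifying that $K$ genuinely sends $\mathbb{P}^+$ to Lebesgue measure modulo $D$ and $\Sigma_2^+\setminus\Sigma_{12}$, and that composing $K$ with $\psi_{\beta_1,\beta_2}$ from (\ref{xi}) produces a bona fide measurable graph on the square. Once this identification of measures is in place, every assertion of the corollary is a direct restatement of Theorem \ref{t1} through the separating graph of Corollary \ref{gg}, which is why the result is flagged as immediate.
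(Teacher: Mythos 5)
Your proposal is correct and follows exactly the route the paper intends: the paper gives no argument beyond declaring the corollary ``immediate from Theorem \ref{t1} and Corollary \ref{gg}'', and your write-up simply makes that precise — identifying the hypothesis with membership in $\Gamma_S$, transporting the symbolic basins through the semi-conjugacy $K$ (which sends $\mathbb{P}^+$ to Lebesgue), and using Fubini together with the nullity of the graph of $\Phi_{\beta_1,\beta_2}$ and the fact that $\Phi_{\beta_1,\beta_2}(x)\in(0,1)$ almost everywhere to get full measure of the union and positivity of each basin. No gaps; this is the same argument, just written out.
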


\section{\textbf{The existence of riddled basin and blowout bifurcation}}
In this section, we demonstrate the emergence conditions of riddled basins in the global sense.\\
Note that a general set of conditions under which riddled basins
can occur are as follows (see \cite{OS}, \cite{JI}, \cite{C} and \cite{RS}):
\begin{enumerate}
\item [$(H1)$] There exists an invariant subspace $N$ whose dimension is less than the dimension of the full-phase space.
\item [$(H2)$] The dynamics on the invariant subspace $N$ has a chaotic attractor $A$.
\item [$(H3)$] For typical orbits on $A$ the Lyapunov exponents for infinitesimal perturbations in the direction transverse to
 $N$ are all negative.
 \item [$(H4)$] There is another attractor $A^{\prime}$ not belonging to $N$.
\item [$(H5)$] At least one of the normal Lyapunov exponents, although negative for almost any orbits on $A$, has finite time fluctuations that are positive.
\end{enumerate}
\begin{remark}
Note that for occurrence the riddled basin, it is necessary to have a dense set of points with
zero Lebesgue measure in the attractor lying in the invariant subspace
which are transversely unstable, thus it is necessary that this
attractor be chaotic.
\end{remark}
Let $F_{\beta_1,\beta_2}\in \mathcal{F}$ be a skew product of the form (\ref{ss}) whose fiber maps $g_{i,\beta_1,\beta_2}$, $i=1,2$, given by (\ref{fiber1}) and $\beta_1, \beta_2 \in (0,1)$.
In Theorem \ref{th1}, we estimated the values of parameters $\beta_i$, $i=1,2$, for which the locally riddled basin occurs for both attractors $A_0$ and $A_1$.
In particular, for each $(\beta_1,\beta_2) \in \Gamma_S$, the verification of condition $(H1)-(H4)$ was done.
Condition $(H5)$ is verified in the next theorem.
To do that we show the existence a set of unstable periodic orbits embedded in $A_i$ which is transversely unstable.
This implies that at least one of the Lyapunov exponents along the directions transverse to invariant subspaces experiences positive finite-type fluctuations (see \cite{RS}).

Here, we prove the riddling in a more direct way.
Indeed, in our setting, the full space contains two chaotic attractors $A_i$, $i=0,1$, lying in different invariant subspaces $N_i$. By Corollary \ref{gg}, the system $F_{\beta_1,\beta_2}$
presents a complex fractal boundary between the initial conditions leading to each of the two attractors.
This fractal boundary is the graph of $\Phi_{\beta_1,\beta_2}$ which separates the basin of attraction.
Note that in a riddled basin, small variations in initial conditions induce a switch between the different
chaotic attractors but the fractal boundary causes to predict, from a given initial condition, what trajectory
in phase-space the system will follow.
These facts allow us to verify the occurrence of riddled basin (in the global sense).
\begin{theorem}\label{m}
Let $F_{\beta_1,\beta_2}\in \mathcal{F}$ be a skew product of the form (\ref{ss}) whose fiber maps $g_{i,\beta_1,\beta_2}$, $i=1,2$, given by (\ref{fiber1}) and let $(\beta_1, \beta_2) \in \Gamma_S$, where
$\Gamma_S$ is given by (\ref{region1}).
Consider the chaotic attractors $A_{i}$, $i=0,1$. Then the following holds:
\begin{itemize}
\item[$(a)$] if $\beta_1 \leq1/2$ and $ \beta_2 <\frac{1}{1-\beta_1}-1$ then $\mathcal{B}(A_{0})$ is riddled with $\mathcal{B}(A_{1})$;
\item[$(b)$]   if $\beta_1 \geq 1/2$ and $\beta_2 > 2 - \frac{1}{\beta_1}$, then $\mathcal{B}(A_{1})$ is riddled with $\mathcal{B}(A_{0})$;
\end{itemize}
where $\mathcal{B}(A_{i})$ is the basin of attraction of $A_i$, for $i=0,1$.
\end{theorem}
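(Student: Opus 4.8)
The plan is to reduce global riddling to a single quantitative statement about the separating graph $\Phi_{\beta_1,\beta_2}$ supplied by Corollary~\ref{gg}, and then to prove that statement using the transverse expansion carried by the symbol $2$ together with the expanding, self-similar structure of the base map $f$. I treat case $(a)$ in detail; case $(b)$ is the mirror image obtained by interchanging the fibre points $0$ and $1$. First I would record that the hypotheses place $(\beta_1,\beta_2)$ in $\Gamma_S$, so by Theorem~\ref{th1} and Corollary~\ref{ggg} both $A_0$ and $A_1$ are Milnor attractors, the dichotomy $(x,y)\in\mathcal{B}(A_0)\iff y<\Phi_{\beta_1,\beta_2}(x)$ and $(x,y)\in\mathcal{B}(A_1)\iff y>\Phi_{\beta_1,\beta_2}(x)$ holds for Lebesgue-almost every $x$, and $\mathcal{B}(A_0)\cup\mathcal{B}(A_1)$ has full measure; this verifies $(H1)$–$(H4)$, while $(H5)$ follows from Lemma~\ref{lem1} and the computation in Theorem~\ref{th1} that the Dirac measures on $\mathcal{O}(x)$, $x\in Per^+(f)$, have positive normal index. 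Since every disk contains a rectangle $R=I\times J$ and $m(\mathcal{B}(A_1)\cap R)=\int_I\lvert J\cap(\Phi_{\beta_1,\beta_2}(x),1]\rvert\,dx$ is positive as soon as $\Phi_{\beta_1,\beta_2}$ dips below $\sup J$ on a positive-measure subset of $I$, the theorem reduces to
\begin{equation}\label{star}
 m\bigl(\{x\in I:\ \Phi_{\beta_1,\beta_2}(x)<\eta\}\bigr)>0 \qquad\text{for every subinterval }I\subset\mathbb{I}\text{ and every }\eta>0 .
\end{equation}

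To prove \eqref{star} I would first derive from \eqref{phi} the equivariance $\Phi_{\beta_1,\beta_2}(f(x))=g_{\tau(x),\beta_{\tau(x)}}(\Phi_{\beta_1,\beta_2}(x))$, equivalently $\Phi_{\beta_1,\beta_2}(x)=g_{\tau(x),\beta_{\tau(x)}}^{-1}\bigl(\Phi_{\beta_1,\beta_2}(f(x))\bigr)$ a.e. Because $g_{2,\beta_2}'(0)=1+\beta_2>1$, the branch $g_{2,\beta_2}^{-1}$ fixes $0$, contracts towards it on $[0,\beta_2)$, and has $\beta_2$ as a repelling fixed point; hence pulling $\Phi_{\beta_1,\beta_2}$ back through a block of symbols $2$ drives it to $0$, provided it starts below $\beta_2$. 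The second ingredient is that $\Phi_{\beta_1,\beta_2}$ is genuinely small on a set of positive measure: the distribution of $\Phi_{\beta_1,\beta_2}$ under Lebesgue measure coincides with the stationary measure $m_{\beta_1,\beta_2}$ of the inverse step skew product constructed in the proof of Theorem~\ref{t1}, which carries no atom at $0$ and whose support has convex hull $\mathbb{I}$, so $\inf(\mathrm{supp}\,m_{\beta_1,\beta_2})=0$ and therefore $m(\{x:\Phi_{\beta_1,\beta_2}(x)<c\})=m_{\beta_1,\beta_2}([0,c))>0$ for every $c>0$; I fix $c=\beta_2/2$.

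Finally I would globalise this to an arbitrary interval $I$. Choosing $k$ so large that some generation-$k$ cylinder $C_0$ of $f$ lies inside $I$, I extend it to the cylinder $C=[s_0,\dots,s_{k-1},\underbrace{2,\dots,2}_{N}]\subset C_0$. For $x\in C$ the equivariance gives $\Phi_{\beta_1,\beta_2}(x)=g_{s_0}^{-1}\!\cdots g_{s_{k-1}}^{-1}\,g_{2,\beta_2}^{-N}\bigl(\Phi_{\beta_1,\beta_2}(f^{k+N}x)\bigr)$; on the set where $\Phi_{\beta_1,\beta_2}(f^{k+N}x)<\beta_2/2$ the inner $N$-fold pull-back is at most $\delta_N:=g_{2,\beta_2}^{-N}(\beta_2/2)\to0$, and since the finitely many fixed diffeomorphisms $g_{s_j}^{-1}$ vanish at $0$ and are $L$-Lipschitz we obtain $\Phi_{\beta_1,\beta_2}(x)\le L^{k}\delta_N<\eta$ once $N$ is large. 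As $f^{k+N}$ maps $C$ affinely and bijectively onto $\mathbb{I}$, the set $\{x\in C:\Phi_{\beta_1,\beta_2}(f^{k+N}x)<\beta_2/2\}$ is the preimage under a nonsingular map of the positive-measure set $\{\Phi_{\beta_1,\beta_2}<\beta_2/2\}$ and hence has positive measure; since $C\subset I$ this proves \eqref{star}, and with it part $(a)$. For $(b)$ one repeats the argument near the fibre point $1$, where $g_{2,\beta_2}'(1)=2-\beta_2>1$ makes $g_{2,\beta_2}^{-1}$ contract towards $1$, driving $\Phi_{\beta_1,\beta_2}$ up to $1$ from any level above $\beta_2$, and uses $\sup(\mathrm{supp}\,m_{\beta_1,\beta_2})=1$ to secure $m(\{\Phi_{\beta_1,\beta_2}>c\})>0$ for every $c<1$.

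The hard part, and the step I would spend the most care on, is the globalisation in the last paragraph: controlling $\Phi_{\beta_1,\beta_2}$ on a positive-measure subset of an \emph{arbitrary} interval $I$ despite the fixed, uncontrolled prefix $(s_0,\dots,s_{k-1})$ determined by $I$. This is handled by the bounded-distortion estimate (the prefix maps fix $0$ and are uniformly Lipschitz) together with the exact affine bijectivity of $f^{k+N}|_C$. The other delicate point is the identification $\mathrm{essinf}\,\Phi_{\beta_1,\beta_2}=0$, for which the support property of the stationary measure $m_{\beta_1,\beta_2}$ from Theorem~\ref{t1} is essential; without it the pull-back through symbols $2$ would have nothing below $\beta_2$ to contract.
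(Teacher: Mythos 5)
Your proof takes a genuinely different route from the paper's. The paper follows the cusp construction of Alexander--Yorke--You--Kan: near the fixed point $(1,0)$ it exhibits an explicit open tongue $W^{+}_{\beta_1,\beta_2}=\{\gamma<x<1,\ y>(1-x)^{\log(1+\beta_2)}\}$ on which the transverse derivative exceeds $1$, asserts that such points are eventually carried above the separating graph and hence into $\mathcal{B}(A_1)$, and then pulls $W^{+}_{\beta_1,\beta_2}$ back under $F_{\beta_1,\beta_2}^{-n}$ to obtain an open escaping set with cusps at the dense set $\bigcup_n f^{-n}(1)$; part $(b)$ is the mirror construction at $(1,1)$. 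You instead work entirely with the invariant graph $\Phi_{\beta_1,\beta_2}$, reduce riddling to the statement that $\Phi_{\beta_1,\beta_2}$ drops below every $\eta>0$ on a positive-measure subset of every subinterval, and obtain this from the equivariance $\Phi_{\beta_1,\beta_2}=g^{-1}_{\tau(\cdot),\beta_{\tau(\cdot)}}\circ\Phi_{\beta_1,\beta_2}\circ f$ together with pull-back through cylinders ending in long blocks of the symbol $2$. Your reduction, the equivariance, and the cylinder/globalisation step are all correct, and your route is more explicit than the paper's about the step the paper leaves unargued (why an orbit pushed upward while $x$ lingers near $1$ actually ends up above the graph rather than falling back); it would in fact give the stronger conclusion that the two basins are intermingled throughout $\Gamma_S$.

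The weak point --- which you yourself flagged as the delicate step --- is the input $\operatorname{essinf}\Phi_{\beta_1,\beta_2}=0$, which you source from the assertion, made inside the proof of Theorem~\ref{t1}, that the support of the stationary measure $m_{\beta_1,\beta_2}$ has convex hull $\mathbb{I}$. That assertion is imported from \cite{GH}, whose standing hypothesis ($g_1<\mathrm{id}<g_2$ on all of $(0,1)$, which forces minimality of the random walk) fails here because of the interior fixed points $\beta_1\neq\beta_2$. Concretely, when $\beta_2<\beta_1$ conditions $(I1)$--$(I2)$ give $g_{1,\beta_1}(\beta_2)<\beta_2=g_{2,\beta_2}(\beta_2)$, so the strip $\mathbb{I}\times[0,\beta_2]$ is forward invariant under $F_{\beta_1,\beta_2}$; no disk inside it can meet $\mathcal{B}(A_1)$, hence $\Phi_{\beta_1,\beta_2}\ge\beta_2$ almost everywhere and the pull-back through blocks of $2$'s has nothing below the repelling level $\beta_2$ of $g^{-1}_{2,\beta_2}$ to contract toward $0$. (For instance $(\beta_1,\beta_2)=(0.45,0.4)$ lies in $\Gamma_S$ and satisfies the hypotheses of part $(a)$.) This is not a defect you introduced --- the paper's own proof silently relies on the same full-measure dichotomy of Corollary~\ref{ggg} at the point where it claims that orbits from $W^{+}_{\beta_1,\beta_2}$ ``escape to $\mathcal{B}(A_1)$'' --- but the step cannot be discharged from the results of Section~4 as stated, so you should present it as an additional hypothesis rather than as a consequence of Theorem~\ref{t1}.
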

\begin{proof}
As we have seen in Section 3, the two chaotic attractors $A_i$, $i=0,1$, are $SRB$ attractors for the restriction of $F_{\beta_1,\beta_2}$ to the invariant subspaces $N_i$.
Since $(\beta_1, \beta_2) \in \Gamma_S$, by Corollary \ref{cor1},  both normal Lyapunov exponents $L_{\perp, \beta_1,\beta_2}(0)$
 and $L_{\perp, \beta_1,\beta_2}(1)$ are negative.  This fact ensures that $A_i$, $i=0,1$, are (essential) attractors (in the Milnor sense) in the whole phase space (see Theorem \ref{th1}).

To prove $(a)$,
consider the fixed point $Q=(1,0)\in A_{0}$.
We closely follow \cite[Section~4]{JI} and  construct an open set near the fixed point $Q$ which is not contained in the basin $\mathcal{B}(A_{0})$.
Indeed, consider the graph $x\mapsto (1-x)^{ \log \frac{(1+\beta_2)}{2}}$, a subset of $A_{0}\times[0,1]$ having a cusp at $Q=(1,0)$. This graph is strictly monotone and concave on each side of its cusp point.
Since the mapping $y \mapsto dg_{2,\beta_2}(y)$ is continuous, there exists a real $ \gamma(\beta_1,\beta_2)>0$ sufficiently close to 1 such that
if we take
\begin{equation*}
W_{\beta_1,\beta_2}^{+}=\{(x,y)\in [0,1]\times[0,1] :\gamma(\beta_1,\beta_2) <x<1 \ \text{and} \ y>(1-x)^{ \log (1+\beta_2)}\}
  \end{equation*}
then for $(x,y)\in W_{\beta_1,\beta_2}^{+}$, $dg_{2,\beta_2}(y)>1$. Thus any point in $W_{\beta_1,\beta_2}^{+}$ escapes from the attractor $A_{0}$.
By Corollary \ref{ggg}, the union of the basins $\mathcal{B}_{\beta_1,\beta_2}(A_i)$, $i=0,1$, has full Lebesgue measure.
Also, the graph of the map $\Phi_{\beta_1,\beta_2}$ obtained in Corollary \ref{gg} is the fractal boundary which separates the basin of attraction.
By these facts, if $ \gamma(\beta_1,\beta_2)$ is close enough to 1, for any point $(x,y) \in W_{\beta_1,\beta_2}^{+}$, there is $n>1$ such that $g_{x,\beta_1,\beta_2}^n(y)>\Phi_{\beta_1,\beta_2}(x)$
and hence $(x,y)$ escapes to $\mathcal{B}(A_1)$ after some iterates.

Let $W_{\beta_1,\beta_2}=\cup_{n\geq0}F_{\beta_1,\beta_2}^{-n}(W_{\beta_1,\beta_2}^{+})$. Then $W_{\beta_1,\beta_2}$ is an open set and its boundary has a cusp at each point of the set $\{f^{-n}(1)\}$. The set $\{f^{-n}(1)\}$ is dense in $\mathbb{I}$  and thus every neighborhood of any point in $A_{0}$ intersects $W_{\beta_1,\beta_2}$. Thus the basin $\mathcal{B}(A_{0})$ is riddled with the basin $\mathcal{B}(A_{1})$.

To prove ($b$), we construct an open set near the fixed point $Z=(1,1)\in A_{1}$ which is not contained in the basin $\mathcal{B}(A_{1})$.
Consider the graph $x\mapsto x^{\log(2-\beta_2)}$, a subset of $A_{1}\times[0,1]$ which has a cusp at $Z=(1,1)$. This graph is strictly monotone and concave on each side of its cusp point.
Since the mapping $y \mapsto dg_{2,\beta_2}(y)$ is continuous, there exists a real $ \eta(\beta_1,\beta_2)>0$ sufficiently close to 0 such that
if we take
 \begin{equation*}
U_{\beta_1,\beta_2}^{+}=\{(x,y)\in [0,1]\times[0,1] : 0<x<\eta(\beta_1,\beta_2) \ \text{and} \ y> x^{log(2-\beta_2)}\},
\end{equation*}
then for $(x,y)\in U^{+}$, $dg_{2,\beta_2}(y)>1$. Thus any point in $U_{\beta_1,\beta_2}^{+}$ escapes from the attractor $A_{1}$.
By applying an argument similar to statement ($a$), any point $(x,y) \in U_{\beta_1,\beta_2}^{+}$ escapes to $\mathcal{B}(A_0)$ after some iterates.
Let $U=\cup_{n\geq0}F^{-n}(U^{+})$. Then $U$ is an open set and its boundary has a cusp at each point of the set $\{f^{-n}(1)\}$. The set $\{f^{-n}(1)\}$ is dense in $[0,1]$ and thus every neighborhood of any point in $A_{1}$ intersects $U$.
Thus, $\mathcal{B}(A_{1})$ is riddled with the basin $\mathcal{B}(A_{0})$.
 \end{proof}

\begin{figure}[h!]
\begin{center}
    $\begin{array}{cc}
  \includegraphics[scale=0.4]{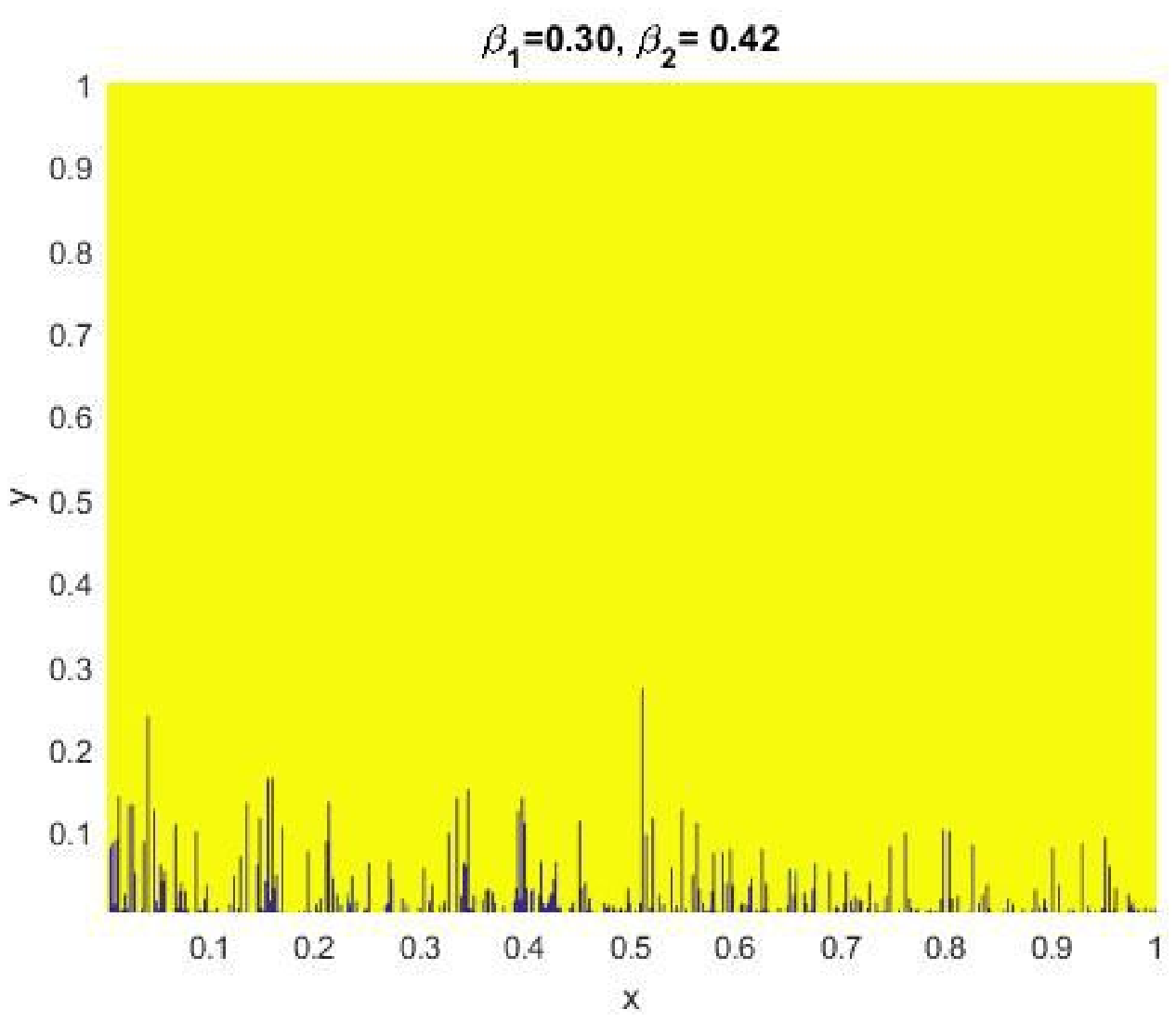}
 \includegraphics[scale=0.4]{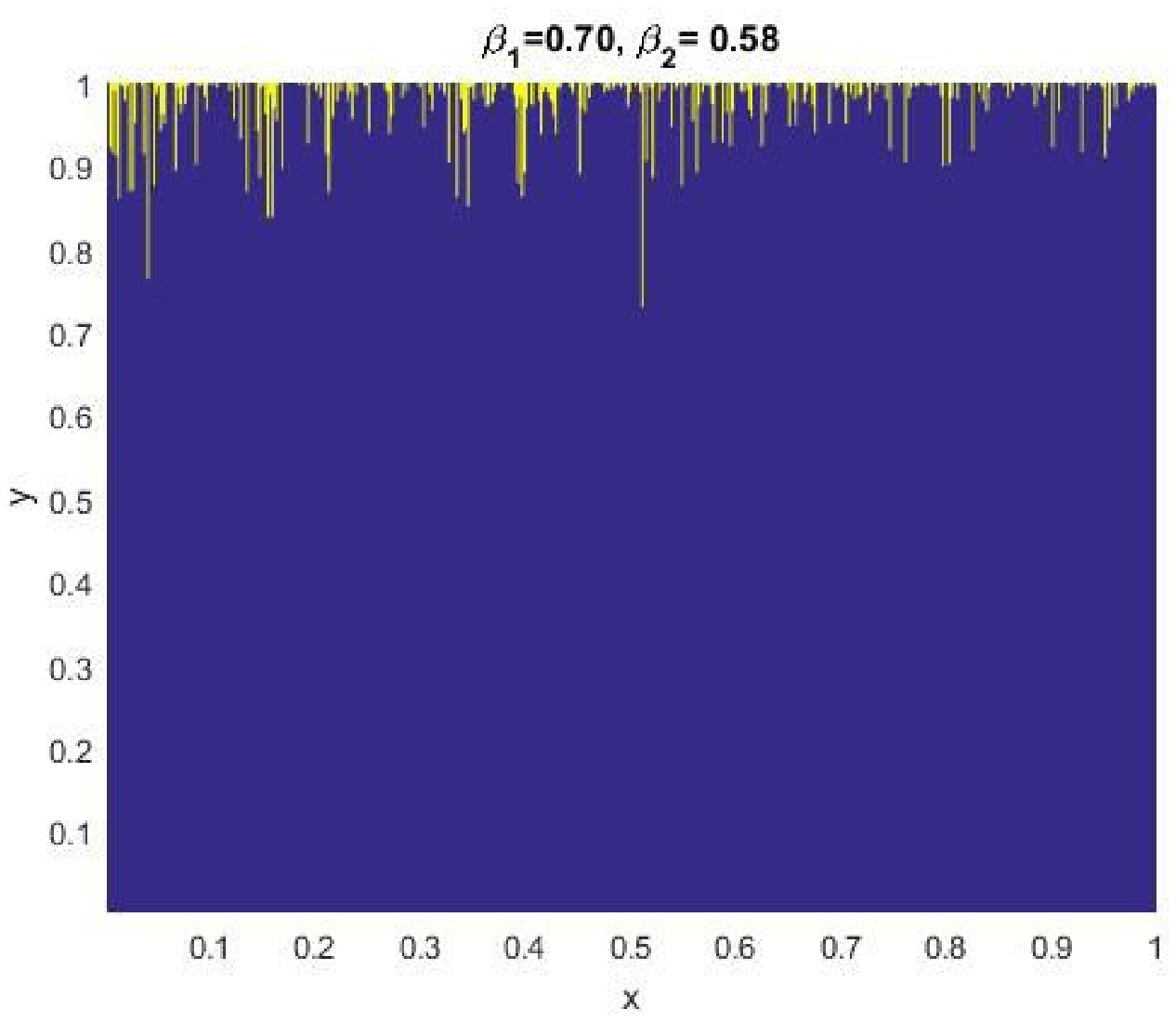}
  \end{array}$
\end{center}
  \caption{\small The left frame depicts the basins of attraction of $A_0$ and $A_1$, for $\beta_1=.3$ and $\beta_2=.42$. The basin of $A_0$ is riddled by the basin of $A_1$.
The right frame shows the basins of attraction of $A_0$ and $A_1$
 for $\beta_1=.7$ and $\beta_2=.58$. The basin of $A_1$ is riddled by the basin of $A_0$. The blue region in both figures corresponds to the basin of attraction $\mathcal{B}(A_0)$, while
 the yellow region corresponds to the basin of attraction $\mathcal{B}(A_1)$. }
 \label{fig:5}
   \end{figure}

Let $(\beta_1, \beta_2) \in \Gamma_S$. Take $\beta:=\beta_1=\beta_2$ and
\begin{equation}
 N_{\beta}:=\{(x,y) : \quad 0\leq x\leq 1, \quad y=\beta\}.
\end{equation}
Then $N_{\beta}$ is an invariant subspace by $F_{\beta,\beta}$ and the restriction of $F_{\beta,\beta}$ to this invariant subspace possesses a chaotic attractor $A_{\beta}$.

The normal Lyapunov exponent for $F_{\beta,\beta}$ is given by
 \begin{align}
 L_{\perp,\beta,\beta}(y)=\frac{1}{2} \ln(-\beta^2+\beta+1)+\frac{1}{2} \ln(\beta^2-\beta+1).
 \end{align}
Simple computation shows that $ L_{\perp,\beta,\beta}$ is negative for each $\beta \in (0,1)$. Note that the intervals $[0,\beta]$ and $[\beta,1]$ are invariants by both fiber maps $g_{1,\beta}$ and $g_{2,\beta}$.
By these facts, Theorem \ref{t1} and the comments before Corollary \ref{gg}, we get the next result.
\begin{corollary}
$F_{\beta,\beta}$ exhibit three Milnor attractors $A_0$, $A_1$ and $A_\beta$. Moreover, the basins of $A_i$ and $A_\beta$ are intermingled, for each $i = 0, 1$.
\end{corollary}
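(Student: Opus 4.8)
The plan is to read this corollary as the deterministic transcription of Theorem \ref{t2}: everything has already been established for the step skew product $G^+_{\beta,\beta}$, and it remains to carry the conclusions back to $F_{\beta,\beta}$ through the semi-conjugacy $K$ of \cite{Go} described in the comments before Corollary \ref{gg}. I would organize the argument in two stages — first identifying $A_\beta$ as a third Milnor attractor, then establishing the intermingling — and I would handle the intermingling by splitting the phase space along $N_\beta$.

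For the first stage, I would note that condition $(I1)$ gives $g_{1,\beta}(\beta)=g_{2,\beta}(\beta)=\beta$, so $N_\beta$ is invariant and $F_{\beta,\beta}|_{N_\beta}$ is conjugate (the fiber coordinate being frozen at $\beta$) to the expanding Markov map $f$; hence $A_\beta$ is a chaotic SRB attractor exactly as $A_0$ and $A_1$ are. Since $N_\beta$ has codimension one, the normal stability index of every ergodic measure coincides with its normal Lyapunov exponent, and the SRB value is $L_{\perp,\beta,\beta}=\frac{1}{2}\ln(-\beta^2+\beta+1)+\frac{1}{2}\ln(\beta^2-\beta+1)$, which is negative for every $\beta\in(0,1)$. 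Theorem \ref{thm11} (equivalently Proposition \ref{p1}$(1)$) then makes $A_\beta$ a Milnor (essential) attractor, and combined with Theorem \ref{th1} this produces the three advertised Milnor attractors $A_0$, $A_1$, $A_\beta$.

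For the intermingling, I would use the fact that $[0,\beta]$ and $[\beta,1]$ are invariant under both fiber maps, so that the strips $\mathbb{I}\times[0,\beta]$ and $\mathbb{I}\times[\beta,1]$ are $F_{\beta,\beta}$-invariant. After rescaling the fiber coordinate, the restriction of $F_{\beta,\beta}$ to each strip is again a skew product in $\mathcal{F}$ whose two bounding lines ($N_0$ and $N_\beta$ for the lower strip, $N_\beta$ and $N_1$ for the upper) carry chaotic attractors with negative normal Lyapunov exponents — precisely the hypotheses of Theorem \ref{t1} and Corollary \ref{gg}. Applying the semi-conjugacy construction preceding Corollary \ref{gg} to each strip yields a fractal boundary graph separating the basins of the two bounding attractors inside that strip; this is exactly the deterministic image of the intermingling established for the restricted step skew products in Theorem \ref{t2}. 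Consequently $\mathcal{B}(A_0)$ and $\mathcal{B}(A_\beta)$ are intermingled within the lower strip, and $\mathcal{B}(A_\beta)$ and $\mathcal{B}(A_1)$ within the upper strip, which is the assertion.

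The main obstacle I expect is the measure-theoretic faithfulness of the transfer. Intermingling asserts that each basin meets every disk in a set of positive Lebesgue measure, whereas Theorem \ref{t2} is phrased for the standard measure $\mathbb{P}^+\times\mathrm{Leb}$ on $\Sigma_2^+\times\mathbb{I}$, and the separating graph $\psi_{\beta,\beta}$ lives on $\Sigma_2$. To make the conclusion rigorous I would have to check that $K$ transports positive-measure sets to positive Lebesgue-measure sets and that $\psi_{\beta,\beta}$ pushes forward to the graph $\Phi_{\beta,\beta}$ of Corollary \ref{gg} without collapsing the riddling structure. Here I would lean on the facts recorded in \cite{Go} and before Corollary \ref{gg}: $\Sigma_{12}$ carries full Bernoulli measure while the excluded dyadic set $D$ is Lebesgue-null, so the base semi-conjugacy is an isomorphism up to null sets and the intermingling survives the transfer. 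Verifying that no positive-measure collapse occurs on the fibers — that the fractal boundary remains a genuine graph after passing to $\mathbb{I}\times\mathbb{I}$ — is the step I would treat most carefully.
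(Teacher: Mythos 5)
Your argument follows the paper's own route: the paper likewise obtains $A_\beta$ from the negativity of $L_{\perp,\beta,\beta}$ on the invariant line $N_\beta$, and proves the intermingling by restricting to the invariant strips $\mathbb{I}\times[0,\beta]$ and $\mathbb{I}\times[\beta,1]$, invoking the random-walk results (Theorem \ref{t2}, via \cite[Theorem 3.1]{GH}) and transferring through the semi-conjugacy discussed before Corollary \ref{gg}. The only nuance worth noting is that the rescaled strip systems do not literally satisfy $(I1)$--$(I3)$ (the rescaled fiber maps have no interior fixed point), but the properties actually used --- increasing fiber maps moving in opposite directions with negative normal Lyapunov exponents at both boundary lines --- do hold, which is exactly how the paper's proof of Theorem \ref{t2} proceeds.
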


\begin{figure}[h!]
\begin{center}
    $\begin{array}{c}
  \includegraphics[scale=0.5]{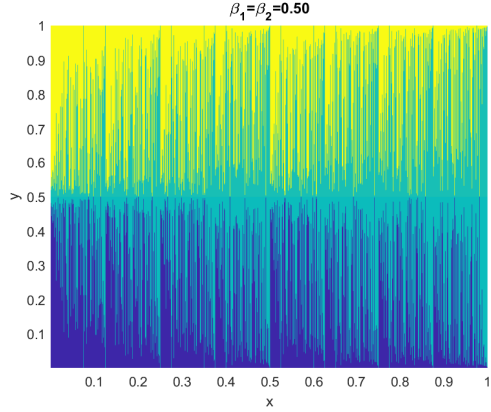}
  \end{array}$
\end{center}
  \caption{\small The basins of attraction for the attractors $A_0$, $A_1$ and $A_\beta$, for parameters values $\beta_1=\beta_2=\beta=0.5$. The blue region in the figure corresponds to the basin of attraction $\mathcal{B}(A_0)$,
  the green region in the figure corresponds to the basin of attraction $\mathcal{B}(A_{\beta})$,
  while yellow region corresponds to the basin of attraction $\mathcal{B}(A_1)$. The intermingled basin is observed.}
 \label{fig:4}
   \end{figure}

We recall that the blowout bifurcation \cite{EJ, PJ} occurs when $L_{\perp,\beta_1,\beta_2}$ crosses zero.
\begin{corollary}
Let $F_{\beta_1,\beta_2}\in \mathcal{F}$ be a skew product of the form (\ref{ss}) whose fiber maps $g_{i,\beta_1,\beta_2}$, $i=1,2$, given by (\ref{fiber1}).
 Then the following holds:
\begin{itemize}
\item[$(a)$] $F_{\beta_1,\beta_2}$ exhibits a (subcritical) hysteretic blowout bifurcation on passing
through any $\beta_1 <1/2$ and $ \beta_2 =\frac{1}{1-\beta_1}-1$;
\item[$(b)$]  $F_{\beta_1,\beta_2}$ exhibits a (subcritical) hysteretic blowout bifurcation on passing
through any $\beta_1 >1/2$ and $\beta_2 = 2 - \frac{1}{\beta_1}$.
\end{itemize}
\end{corollary}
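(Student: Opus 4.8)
The plan is to verify directly the two ingredients that Subsection 2.3 attaches to a blowout bifurcation: first, that the normal stability index $\Lambda_{SRB}$ of the relevant attractor crosses zero transversally as the parameters cross the indicated curve; and second, that the loss of transverse stability is \emph{hard} (hysteretic), i.e. that a riddled basin sits on one side of the curve and a chaotic saddle on the other. Since the parameters $\beta_1,\beta_2$ are normal parameters and $\Lambda_{SRB}$ was shown in Theorem \ref{th1} to depend smoothly on them, all of this reduces to reading off signs from the explicit formulas already computed there together with Theorems \ref{th1} and \ref{m}.

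For $(a)$ I would start from the formula obtained in the proof of Theorem \ref{th1},
\[
\Lambda_{SRB}(A_0)=\tfrac12\log(1-\beta_1)+\tfrac12\log(1+\beta_2)=\tfrac12\log\big[(1-\beta_1)(1+\beta_2)\big].
\]
On the curve $\beta_2=\frac{1}{1-\beta_1}-1$ one has $(1-\beta_1)(1+\beta_2)=1$, hence $\Lambda_{SRB}(A_0)=0$ there, which is the candidate bifurcation locus. Because $\partial_{\beta_2}\Lambda_{SRB}(A_0)=\frac{1}{2(1+\beta_2)}>0$, the index increases strictly through zero as $\beta_2$ increases through $\frac{1}{1-\beta_1}-1$ (with $\beta_1<1/2$ fixed); by the characterization in Subsection 2.3 this sign change of $\Lambda_{SRB}$ is exactly a blowout bifurcation of $A_0$. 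To classify it as subcritical I would read off both sides: just below the curve the point $(\beta_1,\beta_2)$ lies in $\Gamma_S$ (for $\beta_1<1/2$ the lower constraint $\beta_2>2-\frac{1}{\beta_1}$ is automatic and $\frac{1}{1-\beta_1}-1\in(0,1)$) and $\Lambda_{SRB}(A_0)<0$, so Theorem \ref{th1}$(a)$ makes $A_0$ an essential Milnor attractor while Theorem \ref{m}$(a)$ shows its basin is riddled with $\mathcal{B}(A_1)$; just above the curve $\Lambda_{SRB}(A_0)>0$ and Theorem \ref{th1}$(c)$ gives that $A_0$ is a chaotic saddle with $m(\mathcal{B}(A_0))=0$. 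This riddled-attractor-before, chaotic-saddle-after pattern is precisely the hard loss of stability defining a hysteretic (subcritical) blowout in the sense of \cite{AP, AS, PJ}, which proves $(a)$.

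Part $(b)$ is entirely analogous, working instead with $\Lambda_{SRB}(A_1)=\tfrac12\log\big[\beta_1(2-\beta_2)\big]$ from Theorem \ref{th1}: on the curve $\beta_2=2-\frac{1}{\beta_1}$ one has $\beta_1(2-\beta_2)=1$, so $\Lambda_{SRB}(A_1)=0$, while $\partial_{\beta_2}\Lambda_{SRB}(A_1)=-\frac{1}{2(2-\beta_2)}<0$, so now $\Lambda_{SRB}(A_1)$ crosses zero as $\beta_2$ is varied through the critical value. The attracting (riddled) side $\beta_2>2-\frac{1}{\beta_1}$, which again lies in $\Gamma_S$ for $\beta_1>1/2$, is supplied by Theorem \ref{th1}$(b)$ and Theorem \ref{m}$(b)$, and the chaotic-saddle side $\beta_2<2-\frac{1}{\beta_1}$ by Theorem \ref{th1}$(d)$, yielding the same hysteretic picture.

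I expect the only genuinely delicate point to be the qualitative classification rather than the arithmetic. The identity $\Lambda_{SRB}=0$ on each curve and the monotonicity in $\beta_2$ are routine; the substantive step is justifying the label \emph{hysteretic/subcritical}, which requires invoking the \emph{global} riddling of Theorem \ref{m} (not merely the local riddling of Theorem \ref{th1}) on the attracting side, together with the chaotic-saddle conclusion on the opposite side, so that the transition is a hard loss of stability rather than a soft transition to an on-off intermittent attractor \cite{AP, NE}.
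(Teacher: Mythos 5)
Your proposal is correct and follows essentially the same route as the paper: smooth dependence of $\Lambda_{SRB}$ on the normal parameters, the vanishing of $\Lambda_{SRB}$ on the indicated curves, and the riddled-basin/chaotic-saddle dichotomy from Theorems \ref{th1} and \ref{m} to identify the transition as a hard (hysteretic) loss of stability. The paper's own proof is just a terser version of this, leaving the transversality of the zero crossing and the explicit appeal to the global riddling of Theorem \ref{m} implicit where you spell them out.
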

\begin{proof}
By the proof of Theorem \ref{th1}, for each $(\beta_1,\beta_2)\in \Gamma_S$, the normal Lyapunov exponents $L_{\perp,\beta_1,\beta_2}(i)$, $i=0,1$, is smoothly dependent
on the normal parameters $\beta_1$ and $\beta_2$. By this fact and Theorem \ref{th1}, $L_{\perp,\beta_1,\beta_2}(i)$ crosses zero at any $\beta_1 <1/2$ and $ \beta_2 =\frac{1}{1-\beta_1}-1$
and at any $\beta_1 >1/2$ and $\beta_2 = 2 - \frac{1}{\beta_1}$. Thus, by definition of a hysteretic blowout bifurcation and Theorem \ref{th1}, the result is immediate.
\end{proof}
\section{\textbf{Conclusion}}
We have investigated the formation of locally riddled basins of attraction and chaotic saddle for a two parameter family $F_{\beta_1,\beta_2}$, $\beta_1,\beta_2 \in (0,1)$, of skew product systems defined on the plane.
Our model exhibits two distinct chaotic attractors $A_0$ and $A_1$ lying in two different invariant subspaces.
We have analyzed the model rigorously using the results of \cite{PJ}
and estimated the range of values of parameters $\beta_i$, $i=1,2$,
 such that the attractor $A_0$ or $A_1$ has a locally riddled basin, or becomes
a chaotic saddle.
Then by varying the parameters $\beta_i$, $i=1,2$, in an open region in the $\beta_1\beta_2$-plane, we have shown the occurrence of riddled basin (in the global sense) and hysteretic blowout bifurcation.
To prove the riddling basin, we have semi-conjugated the system to a random walk model and provided a complex fractal boundary between
the initial conditions leading to each of the two attractors. This boundary separates the basin of attraction and causes to predict, from a
given initial condition, what trajectory in phase-space the system will follow.
Moreover, it was shown that, by varying the parameters in an open region, a new chaotic attractor appears when $\beta_1=\beta_2$.
Also, the basin of this new attractor intermingled with the basins of both attractors $A_0$ and $A_1$.
Numerical simulations were presented graphically
to confirm the validity of our results.




\end{document}